\theoremstyle{plain}
\newtheorem{theorem}{Theorem}[section]
\newtheorem{lemma}[theorem]{Lemma}
\newtheorem{cor}[theorem]{Corollary}
\newtheorem{proposition}[theorem]{Proposition}
\theoremstyle{definition}
\newtheorem{remark}[theorem]{Remark}
\numberwithin{equation}{section}
\renewcommand{\div}{\operatorname{div}}
\newcommand{\eps}{\varepsilon}
\newcommand{\vol}{\operatorname{Vol}}
\renewcommand{\div}{\operatorname{div}}
\begin{document}

\title[Proof of the Riemannian Penrose Inequality with Charge for Multiple Black Holes] {Proof of the Riemannian Penrose Inequality with Charge for Multiple Black Holes}

\author[Khuri]{Marcus Khuri}
\address{Department of Mathematics\\
Stony Brook University\\
Stony Brook, NY 11794, USA}
\email{khuri@math.sunysb.edu}

\author[Weinstein]{Gilbert Weinstein}
\address{Physics Department and Department of Computer Science and Mathematics\\
Ariel University of Samaria\\
Ariel, 40700, Israel}
\email{gilbert.weinstein@gmail.com}

\author[Yamada]{Sumio Yamada}
\address{Department of Mathematics\\
Gakushuin University\\
Tokyo, 171-8588, Japan}
\email{yamada@math.gakushuin.ac.jp}

\thanks{M. Khuri acknowledges the support of
NSF Grants DMS-1007156 and DMS-1308753. S. Yamada acknowledges the support of
JSPS Grants 23654061 and 24340009.}

\begin{abstract}
We present a proof of the Riemannian Penrose inequality with charge in the context of asymptotically flat initial data sets for the Einstein-Maxwell equations, having possibly multiple black holes with no charged matter outside the horizon, and satisfying the relevant dominant energy condition. The proof is based on
a generalization of Hubert Bray's conformal flow of metrics adapted to this setting.
\end{abstract}
\maketitle

\section{Introduction}
\label{sec1}

In a seminal paper \cite{Penrose} (see also \cite{Penrose1}), in which he proposed the celebrated cosmic censorhip conjecture, R. Penrose also proposed a related inequality, now referred to as the Penrose Inequality. The inequality
is derived from cosmic censorship via a heuristic argument relying on Hawking's area
theorem \cite{HawkingEllis}. Consider an asymptotically flat Cauchy surface in a spacetime satisfying
the dominant energy condition, having ADM mass $m$, and containing an event horizon of area $A=4\pi \rho^2$, which undergoes gravitational collapse and settles to a Kerr-Newman solution. Since the ADM mass $m_\infty$ of the final state is no greater
than $m$, the area radius $\rho_\infty$ is no less than $\rho$, and the final state must satisfy
$m_\infty\geq \frac12 \rho_\infty$ in order to avoid naked singularities, it must have been the case that $m\geq \frac12 \rho$ also at the
beginning of the evolution. A counterexample to the Penrose inequality would therefore suggest data which leads
under the Einstein evolution to naked singularities, and a proof of the Penrose inequality may be viewed as evidence in
support of cosmic censorship.

The event horizon is indiscernible in the original slice without knowing the full evolution, however one may, without
disturbing this inequality, replace the event horizon by the outermost minimal area enclosure of the apparent horizon (the boundary of the region admitting
trapped surfaces). The inequality further simplifies in the time-symmetric case, in which the outermost minimal area enclosure of the apparent horizon coincides with the outermost minimal surface, and the
dominant energy condition reduces simply to nonnegative scalar curvature. This leads to the Riemannian version of the
inequality: the ADM mass $m$ and the area radius $r$ of the outermost minimal surface in an asymptotically flat
3-manifold of nonnegative scalar curvature, satisfy
\begin{equation} \label{penrose-inequality}
  m\geq \frac{\rho}{2}
\end{equation}
with equality if and only if the
manifold is isometric to the canonical slice of the Schwarzschild spacetime. Note that this characterizes the canonical slice of Schwarzschild as the unique minimizer of
$m$ among all such 3-manifolds admitting an outermost horizon of area $A=4\pi \rho^2$.

This inequality was first proved in the special case where the horizon is connected by
Huisken and Ilmanen \cite{HuiskenIlmanen} using the inverse mean curvature flow, an approach proposed by
Jang and Wald \cite{JangWald}, following Geroch \cite{Geroch} who had shown that the Hawking mass is
nondecreasing under the flow. The inequality was proven in full generality by Bray \cite{Bray} using a conformal flow of the initial Riemannian metric, and the positive mass theorem \cite{SchoenYau}, \cite{Witten}.

We now turn to the charged case which is somewhat more subtle. It is natural to conjecture as above that
the Reissner-Nordstr\"{o}m spacetime, the charged analog of Schwarzschild, is the unique minimizer of $m$, given $\rho$ and $q$. Since Reissner-Nordstr\"{o}m
satisfies $m=\frac12(\rho+q^2/\rho)$ where $q$ is the total charge, one is thus lead to conjecture that in any asymptotically flat data
satisfying an appropriate energy condition it holds
\begin{equation} \label{charged-penrose-inequality}
  m \geq \frac12 \left( \rho + \frac{q^2}{\rho} \right),
\end{equation}
with equality if and only if the initial data is the canonical slice of Reissner-Nordstr\"{o}m. This follows from \cite{HuiskenIlmanen}, and is based on Jang \cite{Jang}, but only for a connected horizon, since the proof relies on inverse mean curvature flow. In
fact \eqref{charged-penrose-inequality} can fail if the horizon is not connected, and a counterexample based on the Majumdar-Papapetrou
spacetime with two black holes was constructed in \cite{WeinsteinYamada}.
This counterexample nonetheless does not suggest a counterexample to cosmic censorship. This is because the
right-hand side of \eqref{charged-penrose-inequality} is not monotone increasing in $\rho$. Indeed, already Jang observed
that \eqref{charged-penrose-inequality} is equivalent to two inequalities:
\begin{equation} \label{upper-lower-bound}
  m - \sqrt{m^2 - q^2} \leq \rho \leq m + \sqrt{m^2-q^2}.
\end{equation}
Cosmic censorship suggests that the upper bound always holds, while the counterexample in \cite{WeinsteinYamada} violates the lower bound. It turns out, however, that the lower bound also
holds, and furthermore is motivated by cosmic censorship in the case of a single black hole, or more generally when $\rho\geq|q|$ (see \cite{DainKhuriWeinsteinYamada}).

In this paper, we prove the upper bound in \eqref{upper-lower-bound} for multiple black holes. By the positive mass
theorem with charge, $m\geq
|q|$ with equality if and only if the data is Majumbdar-Papapetrou \cite{GibbonsHawkingHorowitzPerry}; see \cite{ChruscielReallTod}, \cite{KhuriWeinstein} for the rigidity result. Hence if $\rho \leq |q|$, the
upper
bound in \eqref{upper-lower-bound} follows immediately
\begin{equation}
  \rho\leq |q| \leq m \leq m + \sqrt{m^2-q^2}.
\end{equation}
It thus only remains to prove the upper bound under the additional hypothesis $|q| < \rho$. Under this hypothesis, it
is the lower bound that follows immediately
\begin{equation}
  m \leq |q| + \sqrt{m^2-q^2} < \rho + \sqrt{m^2-q^2}.
\end{equation}
In fact the condition $|q| \leq \rho$ is always valid for a single horizon, in light of its stability \cite{Gibbons}, \cite{KhuriWeinsteinYamada}, however for multiple horizons this
inequality is indeed a nontrivial restriction. In view of all the above, the upper
bound in \eqref{upper-lower-bound} is equivalent to \eqref{charged-penrose-inequality} under the additional hypothesis
$|q|\leq \rho$. The proof of this latter statement will be based on a generalization of Bray's conformal flow. It should also be noted that the right-hand side of \eqref{charged-penrose-inequality} is nondecreasing as
a function of $\rho$ (with fixed $q$), precisely when $|q|\leq \rho$. Thus, \eqref{charged-penrose-inequality} with the auxiliary area-charge inequality may also be derived using the heuristic Penrose argument.

\begin{figure} \label{graph}
\includegraphics{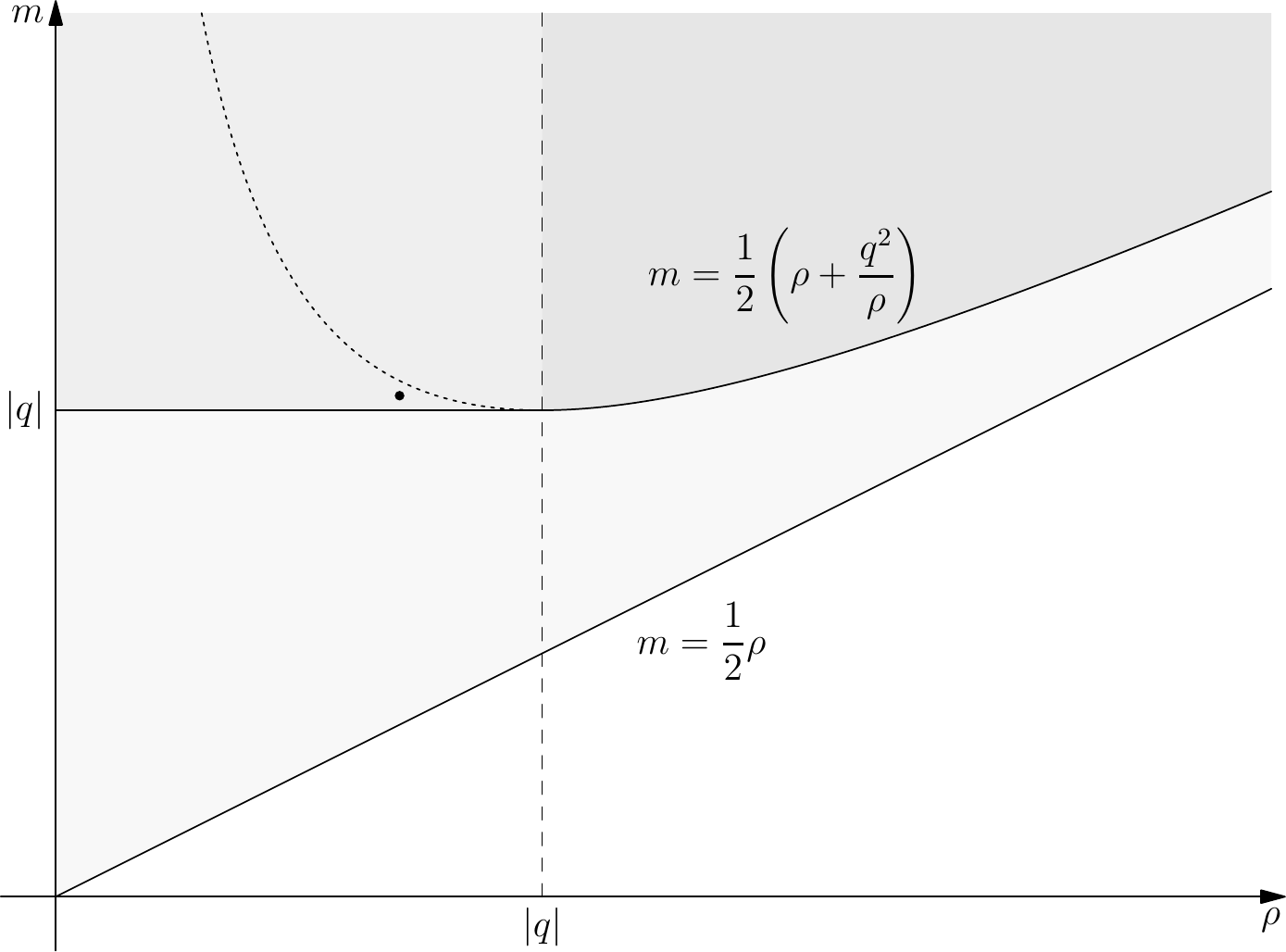}
 \caption{Graphical representation of geometric inequalities}
\end{figure}

The inequalities discussed in the previous paragraphs are most easily visualized in Figure \ref{graph}.
The white area is the positive mass theorem $m\geq0$.
The light shaded area is the Penrose inequality, and the other two darker shaded areas are the charged Penrose inequality.
The inequality represented by the region to the left of the dashed vertical
line $\rho=|q|$, and above the solid horizontal line $m=|q|$, follows from the charged positive mass theorem. Moreover, the dotted curve in this region is the lower bound
in \eqref{upper-lower-bound}, or a continuation of the equality curve from \eqref{charged-penrose-inequality}; the black dot represents the counterexample in \cite{WeinsteinYamada}.
This paper deals with the proof of the inequality represented by the darkest shaded
region, to the right of $\rho=|q|$ and above $m=\frac{1}{2}(\rho+q^2/\rho)$, which is the upper bound in \eqref{upper-lower-bound} with $|q|\leq\rho$, or \eqref{charged-penrose-inequality} with $|q|\leq\rho$. Lastly,
it should be noted that every configuration with one black hole component lies to the right of the vertical dashed line $\rho=|q|$, or equivalently, every configuration to the left of the vertical line has multiple black hole components.

We end the introduction with a few definitions and the statement of our main theorem and its corollaries. An initial
data set $(M,g,E,B)$ consists of a $3$-manifold $M$, a Riemannian metric $g$, and vector fields $E$ and $B$. It will be assumed that the data satisfy the Maxwell constraints with no charges outside the horizon $\div_g E=\div_g B=0$, and that the charged dominant energy condition
\begin{equation} \label{dec}
  16\pi\mu_{EM}=R_{g}- 2(|E|_{g}^2+|B|_{g}^2)\geq 0
\end{equation}
is valid, where $R_{g}$ is the scalar curvature of $g$ and $\mu_{EM}$ is the energy density of the matter fields after contributions from the electromagnetic field have been removed. It should be noted that typically the charged dominant energy condition is given by the slightly stronger statement $\mu_{EM}\geq|J_{EM}|_{g}$, where
$4\pi J_{EM}=E\times B$ is minus one half the momentum density of the electromagnetic field. It turns out, however, that for the results of the current paper the hypothesis \eqref{dec} is sufficient. Moreover in the case of equality for \eqref{charged-penrose-inequality}, it will be shown that $E$ and $B$ are linearly dependent so that $J_{EM}=0$. Typically when Penrose-type inequalities are saturated, the vanishing of the momentum density arises at least in part due to the stronger version of the charged dominant energy condition. Nevertheless, the same result holds here under the weaker form of the energy condition \eqref{dec}.

We assume further that
the data is strongly asymptotically flat, meaning that there is a compact set $K$ such that $M\setminus K$ is the finite union of disjoint ends, and in the coordinates given on each end the fields decay according to
\begin{equation} \label{saf}
  g_{ij}=\delta_{ij} +O_{2}(|x|^{-1}), \quad E_{i} = O_{1}(|x|^{-2}), \quad B_{i} = O_{1}(|x|^{-2}).
\end{equation}
and $R_g$ is integrable. This guarantees that the ADM mass and the total electric and magnetic
charges
\begin{gather}
  \label{adm}
  m = \frac1{16\pi} \int_{S_\infty} (g_{ij,j}-g_{jj,i}) \nu^i\, dA, \\
  \label{charges}
  q_e = \frac1{4\pi} \int_{S_\infty} E_i \nu^i\, dA, \quad
  q_b = \frac1{4\pi} \int_{S_\infty} B_i \nu^i\, dA,
\end{gather}
are well defined, with squared total charge $q^2=q_{e}^{2}+q_{b}^{2}$. Here $\nu$ is the outer unit normal, and the limit is taken in a designated end.
Without loss of generality, we assume that the magnetic charge $q_b=0$, and so from now on $q=q_e$. This can
always be achieved by a fixed rotation in $(E,B)$ space. Conformally compactifying all but the designated end, we
can now restrict our attention to surfaces which bound compact regions, and define $S_2$ to enclose $S_1$ to mean
$S_1=\partial K_1$, $S_2=\partial K_2$ and $K_1\subset K_2$. An outermost horizon is a compact minimal surface not enclosed in any other compact minimal surface. The following results were first discussed in the announcement \cite{KhuriWeinsteinYamada1}.

\begin{theorem} \label{main:theorem}
Let $(M,g,E,B)$ be a strongly asymptotically flat initial data set with outermost minimal surface boundary of area $A=4\pi \rho^2$, satisfying the charged dominant energy condition and the Maxwell constraints without charged matter. If $|q|\leq \rho$, then \eqref{charged-penrose-inequality} holds with equality if and only if the data set arises as the canonical slice of the Reissner-Nordstr\"{o}m spacetime.
\end{theorem}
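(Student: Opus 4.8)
The plan is to adapt Hubert Bray's conformal flow of metrics to the Einstein--Maxwell setting, coupling the flow to the electromagnetic field so that along the flow the charged dominant energy condition \eqref{dec} is propagated, the total charge and the horizon area are conserved, and the ADM mass is monotone non-increasing, with the flow limiting onto a spatial \RN{} slice of horizon area $A$ and charge $q$. As preliminary reductions, by the discussion preceding the theorem it suffices to prove \eqref{charged-penrose-inequality}, and one may assume the strict inequality $|q|<\rho$ (the borderline $|q|=\rho$ being contained in the elementary bound $\rho\le|q|\le m$ of the introduction, together with the rigidity of the charged positive mass theorem). Standard density arguments --- conformally deforming $g$ near infinity to make it harmonically flat there, and regularizing the matter fields --- then reduce us to smooth data with the required decay, changing $m,\rho,q$ by arbitrarily little; and after conformally compactifying the non-designated ends I treat $M$ as a manifold with outermost minimal boundary $\Sigma_0=\partial M$, $|\Sigma_0|_g=A$.

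Set $g_0=g$, $E_0=E$, $B_0=B$. For $t\ge 0$ I define $g_t=u_t^4 g_0$ together with the rescaled fields $E_t=u_t^{-6}E_0$, $B_t=u_t^{-6}B_0$; the exponent $-6$ is chosen so that the dual two-forms $\star_{g_t}E_t$ and $\star_{g_t}B_t$, hence the divergence-free conditions and the charges in \eqref{charges}, are independent of $t$. The conformal factor is produced, following Bray's iterative construction, by the flow $\partial_t u_t=v_t$, $u_0\equiv 1$, where for each $t$ the function $v_t$ solves the \emph{linear} elliptic boundary value problem
\begin{equation*}
 8\,\Delta_{g_0}v_t=\bigl(R_{g_0}+6\,(|E_0|_{g_0}^2+|B_0|_{g_0}^2)\,u_t^{-4}\bigr)\,v_t\quad\text{on }M_t,\qquad v_t|_{\Sigma_t}=0,\qquad v_t\to -e^{-t}\ \text{at infinity},
\end{equation*}
with $M_t$ the exterior of an evolving surface $\Sigma_t\supseteq\Sigma_0$ defined, as in Bray's scheme, as the outermost minimal area enclosure of $\Sigma_0$ in $(M,g_t)$, which is shown to be increasing in $t$. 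Equivalently, $u_t$ solves on $M_t$ the semilinear Lichnerowicz-type equation $R_{g_0}u_t-8\Delta_{g_0}u_t-2(|E_0|_{g_0}^2+|B_0|_{g_0}^2)u_t^{-3}=16\pi\mu_{EM}$, the right-hand side being constant in $t$. Since the potential in the equation for $v_t$ is nonnegative by \eqref{dec}, the maximum principle gives $v_t<0$ on $M_t$, so $u_t$ is positive and strictly decreasing in $t$; the delicate matter, handled by Bray's iteration and elliptic estimates, is the existence, regularity and monotonicity of $(u_t,\Sigma_t)$ and the consistency of the boundary conditions across the jumps of $\Sigma_t$.

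Next I would verify the four structural properties. First, the total charge $q$ is conserved, immediate from the scaling of $E_t,B_t$. Second, $|\Sigma_t|_{g_t}$ is constant, equal to $A$: because $v_t=0$ on $\Sigma_t$, the variation of $\int_{\Sigma_t}u_t^4\,dA_{g_0}$ reduces to the normal motion of $\Sigma_t$, and that term vanishes since $\Sigma_t$ is minimal in $g_t$, i.e.\ $H_{g_0}+4u_t^{-1}\partial_\nu u_t=0$, exactly as in Bray (with continuity through jumps handled as there). Third, the charged dominant energy condition is propagated: by the conformal transformation law the quantity $F_t:=R_{g_t}-2(|E_t|_{g_t}^2+|B_t|_{g_t}^2)$ satisfies $F_t\,u_t^5=R_{g_0}u_t-8\Delta_{g_0}u_t-2(|E_0|_{g_0}^2+|B_0|_{g_0}^2)u_t^{-3}$, and differentiating in $t$ and substituting the equation for $v_t$ yields $\partial_t(F_t u_t^5)=0$; hence $F_t\,u_t^5\equiv F_0=16\pi\mu_{EM}\ge 0$, so $F_t\ge 0$ and $\mu_{EM,t}=u_t^{-5}\mu_{EM}$. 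Fourth, the ADM mass $m(t)$ is monotone non-increasing: as in Bray's proof, reflecting $(M_t,g_t,E_t,B_t)$ across the minimal surface $\Sigma_t$ produces an asymptotically flat, nonnegative-scalar-curvature manifold with an additional end and total charge $q$, to which the positive mass theorem with charge applies; combined with a Green's-function/flux identity for $v_t$ this gives $m'(t)\le 0$. It is here, together with the convergence step below, that the hypothesis $|q|\le\rho$ enters essentially --- it is precisely the regime in which the target (outer-horizon) \RN{} metric with parameters $(\rho,q)$ exists and the flow remains regular, and it is where the multiple-black-hole case genuinely differs from the single-horizon case treated by inverse mean curvature flow.

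Finally I would show that after the normalization $\widehat{g}_t=e^{2t}g_t$ the exterior region converges, as $t\to\infty$, to a spatial \RN{} slice outside its outer horizon, of area $A=4\pi\rho^2$ and charge $q$; for this metric $m(\infty)=\tfrac12(\rho+q^2/\rho)$, the mass formula for the outer horizon (legitimate since $\rho\ge|q|$). Combined with $m=m(0)\ge m(\infty)$ this proves \eqref{charged-penrose-inequality}. In the equality case $m(t)$ is constant, so every inequality in the monotonicity step is an equality and the rigidity of the positive mass theorem with charge forces $g_t$, for every $t$ --- hence $g_0$ --- to be the exterior of the outer horizon of a \RN{} slice; then $\mu_{EM}\equiv 0$ by the third property, and the purely electric nature of the \RN{} field (up to the rotation normalizing $q_b=0$) makes $E_0$ and $B_0$ linearly dependent, whence $E_0\times B_0=0$ and $J_{EM}=0$. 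I expect the main obstacle to be the interplay in the middle two paragraphs: arranging the electromagnetic coupling so that the charged dominant energy condition, the horizon area, and the charge are all preserved while Bray's mass monotonicity survives, and controlling the flow --- the regularity, jumping, and monotonicity of the possibly disconnected, non-smooth surfaces $\Sigma_t$, together with the multi-end positive mass theorem with charge --- in the multiple-black-hole configurations.
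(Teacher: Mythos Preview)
Your overall architecture --- a charged conformal flow preserving area, charge, and the charged dominant energy condition, with mass nonincreasing --- matches the paper's. But two of your four ``structural properties'' hide the real work, and your endgame is different from the paper's.

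\medskip
\textbf{Mass monotonicity.} Your one-line doubling argument is where the charged case genuinely diverges from Bray's, and as written it is a gap. In Bray's uncharged flow the reflected manifold, conformally changed by $(1\pm v_t)/2$, has zero scalar curvature; that is what makes the ordinary positive mass theorem apply. In the charged flow, the same conformal factors do \emph{not} produce nonnegative scalar curvature, and invoking ``the positive mass theorem with charge'' on the double does not help: there is no Maxwell field on the doubled manifold making the scalar curvature deficit look like $2(|E|^2+|B|^2)$. What the paper actually does is a Masood-ul-Alam--type doubling with conformal factors
\[
w_t^{\pm}=\tfrac12\sqrt{(1\pm \bar v_t)^2-\phi_t^2},
\]
where $\bar v_t$ approximates $v_t$ and $\phi_t$ is a \emph{pseudo}-electromagnetic potential solving an auxiliary quasilinear Dirichlet problem on $M_t$ (Section~\ref{sec6}); a direct computation (Lemma~\ref{scalar}) then shows $R_{g_t^\pm}\ge 0$, and one applies the ordinary positive mass theorem with corners. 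Constructing $\phi_t$ and controlling it near the horizon occupies a large part of the paper; your sketch gives no hint of this mechanism.

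\medskip
\textbf{Endgame and the role of $|q|\le\rho$.} You propose to let $t\to\infty$ and prove convergence to the \RN\ slice. The paper explicitly does \emph{not} do this (and leaves convergence as an open question). Instead, the area/charge hypothesis $\rho>|q|$ is used in Section~\ref{sec4} to prove \emph{exhaustion}: the flowing horizons $\partial M_t$ eventually enclose every compact set and hence become connected at some finite time $\bar t$. At that point the paper stops the flow and applies the already-known single-horizon charged Penrose inequality (via inverse mean curvature flow and the charged Hawking mass) to $(M_{\bar t},g_{\bar t},E_{\bar t},B_{\bar t})$. This is exactly where $|q|\le\rho$ enters --- it is both necessary and (in strict form) sufficient for exhaustion --- not in the monotonicity step as you suggest. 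The rigidity argument likewise reduces to the single-component case (via a topological argument on the doubled manifold) and then quotes the known single-horizon rigidity; it does not go through rigidity of the charged positive mass theorem on the double.
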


\begin{cor} \label{main:corollary}
Let $(M,g,E,B)$ be a strongly asymptotically flat initial data set with outermost minimal surface boundary of area $A=4\pi \rho^2$, satisfying the charged dominant energy condition and the Maxwell constraints without charged matter. Then the upper bound
in~\eqref{upper-lower-bound} holds with equality if and only if the data set arises as the canonical slice of the Reissner-Nordstr\"{o}m spacetime.
\end{cor}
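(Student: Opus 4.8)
The plan is to obtain Corollary~\ref{main:corollary} as a formal consequence of Theorem~\ref{main:theorem}, the charged positive mass theorem, and an elementary manipulation of quadratics, dividing into the two regimes $\rho<|q|$ and $|q|\le\rho$. Throughout, recall that the charged positive mass theorem yields $m\ge|q|$, so $\sqrt{m^2-q^2}$ is a well-defined nonnegative number and $m\pm\sqrt{m^2-q^2}$ are precisely the roots of the upward-opening quadratic $P(X):=X^2-2mX+q^2$; note also that $\rho>0$ since the outermost minimal surface has positive area.

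In the regime $\rho<|q|$ one has $\rho<|q|\le m\le m+\sqrt{m^2-q^2}$, and in fact the last quantity strictly exceeds $\rho$, so the upper bound in~\eqref{upper-lower-bound} holds with strict inequality and no data set in this range can saturate it. In the regime $|q|\le\rho$, Theorem~\ref{main:theorem} applies and gives $m\ge\frac12(\rho+q^2/\rho)$; multiplying through by $2\rho>0$, this is equivalent to $P(\rho)\le 0$, which in turn---since $P$ opens upward with roots $m\pm\sqrt{m^2-q^2}$---is equivalent to
\begin{equation*}
  m-\sqrt{m^2-q^2}\le\rho\le m+\sqrt{m^2-q^2}.
\end{equation*}
In particular the upper bound holds in this regime as well, which completes the proof of the inequality in both cases.

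For the rigidity statement, suppose $\rho=m+\sqrt{m^2-q^2}$. Then $P(\rho)=0$, hence $m=\frac12(\rho+q^2/\rho)$, i.e.\ equality holds in~\eqref{charged-penrose-inequality}; moreover $\rho\ge m\ge|q|$, so the hypothesis of Theorem~\ref{main:theorem} is satisfied and its rigidity conclusion identifies the data with a canonical slice of Reissner--Nordstr\"om (subextremal when $\rho>|q|$, and extremal with $m=|q|=\rho$ when $\rho=|q|$). Conversely, on a canonical Reissner--Nordstr\"om slice the horizon has area radius $\rho=r_+=m+\sqrt{m^2-q^2}$, so equality holds there. I do not anticipate a substantial obstacle: once Theorem~\ref{main:theorem} is in hand the argument is purely algebraic, the only point requiring care being the borderline value $\rho=|q|$, where one must note that Theorem~\ref{main:theorem} still applies (its hypothesis $|q|\le\rho$ holding with equality) and that its equality case then returns the extremal Reissner--Nordstr\"om slice.
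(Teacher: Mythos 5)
Your proposal is correct and follows essentially the same route as the paper, which establishes the corollary in the introduction by splitting into the regimes $\rho\leq|q|$ (where the upper bound follows strictly from the charged positive mass theorem $m\geq|q|$) and $|q|\leq\rho$ (where the upper bound is algebraically equivalent to \eqref{charged-penrose-inequality} and hence follows from Theorem \ref{main:theorem}, including the rigidity statement). The only cosmetic remark is that your parenthetical about the borderline $\rho=|q|$ is vacuous: as noted in Section \ref{sec7}, saturation of the area/charge inequality for data with a genuine minimal surface boundary forces $m>|q|$, so equality in the upper bound cannot actually occur there---but this does not affect the validity of your argument, which correctly defers to the rigidity clause of Theorem \ref{main:theorem}.
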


\begin{cor}
Assume that the above hypotheses hold. If $q$ and $\rho$ are fixed with $|q|\leq \rho$, then the canonical Reissner-Nordstr\"{o}m slice is the unique minimizer of $m$. Moreover,
if $m$ and $q$ are fixed with $m\geq|q|$, then the canonical Reissner-Nordstr\"{o}m slice is the unique maximizer of $\rho$.
\end{cor}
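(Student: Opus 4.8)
The two claims are formal consequences of Theorem~\ref{main:theorem} and Corollary~\ref{main:corollary}: once it is known that the extremal quantities on the right of \eqref{charged-penrose-inequality} and \eqref{upper-lower-bound} are attained, and attained only, by the canonical Reissner-Nordstr\"om slice carrying the prescribed invariants, both extremal characterizations follow by inspection. The plan is thus to first record the relevant properties of the model, and then to read off each minimization/maximization statement together with its rigidity from the corresponding already-proved result.

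I would begin with the Reissner-Nordstr\"om model. For $|q|<m$, the constant-time slice of Reissner-Nordstr\"om, restricted to the region exterior to the outer horizon $r_{+}=m+\sqrt{m^{2}-q^{2}}$, is a strongly asymptotically flat initial data set solving the sourceless Maxwell constraints and satisfying \eqref{dec} with $\mu_{EM}\equiv 0$; its boundary is an outermost minimal surface of area radius $\rho=r_{+}$, and since $r_{+}r_{-}=q^{2}$ one has the equivalent relation
\[
  m=\tfrac12\!\left(\rho+\tfrac{q^{2}}{\rho}\right).
\]
Consequently, prescribing $(q,\rho)$ with $|q|<\rho$ produces a unique such slice, of mass $\tfrac12(\rho+q^{2}/\rho)$; and prescribing $(m,q)$ with $|q|<m$ produces a unique such slice, of area radius $m+\sqrt{m^{2}-q^{2}}$. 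In either case this slice is an admissible competitor in the corresponding extremal problem.

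For the first assertion, fix $(q,\rho)$ with $|q|\le\rho$ and let $(M,g,E,B)$ be any admissible data set with total charge $q$ and outermost minimal surface of area radius $\rho$. Theorem~\ref{main:theorem} gives $m\ge\tfrac12(\rho+q^{2}/\rho)$, and the right-hand side is precisely the ADM mass of the Reissner-Nordstr\"om slice with these invariants; hence that slice minimizes $m$, and its rigidity clause shows equality holds only for it. For the second assertion, fix $(m,q)$ with $m\ge|q|$ (automatic by the charged positive mass theorem) and let $(M,g,E,B)$ be any admissible data set with ADM mass $m$ and total charge $q$. Corollary~\ref{main:corollary} gives $\rho\le m+\sqrt{m^{2}-q^{2}}$, and the right-hand side is the area radius of the outer horizon of the Reissner-Nordstr\"om slice with mass $m$ and charge $q$; hence that slice maximizes $\rho$, uniquely by the rigidity clause of Corollary~\ref{main:corollary}.

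Since essentially all of the work is already contained in Theorem~\ref{main:theorem} and Corollary~\ref{main:corollary}, the only point demanding attention is the degenerate regime $|q|=\rho$ (equivalently $|q|=m$), where $\sqrt{m^{2}-q^{2}}=0$ and any extremizer is pushed to the extremal case; there the ``canonical Reissner-Nordstr\"om slice'' must be read in the limiting sense, namely as one-center Majumdar-Papapetrou, which is precisely what the rigidity of the charged positive mass theorem delivers when $m=|q|$. For $|q|<\rho$ the two conclusions are merely reformulations of the cited results, so the substantive input — the rigidity in Theorem~\ref{main:theorem} and Corollary~\ref{main:corollary} — is the only thing beyond bookkeeping, and it is available here.
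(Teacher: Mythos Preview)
Your approach is exactly what the paper intends: the corollary is stated without proof, as an immediate reformulation of Theorem~\ref{main:theorem} and Corollary~\ref{main:corollary} together with the fact that the canonical Reissner-Nordstr\"om slice realizes equality. For the open range $|q|<\rho$ (equivalently $|q|<m$) your argument is complete and matches the paper.

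There is, however, a genuine slip in your treatment of the borderline $|q|=\rho$ (equivalently $m=|q|$). You propose to read ``canonical Reissner-Nordstr\"om'' as one-center Majumdar-Papapetrou and invoke the rigidity of the charged positive mass theorem. But Majumdar-Papapetrou data does \emph{not} possess a minimal surface boundary---in isotropic coordinates the extremal conformal factor $(1+m/r)^{2}$ pushes the would-be horizon to the point $r=0$---and so it is not an admissible competitor under the hypotheses of the corollary. The paper makes this point explicitly at the start of Section~\ref{sec7}: when $\rho=|q|$, equality in \eqref{charged-penrose-inequality} would force Majumdar-Papapetrou, which is excluded, so in fact $m>|q|$ strictly for every admissible data set with $\rho=|q|$. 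Thus in the borderline case the infimum of $m$ is not attained by any admissible competitor, and your proposed extremizer is not one. The corollary should therefore be understood (and your proof should be read) on the open range; the boundary case is vacuous or excluded rather than realized by Majumdar-Papapetrou.
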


In the case when charged matter is present, and in particular is not compactly supported, counterexamples exist \cite{KhuriWeinsteinYamada2}.
The full version of the inequality in the non-time-symmetric case remains an open problem. A reduction argument similar to that proposed by Bray and the first author in \cite{BrayKhuri1}, \cite{BrayKhuri2}, has been given in \cite{DisconziKhuri} (see also \cite{Khuri}). However it only applies to the case of a single black hole, as it is based on a coupling of the static Jang equation with inverse mean curvature flow. Coupling the static Jang equation to Bray's conformal flow is possible and also leads to a reduction argument for the Penrose inequality; this was briefly discussed in \cite{BrayKhuri2}. It seems likely then that a coupling to the charged conformal flow presented in this paper, should reduce the general charged Penrose inequality to the time-symmetric case as well. Whether the coupled system admits a solution with the appropriate boundary and asymptotic behavior is then an important open question.

This paper is organized as follows. In the next two sections, a generalized version of Bray's conformal flow will be defined, and its existence will be established. In Section \ref{sec4}, it will be shown that
the flowing outermost minimal surfaces move out into the asymptotic end and eventually exhaust the manifold. In Bray's original flow this exhaustion always occurs, however for the charged conformal flow,
the exhaustion can only happen when $|q|\leq \rho$, and is one of the most interesting and surprising differences between this flow and the original. Section \ref{sec5} is dedicated to monotonicity of the
mass, which follows from a modified doubling argument in analogy to the original flow. In Section \ref{sec6} we solve a quasi-linear elliptic equation, whose solution plays an important part in the
proof of monotonicity, and in Section \ref{sec7} proofs of the main theorem will be given. It should be pointed out that another difference between the strategy here, and that of Bray for the uncharged inequality, is that our proof reduces the case of multiple horizon components to the case of one component, whereas Bray's proof does not rely on knowledge of the single component case.  Lastly,
two appendices are added which include an auxiliary a priori estimate, and the model example for the new flow.

\section{The Charged Conformal Flow}
\label{sec2} 

The goal here is to construct a flow $(M_{t},g_{t},E_{t},B_{t})$ of asymptotically flat initial data for the Einstein-Maxwell equations, starting from the given initial data $(M,g,E,B)$ at $t=0$, and which preserves the boundary area $|\partial M_{t}|_{g_{t}}$, total charge $q_{t}$, Maxwell
constraints, the charged dominant energy condition, and exhibits a nonincreasing ADM mass $m(t)$. Moreover, this flow should reduce to Bray's conformal flow when $|E|_{g}=|B|_{g}=0$, and should proceed by coordinate rescalings in the standard initial data for Reissner-Nordstr\"{o}m. This flow, defined below, will be referred to as the \textit{charged conformal flow}.

Consider the conformal flow of metrics defined by $g_{t}=u_{t}^{4}g$, with $u_{0}\equiv 1$. Given
the metric $g_{t}$, define $\partial M_{t}$ to be the outermost minimal area enclosure of $\partial M$
in $(M,g_{t})$, and denote the region enclosed by $\partial M_{t}$ and spatial infinity by $M_{t}$. It will turn out that $\partial M_{t}$ does not intersect $\partial M$, and hence it is
an outermost minimal surface. Also set $E_{t}^{i}=u_{t}^{-6}E^{i}$ and $B_{t}^{i}=u_{t}^{-6}B^{i}$.
Given $g_{t}$, $E_{t}$, $B_{t}$, and $\partial M_{t}$, define $v_{t}$ to be the unique solution
of the Dirichlet problem
\begin{equation}\label{2.3}
\Delta_{g_{t}}v_{t}-\left(|E_{t}|_{g_{t}}^{2}+|B_{t}|_{g_{t}}^{2}\right)v_{t}=0,\text{ }\text{ }\text{ }\text{ on }\text{ }\text{ }\text{ }M_{t},
\end{equation}
\begin{equation}\label{2.4}
v_{t}=0\text{ }\text{ }\text{ }\text{ on }\text{ }\text{ }\text{ }\partial M_{t},
\text{ }\text{ }\text{ }\text{ }\text{ }\text{ }v_{t}\rightarrow -1 \text{ }\text{ }\text{ }\text{ as }\text{ }\text{ }\text{ }r\rightarrow\infty.
\end{equation}
By expanding the solution in spherical harmonics, it follows that
\begin{equation}
v_{t}= -1+\frac{\gamma_{t}}{r}+O\left(\frac{1}{r^{2}}\right) \text{ }\text{ }\text{ }\text{ as }\text{ }\text{ }\text{ }r\rightarrow\infty,
\end{equation}
for some constant $\gamma_{t}>0$. On $M\setminus M_{t}$ set $v_{t}\equiv 0$. The function $v_{t}$ will act as the logarithmic velocity of the flow $\frac{d}{dt}u_{t}=v_{t}u_{t}$. Thus given $v_{t}$, define $u_{t}=\exp\left(\int_{0}^{t}v_{s}ds\right)$.

The existence and regularity of this flow is similar to that of the original conformal flow, and will be discussed in the next section. Moreover, it is clear that it reduces to Bray's flow when the electromagnetic field vanishes, and indeed is trivial in the Reissner-Nordstr\"{o}m solution as is shown in Appendix B. We now prove that it satisfies the other desired properties.

\begin{theorem}\label{flowproperties}
For all $t\geq 0$ it holds that $q_{t}=q$, $|\partial_{t} M|_{g_{t}}=|\partial M|_{g}$, $\div_{g_{t}}E_{t}=\div_{g_{t}}B_{t}=0$ and
\begin{equation}\label{2.5}
R_{g_{t}}\geq2\left(|E_{t}|_{g_{t}}^{2}+|B_{t}|_{g_{t}}^{2}\right).
\end{equation}
\end{theorem}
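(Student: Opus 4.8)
The plan is to verify the four assertions separately; the first three are essentially forced by the particular powers of $u_{t}$ built into the definition of the flow, while preservation of the charged dominant energy condition \eqref{2.5} is the substantive point. \emph{Maxwell constraints and charge.} In three dimensions the conformal change $g_{t}=u_{t}^{4}g$ rescales the volume element by $\sqrt{\det g_{t}}=u_{t}^{6}\sqrt{\det g}$, so that $\div_{g_{t}}E_{t}=\div_{g_{t}}(u_{t}^{-6}E)=u_{t}^{-6}\div_{g}E=0$ wherever $u_{t}$ is smooth and positive, and likewise $\div_{g_{t}}B_{t}=0$; this is precisely why the exponent $-6$ was chosen. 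For the charge, the integrand defining \eqref{charges} is conformally invariant: with $\nu_{t}=u_{t}^{-2}\nu$ the $g_{t}$-unit normal and $dA_{t}=u_{t}^{4}\,dA$, one has $(E_{t})_{i}\nu_{t}^{i}\,dA_{t}=(u_{t}^{-2}E_{i})(u_{t}^{-2}\nu^{i})(u_{t}^{4}\,dA)=E_{i}\nu^{i}\,dA$, and since $\div_{g_{t}}E_{t}=0$ the flux through a large coordinate sphere does not depend on its radius; hence $q_{t}=q$.

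\emph{The charged dominant energy condition.} I would establish \eqref{2.5} by an ordinary differential equation in $t$. Let $\mathcal{L}_{g}=-8\Delta_{g}+R_{g}$ denote the conformal Laplacian of $g$ and put $W=|E|_{g}^{2}+|B|_{g}^{2}$, so that $W_{t}:=|E_{t}|_{g_{t}}^{2}+|B_{t}|_{g_{t}}^{2}=u_{t}^{-8}W$. The two standard three-dimensional conformal identities are $R_{g_{t}}=u_{t}^{-5}\mathcal{L}_{g}u_{t}$ and the covariance $\mathcal{L}_{g_{t}}\psi=u_{t}^{-5}\mathcal{L}_{g}(u_{t}\psi)$. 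Applying the latter with $\psi=v_{t}$ and substituting \eqref{2.3} in the form $\Delta_{g_{t}}v_{t}=W_{t}v_{t}$ yields, on $M_{t}$,
\[
\mathcal{L}_{g}(u_{t}v_{t})=u_{t}^{5}\bigl(-8W_{t}v_{t}+R_{g_{t}}v_{t}\bigr)=v_{t}\,\mathcal{L}_{g}u_{t}-8u_{t}^{-3}v_{t}W .
\]
Since $\mathcal{L}_{g}$ involves only derivatives in the fixed metric $g$ and $\frac{d}{dt}u_{t}=v_{t}u_{t}$, the left-hand side is $\frac{d}{dt}(\mathcal{L}_{g}u_{t})$. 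Now set $D_{t}:=\mathcal{L}_{g}u_{t}-2u_{t}^{-3}W$; multiplying by $u_{t}^{-5}>0$ shows that $D_{t}\geq0$ is exactly the inequality \eqref{2.5}. Using $\frac{d}{dt}(2u_{t}^{-3}W)=-6u_{t}^{-3}v_{t}W$ one obtains
\[
\frac{d}{dt}D_{t}=\bigl(v_{t}\,\mathcal{L}_{g}u_{t}-8u_{t}^{-3}v_{t}W\bigr)+6u_{t}^{-3}v_{t}W=v_{t}\bigl(\mathcal{L}_{g}u_{t}-2u_{t}^{-3}W\bigr)=v_{t}D_{t}
\]
on $M_{t}$. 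This linear ODE integrates to $D_{t}=D_{0}\exp\bigl(\int_{0}^{t}v_{s}\,ds\bigr)=D_{0}u_{t}$, and $D_{0}=R_{g}-2W\geq0$ by the hypothesis \eqref{dec}; hence $D_{t}\geq0$, that is, \eqref{2.5} holds on $M_{t}$. On $M\setminus M_{t}$ we have $u_{t}\equiv1$, so $(g_{t},E_{t},B_{t})$ coincides with $(g,E,B)$ there and \eqref{2.5} is the hypothesis once more.

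\emph{Area preservation.} The geometric input is that $u_{t}\equiv1$ on $M\setminus M_{t}$ — because $v_{s}\equiv0$ outside $M_{s}$ and $M_{s}\subseteq M_{t}$ for $s\leq t$ — so that $g_{t}=g$ on $\partial M_{t}$ and $\int_{0}^{t_{1}}v_{s}\,ds=0$ on $\partial M_{t_{1}}$. Fix $t_{1}<t_{2}$. Since $\partial M_{t_{1}}$ encloses $\partial M$, the defining minimality of $\partial M_{t_{2}}$ among enclosures in $g_{t_{2}}$, together with $u_{t_{2}}\leq1$ on $\partial M_{t_{1}}$ (as $\int_{0}^{t_{1}}v_{s}\,ds=0$ there and $v_{s}\leq0$), gives
\[
|\partial M_{t_{2}}|_{g_{t_{2}}}\leq|\partial M_{t_{1}}|_{g_{t_{2}}}=\int_{\partial M_{t_{1}}}u_{t_{2}}^{4}\,dA_{g}\leq|\partial M_{t_{1}}|_{g}=|\partial M_{t_{1}}|_{g_{t_{1}}} .
\]
Conversely $\partial M_{t_{2}}$ encloses $\partial M$ and $u_{t_{1}}\equiv1$ on it, so $|\partial M_{t_{1}}|_{g_{t_{1}}}\leq|\partial M_{t_{2}}|_{g_{t_{1}}}=|\partial M_{t_{2}}|_{g}=|\partial M_{t_{2}}|_{g_{t_{2}}}$. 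Thus $t\mapsto|\partial M_{t}|_{g_{t}}$ is constant, and at $t=0$ its value is $|\partial M|_{g}$ since the outermost minimal surface $\partial M$ is its own minimal-area enclosure.

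\emph{Main obstacle.} The genuine difficulties are not in the algebra above but in two analytic points. First, the monotone dependence $M_{s}\subseteq M_{t}$ for $s\leq t$ used throughout: I would obtain it, as in Bray's treatment, from a cut-and-paste comparison of competing enclosures exploiting that $u_{t}$ is nonincreasing in $t$. Second, the low regularity of $u_{t}$ across $\partial M_{t}$, where $u_{t}$ is merely Lipschitz with a jump in its normal derivative; one checks that, because $\partial M_{t}$ is an outer-minimizing $g_{t}$-minimal surface, this jump has the favorable sign, so the distributional scalar curvature $R_{g_{t}}$ picks up a nonnegative singular part along $\partial M_{t}$ and \eqref{2.5} continues to hold weakly. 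Everything else reduces to the bookkeeping of conformal weights and the linear ODE for $D_{t}$.
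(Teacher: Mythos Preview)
Your treatment of the Maxwell constraints, the total charge, and the charged dominant energy condition is correct and follows the same route as the paper: the ODE you derive for $D_{t}=\mathcal{L}_{g}u_{t}-2u_{t}^{-3}W$ is precisely a renormalization of the paper's computation \eqref{4.5}--\eqref{4.6.1}, since $D_{t}=u_{t}^{5}(R_{g_{t}}-2W_{t})$ and your identity $D_{t}=u_{t}D_{0}$ is the paper's $u_{t}^{8}(R_{g_{t}}-2W_{t})=u_{t}^{4}(R_{g}-2W)$.

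The area argument, however, contains a genuine error. You assert $M_{s}\subseteq M_{t}$ for $s\leq t$, but the inclusion runs the other way: since $\partial M_{t}$ encloses $\partial M_{s}$ (the horizons flow outward) and $M_{t}$ denotes the region between $\partial M_{t}$ and infinity, one has $M_{t}\subseteq M_{s}$ for $s\leq t$. Consequently the claims that $u_{t}\equiv1$ on $M\setminus M_{t}$, that $g_{t}=g$ on $\partial M_{t}$, that $u_{t_{1}}\equiv1$ on $\partial M_{t_{2}}$, and that $|\partial M_{t_{2}}|_{g}=|\partial M_{t_{2}}|_{g_{t_{2}}}$ are all false: a point just inside $\partial M_{t}$ typically lay in $M_{s}$ for $s<t$, where $v_{s}<0$, so $u_{t}<1$ there. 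Your reverse inequality $|\partial M_{t_{1}}|_{g_{t_{1}}}\leq|\partial M_{t_{2}}|_{g_{t_{2}}}$ therefore breaks down at the step $|\partial M_{t_{2}}|_{g_{t_{1}}}=|\partial M_{t_{2}}|_{g}$. The paper does not argue this way; following Bray it defers to the discrete construction of Section~\ref{sec3}, where the key point is that $v_{t}$ vanishes on $\partial M_{t}$, so the first-order change in $g_{t}$ along the horizon is zero, while minimality of $\partial M_{t}$ in $g_{t}$ kills the variation coming from the motion of the surface itself. Your monotonicity-in-$t$ of $u_{t}$ gives the easy direction $|\partial M_{t_{2}}|_{g_{t_{2}}}\leq|\partial M_{t_{1}}|_{g_{t_{1}}}$ correctly, but the opposite inequality needs this different mechanism.
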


\begin{proof}
The same arguments used by Bray \cite{Bray} apply to show that the area remains constant throughout
the flow (see Section \ref{sec3} below). In order to show that the charge remains constant, observe that $|E_{t}|_{g_{t}}^{2}=u_{t}^{-8}|E|_{g}^{2}$, and hence
\begin{equation}\label{6.5}
4\pi q_{t}=\int_{S_{\infty}}g_{t}(E_{t},\nu_{t})dA_{g_{t}}
=\int_{S_{\infty}}g(E,\nu)dA_{g}
=4\pi q.
\end{equation}
Furthermore
\begin{equation}\label{2.6}
\div_{g_{t}}E_{t}=\frac{1}{\sqrt{\det g_{t}}}\partial_{i}\left(\sqrt{\det g_{t}}E_{t}^{i}\right)=\frac{u_{t}^{-6}}{\sqrt{\det g}}\partial_{i}\left(\sqrt{\det g}E^{i}\right)=u_{t}^{-6}\div_{g}E=0,
\end{equation}
and similarly for the magnetic field $B$.

It remains to show that the charged dominant energy condition remains preserved throughout the flow.
Let $L_{g}$ denotes the conformal Laplacian, then by a standard formula
\begin{equation}\label{2.9}
u_{t}^{5}R_{g_{t}}=-8L_{g}u_{t}=-8\left(\Delta_{g}u_{t}-\frac{1}{8}R_{g}u_{t}\right),
\end{equation}
so that with help from the conformal covariance of $L_{g}$ it follows that
\begin{align}\label{4.5}
\begin{split}
\frac{d}{dt}(u_{t}^{8}R_{g_{t}})&=\frac{d}{dt}[u_{t}^{3}(u_{t}^{5}R_{g_{t}})]\\
&=3u_{t}^{2}\left(\frac{d}{dt}u_{t}\right)u_{t}^{5}R_{g_{t}}+u_{t}^{3}\frac{d}{dt}(-8L_{g}u_{t})\\
&=3v_{t}u_{t}^{8}R_{g_{t}}-8u_{t}^{3}L_{g}(u_{t}v_{t})\\
&=3v_{t}u_{t}^{8}R_{g_{t}}-8u_{t}^{8}L_{g_{t}}v_{t}\\
&=3v_{t}u_{t}^{8}R_{g_{t}}-8u_{t}^{8}\left(\Delta_{g_{t}}v_{t}-\frac{1}{8}R_{g_{t}}v_{t}\right)\\
&=4v_{t}u_{t}^{8}R_{g_{t}}-8v_{t}u_{t}^{8}(|E_{t}|_{g_{t}}^{2}+|B_{t}|_{g_{t}}^{2}).
\end{split}
\end{align}
Then since $\frac{d}{dt}(u_{t}^{8}|E_{t}|_{g_{t}}^{2})=\frac{d}{dt}(u_{t}^{8}|B_{t}|_{g_{t}}^{2})=0$, we have
\begin{equation}\label{4.6}
\frac{d}{dt}[u_{t}^{8}\left(R_{g_{t}}-2|E_{t}|_{g_{t}}^{2}-2|B_{t}|_{g_{t}}^{2}\right)]
=4v_{t}u_{t}^{8}\left(R_{g_{t}}-2|E_{t}|_{g_{t}}^{2}-2|B_{t}|_{g_{t}}^{2}\right),
\end{equation}
so that
\begin{equation}\label{4.6.1}
u_{t}^{8}\left(R_{g_{t}}-2|E_{t}|_{g_{t}}^{2}-2|B_{t}|_{g_{t}}^{2}\right)
=e^{\int_{0}^{t}4v_{s}ds}\left(R_{g}-2|E|_{g}^{2}-2|B|_{g}^{2}\right)
=u_{t}^{4}\left(R_{g}-2|E|_{g}^{2}-2|B|_{g}^{2}\right)\geq 0.
\end{equation}
\end{proof}

Monotonicity of the mass is of course more difficult and relegated to its own section,
Section \ref{sec5}. Notice also that we do not prove that the flow converges to the
canonical Reissner-Nordstr\"{o}m data, in analogy with the fact that the original conformal
flow converges to the canonical Schwarzschild data. While we strongly believe that this
result holds for the charged conformal flow, it is not needed to prove the main theorem
and is hence left for future investigation.

\section{Existence of the Flow}
\label{sec3}

In this section we prove that the charged conformal flow exists, by employing
the same discretization procedure developed Bray. The presentation will closely follow that in \cite{Bray}.
For each $\epsilon \in (0, \frac12)$ a family
of approximate solutions $u^{\epsilon}_{t} (x)$ will easily be constructed, and the solution
shall arise from the limit
\begin{equation}
u_t(x) = \lim_{\epsilon \rightarrow 0} u^{\epsilon}_t (x).
\end{equation}
Given the metric $g^{\epsilon}_t = (u^{\epsilon}_t)^4 g$ (with $u_0^{\epsilon} \equiv 1$),
define for $t\geq 0$
\begin{equation}
\partial M_{t}^\epsilon  = \left\{ \begin{array}{ll}
         \partial M & \mbox{if $t=0$,} \\
        \mbox{the outermost minimal area enclosure} & \\
         \mbox{of $\partial M_{t-\epsilon}^\epsilon$ in $(M, g_t^\epsilon)$} & \mbox{if
         $t=k\epsilon$ with $k\in\mathbb{Z}_+$,} \\
        \partial M_{\lfloor t \rfloor_\epsilon}^\epsilon  &  \mbox{otherwise,}
        \end{array}\right.
\end{equation}
where
\begin{equation}
{\lfloor t \rfloor_\epsilon} := \epsilon {\left\lfloor \frac{t}{\epsilon} \right\rfloor}.
\end{equation}
Let $M_{t}^{\epsilon}$ denote the region enclosed between $\partial M_{t}^{\epsilon}$ and spatial infinity. Moreover, given $\partial M_{t}^{\epsilon}$ we may define
\begin{equation}\label{defu}
u_t^\epsilon (x) = \exp \Big(\int_0^t v_s^\epsilon (x) \,\, ds \Big),
\end{equation}
where $v_t^\epsilon$ is the solution of the Dirichlet problem
\begin{equation}
\left\{ \begin{array}{rcl}
    \Delta_{g_{\lfloor t \rfloor_\epsilon}^\epsilon} v_t^\epsilon
    - \left(|E_{\lfloor t \rfloor_\epsilon}^\epsilon|_{g_{\lfloor t \rfloor_\epsilon}^\epsilon}^2
    +|B_{\lfloor t \rfloor_\epsilon}^\epsilon|_{g_{\lfloor t \rfloor_\epsilon}^\epsilon}^2\right)  v_t^\epsilon &=& 0 \> \> \> \> \> \> \> \> \> \text{on $M_{t}^\epsilon$,}\\
    v_t^\epsilon &=& 0 \> \> \> \> \> \> \> \> \> \text{on $\partial M_{t}^\epsilon$,}\\
    v_t^\epsilon &\rightarrow&-1 \> \> \> \> \> \>  \text{as $|x|\rightarrow\infty$,}
    \end{array} \right.
\end{equation}
with $v_{t}^\epsilon (x) \equiv 0$ on $M\setminus M_{t}^{\epsilon}$ and $(E_t^\epsilon)^{j}=(u_t^\epsilon)^{-6} E^{j}$, $(B_t^\epsilon)^{j}=(u_t^\epsilon)^{-6} B^{j}$.
Note that \eqref{defu} directly implies that $u^\epsilon_t (x)\rightarrow e^{-t}$ as $|x|\rightarrow\infty$.

Now observe that $\partial M_{t}^\epsilon$ and hence $v_{t}^\epsilon(x)$ are fixed for $t \in [k \epsilon, (k+1)\epsilon)$.  Furthermore, for $t =k \epsilon$ with $k \in\mathbb{Z}_+$, $\partial M_{t}^\epsilon$ does not touch $\partial M_{t-\epsilon}^\epsilon$ because $\partial M_{t-\epsilon}^\epsilon$ has negative mean curvature in $(M, g_t^\epsilon)$. This follows from the fact that $\partial_{\nu}u_{t}^{\epsilon}|_{\partial M_{t-\epsilon}^{\epsilon}}<0$, where $\nu$ is the unit outer normal pointing to spatial infinity. To see that this is in fact the case, first observe that $\partial_{\nu}u_{t-\epsilon}^\epsilon |_{\partial M_{t-\epsilon}^\epsilon}= 0$ since $\partial M_{(k-1)\epsilon}^\epsilon$ is minimal in $(M, g_{(k-1)\epsilon}^\epsilon)$, and $\partial_{\nu}v_{t-\epsilon}^\epsilon |_{\partial M_{t-\epsilon}^\epsilon}<0$ from the Hopf lemma.
Therefore, using $u_{t}^{\epsilon}=u_{t-\epsilon}^\epsilon \exp\Big(\epsilon v_{(k-1)\epsilon}^\epsilon\Big)$ we find that
\begin{equation}
  \partial_{\nu}u_{t}^\epsilon\Big|_{\partial M_{t-\epsilon}^\epsilon}
= u_{t-\epsilon }^\epsilon
\partial_{\nu} \exp \Big( \epsilon v_{(k-1)\epsilon}^\epsilon (x) \Big)\Big|_{\partial M_{t-\epsilon}^\epsilon}
= \epsilon u_{t}^\epsilon \partial_{\nu}v_{t-\epsilon}^\epsilon \Big|_{\partial M_{t-\epsilon}^\epsilon} < 0.
\end{equation}
This inequality says that by pushing the surface $\partial M_{t-\epsilon}^\epsilon$ outwards, the area can be reduced in $(M, g^\epsilon_t)$. Hence, $\partial M_{t-\epsilon}^{\epsilon}$ acts as a barrier in
$(M, g_t^\epsilon)$.  As the outermost condition implies the outer-minimizing condition, $\partial M_{t}^\epsilon$ is actually a strictly outer minimizing horizon of $(M, g_t^\epsilon)$, and is smooth since $g_t^\epsilon$ is smooth outside $\partial M_{t-\epsilon}^\epsilon$.

The same arguments presented in \cite{Bray} yield the following facts. Not only are the surfaces $\partial M_{t}^\epsilon$ smooth, but any limits of these surfaces are smooth.
Furthermore from the definition of $\partial_{t}^\epsilon$, it is apparent that for $\epsilon>0$ the horizon $\partial M_{t_{2}}^\epsilon$ encloses $\partial M_{t_{1}}^\epsilon$ for all $t_2 \geq t_1 \geq 0$. Also, the horizon $\partial M_{t}^\epsilon$ is the outermost minimal area enclosure of $\partial M$ in $(M, g_t^\epsilon)$ when $t = k \epsilon$ with $k \in \mathbb{Z}_+$.

\begin{lemma}
The functions $u_t^\eps (x)$ are positive, bounded, locally Lipschitz functions (in $x$ and $t$) with uniform Lipschitz constants independent of $\epsilon$.
\end{lemma}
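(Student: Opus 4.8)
The plan is to establish the four assertions in turn, following Bray \cite{Bray} and treating the electromagnetic terms as lower order. Positivity of $u_t^\eps(x)=\exp\bigl(\int_0^t v_s^\eps(x)\,ds\bigr)$ is immediate from the formula. For the bounds, apply the weak maximum principle to the equation satisfied by $v_s^\eps$: its zeroth order coefficient $-\bigl(|E_{\lfloor s\rfloor_\eps}^\eps|^2_{g_{\lfloor s\rfloor_\eps}^\eps}+|B_{\lfloor s\rfloor_\eps}^\eps|^2_{g_{\lfloor s\rfloor_\eps}^\eps}\bigr)$ is nonpositive, while $v_s^\eps=0$ on $\partial M_s^\eps$ and $v_s^\eps\to-1$ at infinity; comparison with the constants $0$ (a solution) and $-1$ (a subsolution) then gives $-1\le v_s^\eps\le0$ on $M_s^\eps$, and this holds trivially on $M\setminus M_s^\eps$ where $v_s^\eps\equiv0$. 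Hence $e^{-t}\le u_t^\eps(x)\le1$, which on any compact interval $[0,T]$ provides bounds independent of $\eps$ and $x$. For the $t$-direction, $t\mapsto\int_0^t v_s^\eps(x)\,ds$ is Lipschitz with constant $\le1$ since $|v_s^\eps|\le1$, so $u_t^\eps(x)$ is Lipschitz in $t$ with $|\partial_t u_t^\eps|=|v_t^\eps u_t^\eps|\le1$, again uniformly in $\eps$ and $x$.

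The substantive point is Lipschitz continuity in $x$. For each fixed $\eps$ the function $u^\eps$ is smooth in the interior, by induction on the finitely many steps of the construction (at the $k$-th step the elliptic problem for $v_{k\eps}^\eps$ has the smooth coefficients of $g_{(k-1)\eps}^\eps$); what needs an argument is the $\eps$-independence of the constants. Since $\nabla_x\log u_t^\eps=\int_0^t\nabla_x v_s^\eps\,ds$, it suffices to bound $|\nabla_x v_s^\eps|$ on compact sets uniformly in $s\in[0,T]$ and $\eps$. Written in divergence form, the equation for $v_s^\eps$ has leading coefficients $(u_{\lfloor s\rfloor_\eps}^\eps)^2\sqrt{\det g}\,g^{ij}$ and right-hand side $(u_{\lfloor s\rfloor_\eps}^\eps)^{-2}\sqrt{\det g}\,(|E|_g^2+|B|_g^2)\,v_s^\eps$; by the $C^0$ bounds already obtained it is uniformly elliptic relative to the fixed smooth metric $g$, with bounded measurable coefficients and bounded inhomogeneity. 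De Giorgi--Nash--Moser estimates, interior and up to the smooth boundary $\partial M_s^\eps$, then yield a uniform local $C^\alpha$ bound on $v_s^\eps$, and integrating in $s$ a uniform local $C^\alpha$ bound on $u_s^\eps$. With the leading coefficients now uniformly $C^\alpha$, interior and boundary Schauder-type estimates for divergence-form equations upgrade this to a uniform bound on $|\nabla v_s^\eps|$ on compact subsets of $\overline{M_s^\eps}$ (using the zero Dirichlet data and the uniform regularity of $\partial M_s^\eps$), while $\nabla v_s^\eps=0$ on $M\setminus M_s^\eps$. Thus $|\nabla_x v_s^\eps|$ is bounded on compacts uniformly in $s$ and $\eps$; integrating in $s$ gives the uniform local Lipschitz bound for $u_t^\eps$ in $x$, and together with the $t$-estimate this produces joint local Lipschitz continuity with $\eps$-independent constants.

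The main obstacle is the apparent circularity in the last step: controlling $\nabla v_s^\eps$ requires control of the coefficients of its equation, which depend on $u_s^\eps$ and hence, through $\nabla u_s^\eps=u_s^\eps\int_0^s\nabla v_r^\eps\,dr$, on the very gradients being estimated. This is resolved because the $C^0$ bounds on $u^\eps$ hold unconditionally, and each successive regularity gain for $v_s^\eps$ invokes only the previously established, lower-order bound on $u_s^\eps$. The remaining ingredient is the uniform smoothness of the moving free boundaries $\partial M_s^\eps$ --- part of the compactness facts quoted above from \cite{Bray} --- which is what makes the boundary gradient estimate for $v_s^\eps$ uniform in $s$ and $\eps$.
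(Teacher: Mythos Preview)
Your proof is correct and follows the same skeleton as the paper's: positivity from the exponential formula, the bound $-1\le v_t^\eps\le 0$ from the maximum principle (giving $u_t^\eps\le 1$ and the $t$-Lipschitz estimate), and then a gradient bound on $v_t^\eps$ for the $x$-Lipschitz estimate. The only difference is that for the last step the paper simply invokes Corollary~15 of \cite{Bray}, whereas you sketch the gradient bound directly via a De Giorgi--Nash--Moser\,/\,Schauder bootstrap on the divergence-form equation for $v_s^\eps$; your identification of the needed inputs (the unconditional $C^0$ bounds on $u_t^\eps$ and the uniform regularity of $\partial M_s^\eps$ from \cite{Bray}) is exactly what makes that argument go through.
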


\begin{proof}
Positivity is obvious from the definition of $u_t^\epsilon$.  By the maximum principle, $v_t^\epsilon$ cannot achieve a nonnegative maximum.  This then implies that $u_t^\epsilon (x) \leq 1$.  That $u_t^\epsilon(x)$ is Lipschitz in $t$ follows from its definition and the fact that $-1 < v_t^\eps(x) \leq 0$.  That $u_t^\epsilon (x)$ is Lipschitz in $x$ follows from the fact that $v_t^\epsilon(x)$ is Lipschitz in $x$ (with Lipschitz constant depending on $t$), which follows from Corollary 15 of \cite{Bray}.
\end{proof}

\begin{cor}
There exists a subsequence $\{\epsilon_i\}$ converging to zero such that
\begin{equation}
u_t(x) = \lim_{\epsilon_i \rightarrow 0} u_t^{\epsilon_i} (x)
\end{equation}
exists, is locally Lipschitz in $x$ and $t$, and the convergence is locally uniform.  Hence we may define
\begin{equation}
g_t = \lim_{\epsilon_i \rightarrow 0} g_t^{\epsilon_i} = u_t^4(x) g
\end{equation}
for $t \geq 0$.
\end{cor}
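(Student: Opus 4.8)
The proof is a standard Arzel\`a--Ascoli argument combined with a diagonal subsequence extraction. The plan is as follows. By the preceding Lemma, the family $\{u_t^\epsilon\}_{\epsilon\in(0,1/2)}$, viewed as a family of functions on $M\times[0,\infty)$, is uniformly bounded and equi-Lipschitz in $(x,t)$ with a Lipschitz constant that is independent of $\epsilon$ on each compact subset. Moreover, since $-1<v_s^\epsilon\leq 0$, the defining formula $u_t^\epsilon=\exp\big(\int_0^t v_s^\epsilon\,ds\big)$ yields the uniform two-sided bound $e^{-t}\leq u_t^\epsilon(x)\leq 1$; in particular, on any slab $M\times[0,T]$ these functions are bounded below away from zero, uniformly in $\epsilon$.

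First I would exhaust $M\times[0,\infty)$ by an increasing sequence of compact sets $K_1\subset K_2\subset\cdots$ with $\bigcup_n K_n=M\times[0,\infty)$. On $K_1$, the Arzel\`a--Ascoli theorem produces a sequence $\epsilon^{(1)}_1>\epsilon^{(1)}_2>\cdots\to 0$ along which $u^{\epsilon^{(1)}_j}_t$ converges uniformly on $K_1$. Passing inductively to further subsequences $\{\epsilon^{(n+1)}_j\}\subset\{\epsilon^{(n)}_j\}$ that converge uniformly on $K_{n+1}$, and then taking the diagonal sequence $\epsilon_i:=\epsilon^{(i)}_i$, one obtains a single sequence $\epsilon_i\to 0$ such that $u^{\epsilon_i}_t$ converges uniformly on every $K_n$, hence locally uniformly on $M\times[0,\infty)$, to a limit function $u_t(x)$.

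It then remains to record the properties of the limit. A locally uniform limit of functions sharing a common local Lipschitz constant is again locally Lipschitz in $(x,t)$ with that same constant, so $u_t(x)$ is locally Lipschitz; passing to the limit in $e^{-t}\leq u^{\epsilon_i}_t(x)\leq 1$ shows $u_t(x)$ is positive (indeed $u_t\geq e^{-t}$), so $g_t:=u_t^4 g$ is a genuine Riemannian metric. Finally, since $y\mapsto y^4$ is locally Lipschitz and $u^{\epsilon_i}_t\to u_t$ locally uniformly, we get $g^{\epsilon_i}_t=(u^{\epsilon_i}_t)^4 g\to u_t^4 g=g_t$ locally uniformly, which is the remaining assertion.

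There is no deep obstacle here; the only points requiring care are the bookkeeping in the diagonal extraction, so that one subsequence works simultaneously on all the compact exhausting sets with genuinely locally uniform convergence, and the observation that the positive lower bound $e^{-t}$ is uniform in $\epsilon$ and therefore survives in the limit.
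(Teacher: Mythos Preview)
Your proof is correct and is exactly the argument the paper has in mind: the corollary is stated without proof, as an immediate consequence of the preceding Lemma via Arzel\`a--Ascoli and a diagonal subsequence. Your added observation that $e^{-t}\le u_t^{\epsilon}\le 1$ (hence the limit $u_t$ is strictly positive and $g_t$ is a genuine metric) makes explicit a point the paper leaves implicit.
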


Define $\{\widetilde{\Sigma}_\gamma (t)\}$ to be the collections of limit surfaces of $\partial M_{t}^{\epsilon_i}$ in the limit as $\epsilon_i$ approaches $0$.
As discussed in \cite{Bray}, the limiting surfaces $\{\widetilde{\Sigma}_\gamma (t)\}$ are all smooth.

\begin{proposition}
The surface $\widetilde{\Sigma}_{\gamma_2}(t_2)$ encloses $\widetilde{\Sigma}_{\gamma_1} (t_1)$ for all $t_2>t_1\geq 0$ and for any $\gamma_1$ and $\gamma_2$.
\end{proposition}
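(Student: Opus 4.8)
The plan is to prove the nesting property of the limit surfaces $\{\widetilde{\Sigma}_\gamma(t)\}$ by transferring the corresponding monotonicity already established at the discrete level. Recall from the discussion preceding the statement that for each fixed $\epsilon>0$ the discrete horizons satisfy $\partial M_{t_2}^\epsilon$ encloses $\partial M_{t_1}^\epsilon$ whenever $t_2\geq t_1\geq 0$; this is immediate from the inductive definition, since at each step $\partial M_{k\epsilon}^\epsilon$ is the outermost minimal area enclosure of $\partial M_{(k-1)\epsilon}^\epsilon$, hence encloses it, and between steps the surface is constant. The issue is that nesting is a closed condition under Hausdorff/$C^\infty$ convergence only in the non-strict sense, so passing to the limit directly gives ``$\widetilde\Sigma_{\gamma_2}(t_2)$ encloses $\widetilde\Sigma_{\gamma_1}(t_1)$'' for $t_2\geq t_1$; the content of the proposition is to upgrade this to strict enclosure for $t_2>t_1$, together with the fact that it holds for \emph{all} choices of limit surfaces $\gamma_1,\gamma_2$ at the two times.

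First I would fix $t_1<t_2$ and choose an intermediate time, say $t'=\tfrac{t_1+t_2}{2}$, and pick a dyadic-type time $\tau$ with $t_1<\tau<t_2$ that is an exact multiple of $\epsilon_i$ along the convergent subsequence (or handle the floor $\lfloor \tau\rfloor_{\epsilon_i}$, which differs from $\tau$ by at most $\epsilon_i\to 0$). For such $\tau$, the barrier argument recalled in the excerpt shows $\partial M_\tau^{\epsilon_i}$ strictly does not touch $\partial M_{\tau-\epsilon_i}^{\epsilon_i}$, because $\partial_\nu u_\tau^{\epsilon_i}<0$ on the inner surface gives it strictly negative mean curvature in $(M,g_\tau^{\epsilon_i})$, so the outermost minimal area enclosure must sit strictly outside. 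The key quantitative point is then to show that the ``gap'' between $\widetilde\Sigma_{\gamma_1}(t_1)$ and $\widetilde\Sigma_{\gamma_2}(t_2)$ is bounded below independently of $\epsilon_i$: for this one uses that $u_t(x)$ is strictly decreasing in $t$ at a definite rate — more precisely, $\partial_t u_t = v_t u_t$ with $v_t<0$ bounded away from $0$ on compact sets lying in the interior of $M_t$, so the conformal factor $u_{t_2}^4$ is pointwise strictly smaller than $u_{t_1}^4$ on any fixed compact region. Comparing areas in the two metrics, a surface that is minimal (hence outer-minimizing) for $g_{t_2}$ cannot coincide with one that is outer-minimizing for $g_{t_1}$ unless they are disjoint, since shrinking the metric strictly decreases the area of the outer surface and the enclosure must therefore move strictly outward. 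Carrying this out with uniform constants along the subsequence, and then letting $\epsilon_i\to 0$, yields strict enclosure of the limit surfaces.

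To handle the quantifier over all limit surfaces $\gamma_1,\gamma_2$, I would note that any $\widetilde\Sigma_{\gamma_1}(t_1)$ is, by definition, a $C^\infty$-limit of $\partial M_{t_1}^{\epsilon_i}$ along a further subsequence, and likewise for $\widetilde\Sigma_{\gamma_2}(t_2)$; since for \emph{every} $\epsilon_i$ we have $\partial M_{\tau}^{\epsilon_i}$ enclosing $\partial M_{t_1}^{\epsilon_i}$ and enclosed by $\partial M_{t_2}^{\epsilon_i}$, and $\partial M_\tau^{\epsilon_i}$ is sandwiched by the strict barrier estimate, the same strict separation passes to any pair of limit surfaces, independently of which subsequence was used to extract them. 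The smoothness of all these limit surfaces, recalled from \cite{Bray}, ensures that ``encloses'' is unambiguous and that the Hausdorff limit of the regions $M_t^{\epsilon_i}$ is genuinely the region bounded by the limit surface. I would also invoke, as in Bray, that no limit surface can develop a touching point with $\partial M$ itself, using the strict inequality $\partial_\nu u_t<0$ propagated from the discrete estimate.

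The main obstacle is the uniformity of the gap estimate as $\epsilon_i\to 0$: a priori the strict barrier inequality $\partial_\nu u_\tau^{\epsilon_i}<0$ could degenerate, with the separation between consecutive discrete surfaces shrinking faster than $\epsilon_i$, so that in the limit the surfaces at times $t_1$ and $t_2$ merely touch. Overcoming this requires a lower bound on $|v_t|$ (equivalently on the constant $\gamma_t$ appearing in the expansion of $v_t$ at infinity, or on $-\partial_\nu v_t$ near the horizon) that is uniform in $\epsilon$ and locally uniform in $x$ over the relevant time interval — this is exactly the type of estimate that Bray establishes for the original flow via the Harnack inequality and the uniform geometry of the approximate solutions, and the charged term $-(|E_t|^2+|B_t|^2)v_t$ in \eqref{2.3} only helps (it has the favorable sign since $v_t\leq 0$), so the argument goes through with the bounds from Section~\ref{sec3} and Corollary 15 of \cite{Bray}. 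Once that uniform lower bound on the flow speed is in hand, the strict enclosure of the limit surfaces follows by the area-comparison argument sketched above.
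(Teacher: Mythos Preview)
Your overall strategy — inherit the discrete nesting, then pass to the limit with a uniform separation estimate coming from a maximum-principle bound on the velocity, exactly as in Bray's Theorem~5 — is the right one, and is what the paper does. But there is a genuine gap at the one place where the charged flow differs from Bray's.

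You assert that ``the charged term $-(|E_t|^2+|B_t|^2)v_t$ in \eqref{2.3} only helps (it has the favorable sign since $v_t\leq 0$), so the argument goes through''. This misses the point on two counts. First, equation \eqref{2.3} is posed in the \emph{time-dependent} metric $g_t^\epsilon$, whereas Bray's Theorem~5 argument uses that his velocity is harmonic with respect to the \emph{fixed} background metric $g$; a favorable sign in the $g_t$-equation does not by itself give you the maximum principle you need in $g$. Second, in Bray's flow the velocity is $\frac{d}{dt}u_t^\epsilon=v_t^\epsilon$, while here $v_t^\epsilon$ is the \emph{logarithmic} velocity, $\frac{d}{dt}u_t^\epsilon=v_t^\epsilon u_t^\epsilon$. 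The function that plays Bray's role is therefore $u_t^\epsilon v_t^\epsilon$, and the required technical step is to show that this function satisfies, with respect to the fixed metric $g$, an equation to which the maximum principle applies. Using the conformal covariance of $L_g$ together with \eqref{4.6.1} one computes
\[
\Delta_{g}(u_{t}^{\epsilon}v_{t}^{\epsilon})
=\left[\tfrac{3}{4}(u_{t}^{\epsilon})^{4}\bigl(|E_{t}^{\epsilon}|_{g_{t}^{\epsilon}}^{2}
+|B_{t}^{\epsilon}|_{g_{t}^{\epsilon}}^{2}\bigr)
+\tfrac{1}{4}\bigl(|E|_{g}^{2}+|B|_{g}^{2}\bigr)\right](u_{t}^{\epsilon}v_{t}^{\epsilon}),
\]
and the nonnegativity of the bracket is what makes Bray's argument carry over verbatim. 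Your Harnack/area-comparison sketch does not supply this; without the $g$-equation for $u_t^\epsilon v_t^\epsilon$ you have no uniform control on the velocity in the fixed background, and the ``gap estimate'' you describe remains unproven.
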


\begin{proof}
The same arguments as in the proof of Theorem 5 in \cite{Bray} apply here, except for one technical point that needs to be addressed. Namely, in \cite{Bray}, it is used that $v_{t}^{\epsilon}$ is a harmonic function so that the maximum principle applies. In our setting, this function should be replaced by
$u^\epsilon_t v^\epsilon_t$, since here $v^{\epsilon}_{t}$ represents the logarithmic velocity $\frac{d}{dt}u_t^\epsilon = v_{t}^\epsilon u_{t}^\epsilon$, while in \cite{Bray} $v_{t}^\epsilon$ represents the velocity $\frac{d}{dt}u_t^\epsilon=v_t^\epsilon$. Thus it remains to show that $u^\epsilon_t v^\epsilon_t$ satisfies an equation outside $\partial M^\epsilon_t$, to which the maximum principle applies. To see this, note that \eqref{4.6.1} holds with $\epsilon$, and use a standard property for the conformal Laplacian to obtain
\begin{align}
\begin{split}
\Delta_{g}(u_{t}^{\epsilon}v_{t}^{\epsilon})&=(u_{t}^{\epsilon})^{5}\Delta_{g_{t}^{\epsilon}}v_{t}^{\epsilon}
-\frac{1}{8}R_{g_{t}^{\epsilon}}(u_{t}^{\epsilon})^{5}v_{t}^{\epsilon}
+\frac{1}{8}R_{g}u_{t}^{\epsilon}v_{t}^{\epsilon}\\
&=v_{t}^{\epsilon}(u_{t}^{\epsilon})^{5}\left(|E_{t}^{\epsilon}|_{g_{t}^{\epsilon}}^{2}+
|B_{t}^{\epsilon}|_{g_{t}^{\epsilon}}^{2}\right)-\frac{1}{8}R_{g_{t}^{\epsilon}}(u_{t}^{\epsilon})^{5}v_{t}^{\epsilon}
+\frac{1}{8}R_{g}u_{t}^{\epsilon}v_{t}^{\epsilon}\\
&=\left[\frac{3}{4}(u_{t}^{\epsilon})^{4}\left(|E_{t}^{\epsilon}|_{g_{t}^{\epsilon}}^{2}+
|B_{t}^{\epsilon}|_{g_{t}^{\epsilon}}^{2}\right)
+\frac{1}{4}\left(|E|_{g}^{2}+|B|_{g}^{2}\right)\right](u_{t}^{\epsilon}v_{t}^{\epsilon}).
\end{split}
\end{align}
Since the term in brackets on the right-hand side is nonnegative, it follows that the resulting equation for
$u^\epsilon_t v^\epsilon_t$ admits a maximum principle.
\end{proof}

Define $\partial M_{t}$ to be the outermost minimal area enclosure of the original horizon $\partial M$ in $(M, g_t)$. Apart from the proposition above, the rest of the proof of existence of the flow is identical to the arguments in \cite{Bray}. In particular, we have the following result.

\begin{theorem}
The surface $\partial M_{t_2}$ encloses $\partial M_{t_1}$ for all $t_2 > t_1 \geq 0$, and the areas remain constant $|\partial M_{t}|_{g_{t}} = |\partial M|_{g}$ for all $t \geq 0$.
Furthermore, the set $J$ of $t (\geq 0)$ at which point the surface ``jumps", namely when
\begin{equation}
 \lim_{s \rightarrow t^-} \partial M_{s} \neq
\lim_{s \rightarrow t^+} \partial M_{s},
\end{equation}
is countable, and for $t \notin J$, $\widetilde{\Sigma}_\gamma (t)$ is single valued.
Given the horizon $\partial M_{t}$, $v_t$ may be defined as in Section \ref{sec2}, and
serves as the logarithmic velocity of the flow $\frac{d}{dt}u_{t}=v_{t}u_{t}$.
\end{theorem}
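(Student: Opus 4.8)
The plan is to follow Bray's treatment in \cite{Bray} of the corresponding statement for the uncharged conformal flow; the only place where the electromagnetic field enters nontrivially is the maximum principle underlying the nesting of the flowing surfaces, and that has already been handled in the Proposition above by replacing the harmonic function $v_t^\epsilon$ with $u_t^\epsilon v_t^\epsilon$. First I would record the discrete monotonicity: for fixed $\epsilon$, $\partial M_{t_2}^\epsilon$ encloses $\partial M_{t_1}^\epsilon$ whenever $t_2\geq t_1\geq 0$, which is immediate from the recursive definition of $\partial M_t^\epsilon$ together with the barrier property established earlier, namely that $\partial M_{t-\epsilon}^\epsilon$ has negative mean curvature in $g_t^\epsilon$. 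Passing to the limit along $\epsilon_i\to 0$ and invoking the preceding Proposition yields the nesting of the limit surfaces $\widetilde\Sigma_\gamma(t)$.

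Next I would establish constancy of the area at the discrete level. The key observation is that on the interval $[(k-1)\epsilon,k\epsilon)$ the logarithmic velocity equals $v_{(k-1)\epsilon}^\epsilon$, which vanishes identically on $M\setminus M_{(k-1)\epsilon}^\epsilon$; hence $u_{k\epsilon}^\epsilon$ agrees with $u_{(k-1)\epsilon}^\epsilon$ on that set, in particular along $\partial M_{k\epsilon}^\epsilon$, which encloses $\partial M_{(k-1)\epsilon}^\epsilon$. Therefore $|\partial M_{k\epsilon}^\epsilon|_{g_{k\epsilon}^\epsilon}=|\partial M_{k\epsilon}^\epsilon|_{g_{(k-1)\epsilon}^\epsilon}\geq |\partial M_{(k-1)\epsilon}^\epsilon|_{g_{(k-1)\epsilon}^\epsilon}$, the last step because $\partial M_{(k-1)\epsilon}^\epsilon$ is outer minimizing in $g_{(k-1)\epsilon}^\epsilon$; on the other hand the minimal-area-enclosure property of $\partial M_{k\epsilon}^\epsilon$ gives $|\partial M_{k\epsilon}^\epsilon|_{g_{k\epsilon}^\epsilon}\leq |\partial M_{(k-1)\epsilon}^\epsilon|_{g_{k\epsilon}^\epsilon}=|\partial M_{(k-1)\epsilon}^\epsilon|_{g_{(k-1)\epsilon}^\epsilon}$. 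Induction on $k$ then gives $|\partial M_t^\epsilon|_{g_t^\epsilon}=|\partial M|_g$ for all $t$.

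I would then pass to the limit. Local uniform convergence $u^{\epsilon_i}\to u$ from the Corollary, together with smooth convergence of $\partial M_t^{\epsilon_i}$ to the $\widetilde\Sigma_\gamma(t)$ — which remain in a fixed compact set by the a priori barrier estimates — gives $|\widetilde\Sigma_\gamma(t)|_{g_t}=|\partial M|_g$ and shows that each $\widetilde\Sigma_\gamma(t)$ is a smooth minimal surface in $g_t$; a comparison argument as in \cite{Bray} identifies the outermost minimal area enclosure $\partial M_t$ of $\partial M$ in $g_t$ with the outermost of these limit surfaces, whence $|\partial M_t|_{g_t}=|\partial M|_g$ and, by the first paragraph, $\partial M_{t_2}$ encloses $\partial M_{t_1}$ for $t_2>t_1$. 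Countability of $J$ follows, as in \cite{Bray}, by comparing against the monotone function $t\mapsto\mu(M_t)$ for a fixed finite measure $\mu$ of full support on $M$: a genuine jump of $\partial M_t$ forces a jump of this monotone function, of which there are only countably many, and off $J$ the nested regions $M_{t^-}$ and $M_{t^+}$ coincide, so $\widetilde\Sigma_\gamma(t)$ is single valued and equals $\partial M_t$. Finally, with $\partial M_t$ in hand, $v_t$ is defined as the solution of \eqref{2.3}--\eqref{2.4}; the maximum principle and Hopf lemma give $-1<v_t<0$ and $\gamma_t>0$, and letting $\epsilon_i\to 0$ in $\frac{d}{dt}u_t^{\epsilon_i}=v_{\lfloor t\rfloor_{\epsilon_i}}^{\epsilon_i}u_t^{\epsilon_i}$, using $\lfloor t\rfloor_{\epsilon_i}\to t$, produces $\frac{d}{dt}u_t=v_tu_t$.

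The step demanding the most care is the passage to the limit: one must verify that no area is lost as $\epsilon_i\to 0$ and that the limit surface is genuinely the outermost minimal area enclosure in $g_t$, rather than merely some minimal surface enclosing $\partial M$. This is precisely where Bray's geometric-measure-theoretic compactness and the confining barrier supplied by $\partial M_{t-\epsilon}^\epsilon$ are used; since those ingredients do not see the electromagnetic field, they transfer to the present setting without change, and the remaining bookkeeping — continuity in $t$, the structure of $J$, and the limiting velocity equation — is routine.
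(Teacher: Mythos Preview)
Your overall approach matches the paper's exactly: reduce everything to Bray's arguments, noting that the only place where the electromagnetic field enters---the maximum principle behind nesting---has already been handled by the preceding Proposition via the equation for $u_t^\epsilon v_t^\epsilon$.

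There is, however, a slip in your sketch of the discrete area argument. You write that $v_{(k-1)\epsilon}^\epsilon$ vanishes on $M\setminus M_{(k-1)\epsilon}^\epsilon$ and conclude that $u_{k\epsilon}^\epsilon=u_{(k-1)\epsilon}^\epsilon$ ``in particular along $\partial M_{k\epsilon}^\epsilon$.'' But $\partial M_{k\epsilon}^\epsilon$ encloses $\partial M_{(k-1)\epsilon}^\epsilon$, so it lies in the \emph{exterior} $M_{(k-1)\epsilon}^\epsilon$, where $v_{(k-1)\epsilon}^\epsilon<0$ strictly; thus $|\partial M_{k\epsilon}^\epsilon|_{g_{k\epsilon}^\epsilon}\neq|\partial M_{k\epsilon}^\epsilon|_{g_{(k-1)\epsilon}^\epsilon}$ in general, and your lower bound $|\partial M_{k\epsilon}^\epsilon|_{g_{k\epsilon}^\epsilon}\geq|\partial M_{(k-1)\epsilon}^\epsilon|_{g_{(k-1)\epsilon}^\epsilon}$ does not follow. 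At the discrete level the areas are only \emph{nonincreasing}: your upper bound argument is correct, since $v_{(k-1)\epsilon}^\epsilon=0$ on $\partial M_{(k-1)\epsilon}^\epsilon$ gives $|\partial M_{(k-1)\epsilon}^\epsilon|_{g_{k\epsilon}^\epsilon}=|\partial M_{(k-1)\epsilon}^\epsilon|_{g_{(k-1)\epsilon}^\epsilon}$ and then the minimal-area-enclosure property finishes it. Constancy is recovered only in the limit $\epsilon\to 0$, essentially because the defect over one step is $O(\epsilon^2)$ (the first variation of area at a minimal surface under a conformal change with $v=0$ on the surface vanishes), so the cumulative defect over $O(\epsilon^{-1})$ steps goes to zero---this is exactly Bray's argument. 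Since this is borrowed wholesale from \cite{Bray} and involves nothing charged, the error does not affect the validity of your reduction, but you should correct the sketch or simply defer to \cite{Bray} for the area constancy as the paper does.
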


\section{Exhaustion}
\label{sec4}

The existence of the charged conformal flow, and its properties listed in Sections \ref{sec2} and \ref{sec3},
are independent of the area/charge inequality $|\partial M|_{g}\geq 4\pi q^{2}$, or equivalently $\rho\geq|q|$ as expressed in the introduction. It is then noteworthy and perhaps surprising, that the property of exhaustion, which states that the flowing surfaces $\partial M_{t}$
eventually enclose any bounded set, essentially holds\footnote{It is proven that the strict area/charge inequality is sufficient for exhaustion, and that the nonstrict area/charge inequality is necessary for exhaustion.}  if and only if the area/charge inequality is valid. In fact, this section is the only place in the paper where the area/charge inequality plays a role. As in \cite{Bray}, the proof will follow two basic steps. The first consists of showing that $\partial M_{t}$ cannot, for all $t\geq 0$, be enclosed by any fixed large coordinate sphere in the asymptotic end, and the second entails showing that it is not possible for $\partial M_{t}$ to be only partially contained, for all $t\geq 0$, in a large coordinate sphere. It turns out that the second step may be proved directly from the same arguments in \cite{Bray}, and does not require the area/charge inequality. Thus, we will focus here on the first step in which the area/charge inequality is needed.

Before proceeding, we show that the area/charge inequality is a necessary condition for exhaustion. Note that if exhaustion occurs, then eventually the surfaces $\partial M_{t}$ become connected.

\begin{lemma}\label{necessary}
If for some $t\geq 0$, $\partial M_{t}$ is connected, then $|\partial M|_{g}\geq 4\pi q^{2}$.
\end{lemma}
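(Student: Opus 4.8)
The plan is to play off against one another the three quantities that the flow keeps under control at the time $t$ in question — the boundary area $|\partial M_t|_{g_t}=4\pi\rho^2$, the total charge $q_t=q$, and the charged dominant energy condition \eqref{2.5} — using that the connected surface $\partial M_t$ is a stable minimal surface in $(M_t,g_t)$.

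First I would record the flux identity for the electric field. Since $\div_{g_t}E_t=0$ on $M_t$ and the decay \eqref{saf}, preserved along the flow, forces the flux through large coordinate spheres to converge to $4\pi q$, the divergence theorem applied to the region between $\partial M_t$ and $S_R$ gives
\[
\int_{\partial M_t}g_t(E_t,\nu_t)\,dA_{g_t}=4\pi q,
\]
where $\nu_t$ is the unit normal pointing toward spatial infinity. Cauchy--Schwarz then yields
\[
16\pi^2 q^2\le |\partial M_t|_{g_t}\int_{\partial M_t}g_t(E_t,\nu_t)^2\,dA_{g_t}\le 4\pi\rho^2\int_{\partial M_t}|E_t|_{g_t}^2\,dA_{g_t}.
\]

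Next I would bound $\int_{\partial M_t}|E_t|_{g_t}^2$ from above via stability and Gauss--Bonnet. Since $\partial M_t$ is a smooth, connected, minimal and outer minimizing surface in $(M_t,g_t)$, outward normal variations do not decrease area, so testing the second variation of area with the constant function $\phi\equiv 1$ gives $\int_{\partial M_t}\big(\Ric_{g_t}(\nu_t,\nu_t)+|A|^2\big)\,dA_{g_t}\le 0$. The Gauss equation with $H=0$ rewrites the integrand as $\tfrac12 R_{g_t}-K+\tfrac12|A|^2$, with $K$ the Gauss curvature of $\partial M_t$; discarding $\tfrac12|A|^2\ge 0$ and invoking Gauss--Bonnet together with connectedness,
\[
\tfrac12\int_{\partial M_t}R_{g_t}\,dA_{g_t}\le \int_{\partial M_t}K\,dA_{g_t}=2\pi\chi(\partial M_t)\le 4\pi.
\]
Finally, the charged dominant energy condition \eqref{2.5} gives $R_{g_t}\ge 2|E_t|_{g_t}^2$, hence $\int_{\partial M_t}|E_t|_{g_t}^2\le \tfrac12\int_{\partial M_t}R_{g_t}\le 4\pi$. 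Substituting into the Cauchy--Schwarz estimate gives $16\pi^2 q^2\le 16\pi^2\rho^2$, i.e.\ $|\partial M|_g=4\pi\rho^2\ge 4\pi q^2$, which is the claim. (Without the normalization $q_b=0$ the same computation applied simultaneously to $E_t$ and $B_t$ produces $16\pi^2(q_e^2+q_b^2)$ on the left.)

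The one delicate point is the stability step: one needs that $\partial M_t$, as a limit of the $\partial M_t^\epsilon$, is a smooth hypersurface that is outer minimizing in $(M_t,g_t)$, so that the test function $\phi\equiv 1$ is admissible in the second variation — both facts are inherited from Bray's construction recalled in Section~\ref{sec3}. Everything else is an elementary combination of Gauss--Bonnet, Cauchy--Schwarz, and the conservation laws already established for the flow; in particular the argument uses only the properties of the flow at the single time $t$ at which $\partial M_t$ happens to be connected, and never any convergence statement.
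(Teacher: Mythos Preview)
Your proof is correct and follows essentially the same route as the paper's: stability of the connected minimal surface with the constant test function, the Gauss equation, Gauss--Bonnet, the charged dominant energy condition, and Cauchy--Schwarz (the paper calls it Jensen) on the flux integral combined with the divergence-free condition. The only cosmetic differences are the order of the steps and that the paper keeps both $E_t$ and $B_t$ throughout and invokes the spherical topology of $\partial M_t$ to get $\chi=2$, while you use only $E_t$ (via the normalization $q_b=0$) and the weaker $\chi\le 2$.
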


\begin{proof}
Since the areas and charges are preserved throughout the flow, it suffices to prove the conclusion at time $t$.
Observe that by the second variation of area formula
\begin{equation}\label{7.1}
0\leq\int_{\partial M_{t}}\left[-\psi\Delta_{\partial M_{t}}\psi-(|\operatorname{II}_{t}|^{2}+\operatorname{Ric}_{g_{t}}(\nu,\nu))\psi^{2}+H_{t}^{2}\psi^{2}\right]dA_{t},\text{ }\text{ }\text{ for any }\text{ }\text{ }\psi\in C^{\infty}(\partial M_{t}),
\end{equation}
where $\operatorname{II}_{t}$ is the second fundamental form and $\operatorname{Ric}_{g_{t}}(\nu,\nu)$ is the Ricci curvature in the normal direction. Since $\partial M_{t}$ is a minimal surface, the Gauss equations yield
\begin{equation}\label{7.2}
|\operatorname{II}_{t}|^{2}+\operatorname{Ric}_{g_{t}}(\nu,\nu)=|\operatorname{II}_{t}|^{2}
+\frac{1}{2}R_{g_{t}}-K_{t}+\frac{1}{2}H_{t}^{2}-\frac{1}{2}|\operatorname{II}_{t}|^{2}
=\frac{1}{2}|\operatorname{II}_{t}|^{2}+\frac{1}{2}R_{g_{t}}-K_{t},
\end{equation}
where $K_{t}$ is Gaussian curvature. It follows that
\begin{equation}\label{7.3}
0\leq\int_{\partial M_{t}}\left(|\nabla\psi|^{2}-\frac{1}{2}|\operatorname{II}_{t}|^{2}\psi^{2}
-\frac{1}{2}R_{g_{t}}\psi^{2}+K_{t}\psi^{2}\right)dA_{t}.
\end{equation}
Choose $\psi\equiv 1$, and note that since $\partial M_{t}$ has spherical topology, the Gauss-Bonnet theorem
and \eqref{2.5} imply that
\begin{align}\label{7.4}
\begin{split}
4\pi \geq\int_{\partial M_{t}}\frac{1}{2}\left(|\operatorname{II}_{t}|^{2}+R_{g_{t}}\right)dA_{t}
&\geq\int_{\partial M_{t}}\left(|E_{t}|_{g_{t}}^{2}+|B_{t}|_{g_{t}}^{2}\right)dA_{t}\\
&\geq\int_{\partial M_{t}}\left(|E_{t}\cdot\nu|^{2}+|B_{t}\cdot\nu|^{2}\right)dA_{t}\\
&\geq|\partial M_{t}|_{g_{t}}^{-1}\left[\left(\int_{\partial M_{t}}E_{t}\cdot\nu dA_{t}\right)^{2}
+\left(\int_{\partial M_{t}}B_{t}\cdot\nu dA_{t}\right)^{2}\right]
=\frac{(4\pi)^{2}q_{t}^{2}}{|\partial M_{t}|_{g_{t}}},
\end{split}
\end{align}
where we have used Jensen's inequality and the fact that the Maxwell fields are divergence free (Theorem \ref{flowproperties}).
\end{proof}

We will now show that the strict area/charge inequality is also a sufficient condition for exhaustion. This will
require some preparation. In \cite{Bray}, it was assumed without loss of generality that the initial data
possessed harmonic asymptotics. Similarly, for the results of this section, we may assume that the initial data $(M,g,E,B)$ possess the so called charged harmonic asymptotics, developed by Corvino in \cite{Corvino}. This means that in the asymptotic end $g=U_{0}^{4}\delta$ (where $\delta$ is the Euclidean metric) for some function $U_{0}$ satisfying $R_{g}=-8U_{0}^{-5}\Delta_{\delta}U_{0}=2|E|_{g}^{2}$, with $E^{i}=U_{0}^{-6}E_{\delta}^{i}$ and $E_{\delta}=-q_{e}\nabla r^{-1}$. Observe that the magnetic field is excluded here, since when the asymptotics are imposed $B$ has the same form as $E$ with $q_{e}$ replaced by $q_{b}$. However, as mentioned in the introduction, nothing is lost by assuming $q_{b}=0$ ($q_{e}=q$), so that $B=0$ in the end with such asymptotics. It should also be noted that the asymptotics used here for the electric field differ slightly from those in \cite{Corvino}, where $E=U_{0}^{-6}\nabla\chi$ for some function $\chi=-qr^{-1}+O(r^{-2})$ which is harmonic in the end; the choice of $\chi$ ensures that $E$ is divergence free on $M$. Thus, in our version of the asymptotics, $E$ is no longer divergence free everywhere, a property which is of no use for the results in the current section.

Write $U_{t}=u_{t}U_{0}$ and $V_{t}=v_{t}u_{t}U_{0}$. Then in the asymptotic end
\begin{equation}\label{7.5}
L_{\delta}U_{t}=U_{t}^{5}L_{g_{t}}1=-\frac{1}{8}U_{t}^{5}R_{g_{t}},
\text{ }\text{ }\text{ }\text{ }\text{ }\text{ }
L_{\delta}V_{t}=U_{t}^{5}L_{g_{t}}v_{t}=U_{t}^{5}\left(|E_{t}|_{g_{t}}^{2}v_{t}-\frac{1}{8}R_{g_{t}}v_{t}\right).
\end{equation}
According to the charged harmonic asymptotics and \eqref{4.6.1} we have that $R_{g_{t}}=2|E_{t}|_{g_{t}}^{2}=2U_{t}^{-8}|E_{\delta}|_{\delta}^{2}$, therefore
\begin{equation}\label{7.7}
\Delta_{\delta}U_{t}=-\frac{1}{4}|E_{\delta}|_{\delta}^{2}U_{t}^{-3},
\text{ }\text{ }\text{ }\text{ }\text{ }\text{ }
\Delta_{\delta}V_{t}=\frac{3}{4}U_{t}^{-4}|E_{\delta}|_{\delta}^{2}V_{t}.
\end{equation}
Let $S_{r(t)}$ be a large coordinate sphere in the asymptotic end, and define $\tilde{V}_{t}$ to be the unique solution of the boundary value problem
\begin{equation}\label{7.9}
\Delta_{\delta}\tilde{V}_{t}=\frac{3}{4}\tilde{U}_{t}^{-4}|E_{\delta}|_{\delta}^{2}\tilde{V}_{t},\text{ }\text{ }\text{ }\tilde{V}_{t}=0\text{ }\text{ }\text{ on }\text{ }\text{ }S_{r(t)},\text{ }\text{ }\text{ }
\tilde{V}_{t}\rightarrow -e^{-t}\text{ }\text{ }\text{ as }\text{ }\text{ }|x|\rightarrow \infty,
\end{equation}
where $\tilde{U}_{t}$ is the function $U_{t}$ in the conformal flow of the Reissner-Nordstr\"{o}m initial data (see Appendix B). Note that $\tilde{V}_{t}$ is the velocity function $V_{t}$ in the conformal flow
of the Reissner-Nordstr\"{o}m initial data, and in particular
\begin{equation}\label{7.10}
\tilde{V}_{t}=\frac{-e^{-2t}+e^{2t}\frac{\tilde{m}^{2}-q^{2}}{4|x|^{2}}}{\sqrt{e^{-2t}+\frac{\tilde{m}}{|x|}+e^{2t}\frac{\tilde{m}^{2}-q^{2}}{4|x|^{2}}}}
\end{equation}
for some constant $\tilde{m}$. We choose $\tilde{m}$ so that the boundary condition of \eqref{7.9} is satisfied, namely
\begin{equation}\label{7.11}
\tilde{m}=\sqrt{4e^{-4t}r(t)^{2}+q^{2}}.
\end{equation}
It follows that
\begin{equation}\label{7.12}
\tilde{U}_{t}=\left(e^{-2t}+\frac{\sqrt{4e^{-4t}r(t)^{2}+q^{2}}}{|x|}+e^{-2t}\frac{r(t)^{2}}{|x|^{2}}\right)^{1/2}.
\end{equation}

For reasons that will become clear in the proof of Proposition \ref{step1} below, we would like to compare the solution of the conformal flow $U_{t}$, or more precisely a radial approximation $\hat{U}_{t}$, with the model solution from the Reissner-Nordstr\"{o}m example $\tilde{U}_{t}$. The desired radial approximation is given
as the unique (radial) solution of
\begin{equation}\label{08.1}
\Delta_{\delta}\hat{U}_{t}=-\frac{1}{4}|E_{\delta}|_{\delta}^{2}\hat{U}_{t}^{-3},\text{ }\text{ }\text{ }\text{ }
\hat{U}_{t}=\left(\frac{1}{4\pi r(t)^{2}}\int_{S_{r(t)}}U_{t}^{4}dA_{\delta}\right)^{1/4}\text{ }\text{ }\text{ on }\text{ }\text{ }S_{r(t)},\text{ }\text{ }\text{ }\text{ }
\hat{U}_{t}\rightarrow e^{-t}\text{ }\text{ }\text{ as }\text{ }\text{ }|x|\rightarrow\infty.
\end{equation}
The corresponding radial velocity function $\hat{V}_{t}=\frac{d}{dt}\hat{U}_{t}$ is the unique solution of the boundary value problem
\begin{equation}\label{08.2}
\Delta_{\delta}\hat{V}_{t}=\frac{3}{4}\hat{U}_{t}^{-3}|E_{\delta}|_{\delta}^{2}\hat{V}_{t},\text{ }\text{ }\text{ }\text{ }\hat{V}_{t}=\frac{d}{dt}[\hat{U}_{t}(r(t))]\text{ }\text{ }\text{ on }\text{ }\text{ }S_{r(t)},
\text{ }\text{ }\text{ }\text{ }\hat{V}_{t}\rightarrow-e^{-t}\text{ }\text{ }\text{ as }\text{ }\text{ }|x|\rightarrow\infty.
\end{equation}
It turns out that $\hat{U}_{t}$ has a relatively simple explicit form.

\begin{lemma}\label{explicit}
Let $r(t)=\varepsilon\sqrt{A_{0}}e^{2t}$ with $A_{0}=|\partial M|_{g}$. If $\varepsilon$ is sufficiently small, then there exists a constant $\alpha>-\frac{1}{2}q^{2}$, depending on $U_{t}|_{S_{r(t)}}$, such that
\begin{equation}\label{08.3}
\hat{U}_{t}^{4}(x)=e^{-4t}+\frac{e^{-2t}\sqrt{\frac{8}{3}\left(\alpha+\frac{1}{2}q^{2}\right)}}{|x|}
+\frac{\alpha}{|x|^{2}}+\frac{e^{2t}\sqrt{\frac{8}{3}\left(\alpha+\frac{1}{2}q^{2}\right)}(\alpha-q^{2})}{6|x|^{3}}
+\frac{e^{4t}(\alpha-q^{2})^{2}}{36|x|^{4}}.
\end{equation}
\end{lemma}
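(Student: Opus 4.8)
The plan is to integrate the ODE satisfied by the radial function $\hat U_t$ explicitly. Since $\hat U_t$ is radial and $|E_\delta|_\delta^2 = q^2/|x|^4$, the defining equation \eqref{08.1} reads, with $s = |x|$ and $\hat U = \hat U_t(s)$,
\begin{equation*}
  \frac{1}{s^2}\frac{d}{ds}\Bigl(s^2 \frac{d\hat U}{ds}\Bigr) = -\frac{q^2}{4 s^4}\,\hat U^{-3}.
\end{equation*}
First I would introduce $w = \hat U^4$, so that $\hat U = w^{1/4}$ and the nonlinearity becomes polynomial. A direct computation shows that the equation for $w$ is a second-order ODE whose general solution (subject to the asymptotic normalization $\hat U_t \to e^{-t}$, i.e. $w \to e^{-4t}$) has exactly the five-term Laurent polynomial form asserted in \eqref{08.3}: a constant term $e^{-4t}$, and then terms in $|x|^{-1},\dots,|x|^{-4}$. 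The cleanest route is to guess that $\hat U_t^2$ itself is a quadratic polynomial in $1/|x|$ — this is exactly the structure seen in the Reissner–Nordström model $\tilde U_t$ of \eqref{7.12} — write $\hat U_t^2 = a + b/|x| + c/|x|^2$ with $a = e^{-2t}$ forced by the asymptotics, substitute into the ODE, and verify that the ODE is satisfied precisely when the coefficients obey one algebraic relation among $a,b,c$. Squaring then gives \eqref{08.3} with the displayed coefficients, and matching shows $b = e^{-2t}\sqrt{\tfrac{8}{3}(\alpha+\tfrac12 q^2)}$ and $c = e^{-2t} r(t)^2/|x|^2$-type normalization, i.e. $\alpha$ appears as the coefficient of $|x|^{-2}$ in $w = \hat U_t^4$, which is the stated dependence "on $U_t|_{S_{r(t)}}$."

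Next I would pin down $\alpha$ and the inequality $\alpha > -\tfrac12 q^2$. The free constants in the general solution are fixed by the two conditions in \eqref{08.1}: the asymptotic value $e^{-t}$ (already used) and the Dirichlet datum on $S_{r(t)}$, namely $\hat U_t(r(t))^4 = \frac{1}{4\pi r(t)^2}\int_{S_{r(t)}} U_t^4\, dA_\delta$. Plugging $s = r(t) = \eps\sqrt{A_0}e^{2t}$ into \eqref{08.3} yields one equation determining $\alpha$ in terms of the averaged boundary value; I would check that this equation has a unique root with $\alpha > -\tfrac12 q^2$ when $\eps$ is small. The point is that as $\eps \to 0$ the sphere $S_{r(t)}$ recedes to infinity, so the boundary average $\frac{1}{4\pi r(t)^2}\int_{S_{r(t)}} U_t^4 \to e^{-4t}$ (using $u_t \to e^{-t}$ and $U_0 \to 1$ at infinity); feeding this into the root-finding equation forces $\alpha \to 0$, hence $\alpha > -\tfrac12 q^2$ for $\eps$ small. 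One must be slightly careful that the requirement $\alpha + \tfrac12 q^2 > 0$ is exactly what is needed for the square root $\sqrt{\tfrac83(\alpha + \tfrac12 q^2)}$ (equivalently the coefficient $b$) to be real and positive, which is why it appears in the statement rather than, say, $\alpha > 0$.

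The main obstacle I anticipate is twofold: first, correctly identifying the integrating structure of the ODE for $w = \hat U^4$ — the brute-force approach of substituting a general Laurent series is messy, and the slick observation that $\hat U_t^2$ (not $\hat U_t^4$) is the object that is quadratic in $1/|x|$ is what makes the computation tractable and explains the perfect-square pattern of the last two coefficients in \eqref{08.3}. Second, the quantitative control needed to ensure the boundary condition selects $\alpha > -\tfrac12 q^2$: this requires a genuine (if elementary) estimate on $\int_{S_{r(t)}} U_t^4\, dA_\delta$ as $\eps \to 0$, using the known asymptotic behavior $u_t \to e^{-t}$ locally uniformly together with the decay $U_0 = 1 + O(|x|^{-1})$ from the charged harmonic asymptotics, and uniformity in $t$ on the relevant range. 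Everything else — existence and uniqueness of the radial solutions $\hat U_t$, $\hat V_t$, smoothness, the relation $\hat V_t = \tfrac{d}{dt}\hat U_t$ — is standard ODE theory once the explicit formula \eqref{08.3} is in hand, and in fact the formula makes these transparent.
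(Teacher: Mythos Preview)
Your derivation of the explicit formula is cleaner than the paper's. The paper brute-force expands $\hat U_t^4$ in a Laurent series in $1/|x|$, plugs into the equation $\Delta_\delta \hat U_t^4 = -|E_\delta|_\delta^2 + \tfrac34 \hat U_t^{-4}|\nabla \hat U_t^4|^2$, and solves the resulting recursion for the coefficients $c_1,\dots$ (finding $c_i=0$ for $i\ge5$). Your observation that $\hat U_t^2 = a + b/|x| + c/|x|^2$ with $a=e^{-2t}$ solves the ODE precisely when $b^2 - 4ac = q^2$ is correct and more transparent; it also explains at once the perfect-square pattern in the $|x|^{-3}$ and $|x|^{-4}$ coefficients. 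You should add a one-line uniqueness remark (the radial BVP with asymptotic value $e^{-t}$ and Dirichlet datum on $S_{r(t)}$ has a unique positive solution), so that the ansatz is guaranteed to capture $\hat U_t$, but this is routine.

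Your argument for $\alpha > -\tfrac12 q^2$, however, has a genuine error. You write that ``as $\epsilon\to0$ the sphere $S_{r(t)}$ recedes to infinity, so the boundary average $\tfrac{1}{4\pi r(t)^2}\int_{S_{r(t)}}U_t^4 \to e^{-4t}$.'' But $r(t)=\epsilon\sqrt{A_0}\,e^{2t}\to 0$ as $\epsilon\to0$, so the sphere moves \emph{inward}, not outward, and the average does not tend to $e^{-4t}$; in fact it blows up. The paper's mechanism is entirely different: it invokes the Bray--Iga superharmonic estimate $U_t^4 \ge cA_0/|x|^2$ (valid outside the outermost minimal surface, assumed enclosed by $S_{r(t)}$), which gives $\hat U_t^4(r(t)) \ge c\epsilon^{-2}e^{-4t} > e^{-4t}$ for $\epsilon$ small. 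Since $\hat U_t^4$ cannot attain an interior minimum and tends to $e^{-4t}$ at infinity, $\hat U_t^4 > e^{-4t}$ everywhere outside $S_{r(t)}$, forcing the $1/|x|$ coefficient $c_1 = 2e^{-2t}b$ to be strictly positive. In your variables this yields $b>0$ and hence $\alpha + \tfrac12 q^2 = \tfrac32 b^2 > 0$, which simultaneously fixes the sign of the square root in the formula. Without the Bray--Iga input you have no control on the sign of $b$, and your limiting argument cannot be salvaged.
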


\begin{proof}
Consider the equation satisfied by $\hat{U}_{t}^{4}$:
\begin{equation}\label{08.4}
\Delta_{\delta}\hat{U}_{t}^{4}=-|E_{\delta}|_{\delta}^{2}+\frac{3}{4}\hat{U}_{t}^{-4}|\nabla\hat{U}_{t}^{4}|_{\delta}^{2}.
\end{equation}
Since the equation and all coefficients are analytic in their arguments, we may assume that the solution is given by an expansion
\begin{equation}\label{08.5}
\hat{U}_{t}^{4}=e^{-4t}+\frac{c_{1}}{|x|}+\frac{c_{2}}{|x|^{2}}+\frac{c_{3}}{|x|^{3}}+\cdots.
\end{equation}
We then proceed to calculate each term in \eqref{08.4}. For instance
\begin{equation}\label{08.6}
\Delta_{\delta}\hat{U}_{t}^{4}=\frac{2c_{2}}{|x|^{4}}+\frac{6c_{3}}{|x|^{5}}+\frac{12c_{4}}{|x|^{6}}
+\frac{20c_{5}}{|x|^{7}}+\frac{30c_{6}}{|x|^{8}}+\cdots,
\end{equation}
so that
\begin{align}\label{08.7}
\begin{split}
\hat{U}_{t}^{4}\Delta_{\delta}\hat{U}_{t}^{4}=&\frac{2e^{-4t}c_{2}}{|x|^{4}}+\frac{2c_{1}c_{2}+6e^{-4t}c_{3}}{|x|^{5}}+\frac{2c_{2}^{2}+6c_{1}c_{3}+12e^{-4t}c_{4}}{|x|^{6}}\\
&+\frac{8c_{2}c_{3}+12c_{1}c_{4}+20e^{-4t}c_{5}}{|x|^{7}}+\frac{2c_{2}c_{4}+6c_{3}^{2}+12c_{2}c_{4}+20c_{1}c_{5}+30e^{-4t}c_{6}}{|x|^{8}}+\cdots.
\end{split}
\end{align}
Next observe that
\begin{equation}\label{08.8}
\partial_{r}\hat{U}_{t}^{4}=-\frac{c_{1}}{|x|^{2}}-\frac{2c_{2}}{|x|^{3}}-\frac{3c_{3}}{|x|^{4}}-\frac{4c_{4}}{|x|^{5}}-\frac{5c_{5}}{|x|^{6}}-\cdots,
\end{equation}
which yields
\begin{equation}\label{08.9}
|\nabla\hat{U}_{t}^{4}|^{2}=|\partial_{r}\hat{U}_{t}^{4}|^{2}=\frac{c_{1}^{2}}{|x|^{4}}+\frac{4c_{1}c_{2}}{|x|^{5}}+\frac{4c_{2}^{2}+6c_{1}c_{3}}{|x|^{6}}
+\frac{8c_{1}c_{4}+12c_{2}c_{3}}{|x|^{7}}+\frac{9c_{3}^{2}+16c_{2}c_{4}+10c_{1}c_{5}}{|x|^{8}}+\cdots.
\end{equation}
Finally
\begin{equation}\label{08.10}
\hat{U}_{t}^{4}|E_{\delta}|^{2}=\frac{e^{-4t}q^{2}}{|x|^{4}}+\frac{c_{1}q^{2}}{|x|^{5}}+\frac{c_{2}q^{2}}{|x|^{6}}+\frac{c_{3}q^{2}}{|x|^{7}}+\frac{c_{4}q^{2}}{|x|^{8}}+\cdots.
\end{equation}

By combining these expansions and using equation \eqref{08.4}, we find the following relations
\begin{align}\label{08.11}
\begin{split}
2e^{-4t}c_{2}&=-e^{-4t}q^{2}+\frac{3}{4}c_{1}^{2},\\
6e^{-4t}c_{3}+2c_{1}c_{2}&=-c_{1}q^{2}+3c_{1}c_{2},\\
12e^{-4t}c_{4}+6c_{1}c_{3}+2c_{2}^{2}&=-q^{2}c_{2}+3c_{2}^{2}+\frac{9}{2}c_{1}c_{3},\\
20e^{-4t}c_{5}+12c_{1}c_{4}+6c_{2}c_{3}+2c_{2}c_{3}&=-q^{2}c_{3}+6c_{1}c_{4}+9c_{2}c_{3}.
\end{split}
\end{align}
From this we can solve for the constants $c_{i}$:
\begin{align}\label{08.15}
\begin{split}
c_{2}&=\frac{3}{8}e^{4t}c_{1}^{2}-\frac{1}{2}q^{2},\\
c_{3}&=\frac{1}{16}e^{8t}c_{1}^{3}-\frac{1}{4}e^{4t}q^{2}c_{1},\\
c_{4}&=\frac{3}{48}e^{4t}q^{4}-\frac{1}{32}e^{8t}c_{1}^{2}q^{2}+\frac{1}{256}e^{12t}c_{1}^{4},\\
c_{i}&=0,\text{ }\text{ }\text{ }\text{ }i\geq 5.
\end{split}
\end{align}
Although the higher order terms for $i>5$ have not been computed here, one may deduce that they all vanish by simply checking that \eqref{08.5}, with these coefficients, solves \eqref{08.4}. The constant $c_1$ may be chosen in
order to realize the correct boundary condition.

Let us now obtain the form \eqref{08.3}. The first task is to show that $c_{1}>0$. To see this, first recall the result of Bray and Iga \cite{BrayIga}, which states that
\begin{equation}\label{08.19}
U_{t}^{4}\geq\frac{cA_{0}}{|x|^{2}}\text{ }\text{ }\text{ outside of }\text{ }\text{ }S_{r(t)}
\end{equation}
for some positive constant $c$. Since the average value of $U_{t}^{4}$ agrees with that of $\hat{U}_{t}^{4}$ on $S_{r(t)}$, we find that
\begin{equation}\label{08.20}
\hat{U}_{t}^{4}\geq c\varepsilon^{-2}e^{-4t}>e^{-4t}\text{ }\text{ }\text{ on }\text{ }\text{ }S_{r(t)},
\end{equation}
if $\varepsilon$ is small enough. Moreover
$\hat{U}_{t}^{4}\rightarrow e^{-4t}$ as $|x|\rightarrow\infty$.
From equation \eqref{08.4}, it is clear that $\hat{U}_{t}^{4}$ cannot obtain an interior minimum, thus
\begin{equation}\label{08.22}
\hat{U}_{t}^{4}> e^{-4t}\text{ }\text{ }\text{ outside of }\text{ }\text{ }S_{r(t)}.
\end{equation}
It follows that $c_{1}>0$.
Now solve for $c_{i}$ in terms of $c_{2}$ to produce
\begin{equation}\label{08.23}
c_{1}=e^{-2t}\sqrt{\frac{8}{3}\left(c_{2}+\frac{1}{2}q^{2}\right)},\text{ }\text{ }\text{ }\text{ }
c_{3}=\frac{e^{2t}}{6}\sqrt{\frac{8}{3}\left(c_{2}+\frac{1}{2}q^{2}\right)}(c_{2}-q^{2}),
\text{ }\text{ }\text{ }\text{ }
c_{4}=\frac{1}{36}e^{4t}(c_{2}-q^{2})^{2}.
\end{equation}
The desired result is obtained by setting $\alpha=c_{2}$ and noting that $\alpha>-\frac{1}{2}q^{2}$ by \eqref{08.15}.
\end{proof}

Notice that the Reissner-Nordstr\"{o}m conformal factors have a similar expansion to that of $\hat{U}_{t}$, namely
\begin{align}\label{08.26}
\begin{split}
\tilde{U}_{t}^{4}=&\left(e^{-2t}+\frac{\sqrt{4e^{-4t}r(t)^{2}+q^{2}}}{|x|}+\frac{e^{-2t}r(t)^{2}}{|x|^{2}}\right)^{2}\\
=&e^{-4t}+\frac{2e^{-2t}\sqrt{4e^{-4t}r(t)^{2}+q^{2}}}{|x|}+\frac{6e^{-4t}r(t)^{2}+q^{2}}{|x|^{2}}\\
&+\frac{2e^{-2t}r(t)^{2}\sqrt{4e^{-4t}r(t)^{2}+q^{2}}}{|x|^{3}}+\frac{e^{-4t}r(t)^{4}}{|x|^{4}}.
\end{split}
\end{align}
This is not too surprising, since $\tilde{U}_{t}$ satisfies the same equation \eqref{08.4} as $\hat{U}_{t}^{4}$, and has the same asymptotic behavior as $|x|\rightarrow\infty$. Observe also that
\begin{equation}\label{08.27}
\tilde{V}_{t}=\frac{d}{dt}\tilde{U}_{t}=-\frac{e^{-2t}\left(1-\frac{r(t)^{2}}{|x|^{2}}\right)}{\left(e^{-2t}-\frac{\sqrt{4e^{-4t}r(t)^{2}+q^{2}}}{|x|}+e^{2t}\frac{r(t)^{2}}{|x|^{2}}\right)^{1/2}},
\end{equation}
satisfies
\begin{equation}\label{08.28}
\Delta_{\delta}\tilde{V}_{t}=\frac{3}{4}\tilde{U}_{t}^{-4}|E_{\delta}|_{\delta}^{2}\tilde{V}_{t},\text{ }\text{ }\text{ }\tilde{V}_{t}=0\text{ }\text{ }\text{ on }\text{ }\text{ }S_{r(t)},\text{ }\text{ }\text{ }
\tilde{V}_{t}\rightarrow -e^{-t}\text{ }\text{ }\text{ as }\text{ }\text{ }|x|\rightarrow \infty.
\end{equation}
The next lemma gives the foundational estimate on which the exhaustion proof is based. It is also the
primary place where the area/charge inequality is required.

\begin{lemma}\label{comparison}
Let $r(t)=\varepsilon\sqrt{A_{0}}e^{2t}$ with $A_{0}=|\partial M|_{g}$. If $A_{0}> 4\pi q^{2}$ and $\varepsilon$ is sufficiently small, then
\begin{equation}\label{08.29}
\hat{U}_{t}(x)\geq\tilde{U}_{t}(x)\text{ }\text{ }\text{ for }\text{ }\text{ }|x|\geq r(t).
\end{equation}
\end{lemma}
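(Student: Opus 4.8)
The idea is that $\tilde{U}_{t}^{4}$ is, like $\hat{U}_{t}^{4}$, a radial solution of the ordinary differential equation \eqref{08.4} with the same asymptotics $\to e^{-4t}$, hence is determined by a single free constant, namely the coefficient of $|x|^{-2}$ in its expansion. Matching this constant against \eqref{08.3} shows that $\tilde{U}_{t}^{4}$ is precisely of the form \eqref{08.3} with $\alpha$ replaced by $\tilde{\alpha}:=6e^{-4t}r(t)^{2}+q^{2}$, which with $r(t)=\varepsilon\sqrt{A_{0}}e^{2t}$ equals $6\varepsilon^{2}A_{0}+q^{2}$. Thus the whole comparison collapses to the scalar inequality $\alpha\geq\tilde{\alpha}$, and the bulk of the argument is arranging for this to hold.

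The first point is the algebraic fact that the quartic \eqref{08.3} is a perfect square,
\begin{equation*}
\hat{U}_{t}^{2}=e^{-2t}+\frac{\beta}{2|x|}+\frac{e^{2t}(\beta^{2}-4q^{2})}{16|x|^{2}},\qquad \beta:=\sqrt{\tfrac{8}{3}\bigl(\alpha+\tfrac12 q^{2}\bigr)}>0 ,
\end{equation*}
and likewise $\tilde{U}_{t}^{2}=e^{-2t}+\tilde{\beta}/(2|x|)+e^{2t}(\tilde{\beta}^{2}-4q^{2})/(16|x|^{2})$ with $\tilde{\beta}=2\sqrt{4e^{-4t}r(t)^{2}+q^{2}}>0$ (the right-hand sides being the positive square roots $\hat{U}_{t}^{2}$, $\tilde{U}_{t}^{2}$, which are positive by \eqref{08.22} and the explicit formula \eqref{08.26}). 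Subtracting gives the factorization
\begin{equation*}
\hat{U}_{t}^{2}-\tilde{U}_{t}^{2}=\frac{\beta-\tilde{\beta}}{2|x|}\left(1+\frac{e^{2t}(\beta+\tilde{\beta})}{8|x|}\right),
\end{equation*}
in which the second factor is strictly positive. Hence $\hat{U}_{t}\geq\tilde{U}_{t}$ on $\{|x|\geq r(t)\}$ if and only if $\beta\geq\tilde{\beta}$, i.e. $\alpha\geq\tilde{\alpha}$.

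Next I would pin down $\beta$ from the boundary condition in \eqref{08.1}. Writing $\bar{U}^{4}$ for the (constant) value $(4\pi r(t)^{2})^{-1}\int_{S_{r(t)}}U_{t}^{4}\,dA_{\delta}$ of $\hat{U}_{t}^{4}$ on $S_{r(t)}$, the perfect-square form above evaluated at $|x|=r(t)$ is a quadratic in $\beta$ whose positive root is $\beta=-4r(t)e^{-2t}+2\sqrt{q^{2}+4r(t)^{2}e^{-2t}\bar{U}^{2}}$ (the other root is negative, and $\beta>0$ by Lemma \ref{explicit}). The area/charge inequality enters exactly here, through a lower bound on $\bar{U}$: since in the regime where the lemma is applied $S_{r(t)}$ encloses the strictly outer minimizing horizon $\partial M_{t}$ (this is also what makes \eqref{08.19} applicable), one has $4\pi r(t)^{2}\bar{U}^{4}=|S_{r(t)}|_{g_{t}}\geq|\partial M_{t}|_{g_{t}}=A_{0}$, hence $r(t)\bar{U}^{2}\geq\sqrt{A_{0}}/(2\sqrt{\pi})$.

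Finally, a short computation shows that $\beta\geq\tilde{\beta}$ is equivalent to $r(t)\bar{U}^{2}\geq 2r(t)e^{-2t}+\sqrt{4r(t)^{2}e^{-4t}+q^{2}}$; using the lower bound on $r(t)\bar{U}^{2}$ together with $r(t)e^{-2t}=\varepsilon\sqrt{A_{0}}$ and $r(t)^{2}e^{-4t}=\varepsilon^{2}A_{0}$, it suffices to verify
\begin{equation*}
\frac{\sqrt{A_{0}}}{2\sqrt{\pi}}\;\geq\;2\varepsilon\sqrt{A_{0}}+\sqrt{4\varepsilon^{2}A_{0}+q^{2}}.
\end{equation*}
As $\varepsilon\to0$ the right-hand side tends to $|q|$ while the left-hand side is the fixed number $\sqrt{A_{0}}/(2\sqrt{\pi})$, and $\sqrt{A_{0}}/(2\sqrt{\pi})\geq|q|$ exactly when $A_{0}\geq4\pi q^{2}$; the \emph{strict} hypothesis $A_{0}>4\pi q^{2}$ therefore leaves room to absorb the $O(\varepsilon)$ terms once $\varepsilon$ is small enough. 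I expect the main obstacle to be twofold: spotting the algebraic fact in the second step — that \eqref{08.3} is a perfect square and the difference $\hat{U}_{t}^{2}-\tilde{U}_{t}^{2}$ factors with a manifestly positive cofactor, so the geometric comparison reduces to comparing two coefficients — and then ensuring the lower bound on $\bar{U}$ carries the sharp constant $1/4\pi$ (coming from outer minimization, rather than the lossier Bray--Iga constant), which is precisely what makes the borderline value $A_{0}=4\pi q^{2}$ appear and the strict inequality exactly sufficient. The rest is routine algebra.
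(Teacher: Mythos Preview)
Your proof is correct and takes a genuinely different, more elementary route than the paper. The paper works at the level of $\hat{U}_{t}^{4}$ and $\tilde{U}_{t}^{4}$, comparing their expansions coefficient by coefficient; because the $|x|^{-3}$ coefficient involves $\alpha-q^{2}$ linearly (hence could be negative) while the $|x|^{-4}$ coefficient involves $(\alpha-q^{2})^{2}$, the paper is led to the pair of conditions $\alpha-q^{2}\geq 6\varepsilon^{2}A_{0}$ and $(\alpha-q^{2})^{2}\geq 36\varepsilon^{4}A_{0}^{2}$. It shows by contradiction, using the same outer-minimizing bound $|S_{r(t)}|_{g_{t}}\geq A_{0}$ you use, that at least one holds, and then invokes the \emph{positive mass theorem with charge} on the auxiliary data $(\mathbb{R}^{3}\setminus\{0\},\hat{U}_{t}^{4}\delta,\hat{U}_{t}^{-6}E_{\delta})$ to force $\alpha\geq q^{2}$, ruling out the scenario where only the second holds. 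Your perfect-square identity for $\hat{U}_{t}^{2}$ and the resulting factorization of $\hat{U}_{t}^{2}-\tilde{U}_{t}^{2}$ with a manifestly positive cofactor collapse the comparison to the single scalar inequality $\beta\geq\tilde{\beta}$, which you then settle directly from the boundary data and the sharp outer-minimizing bound; no appeal to the positive mass theorem is needed. Your approach is cleaner and self-contained; the paper's, while heavier, has the virtue of making visible a connection between exhaustion and the charged positive mass theorem.
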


\begin{proof}
Note that by setting $\alpha=q^{2}+6e^{-4t}r(t)^{2}$, $\hat{U}_{t}$ becomes $\tilde{U}_{t}$. In fact, by directly comparing the coefficients in the expressions for these two functions, it is apparent that the desired result follows if
\begin{equation}\label{abc}
\alpha-q^{2}\geq 6e^{-4t}r(t)^{2}=6\varepsilon^{2}A_{0}\text{ }\text{ }\text{ }\text{ and }\text{ }\text{ }\text{ }(\alpha-q^{2})^{2}\geq 36 e^{-8t}r(t)^{4}= 36\varepsilon^{4}A_{0}^{2}.
\end{equation}

Observe that since $|S_{r(t)}|_{g_{t}}\geq|\partial M_{t}|_{g_{t}}=|\partial M|_{g}$, we have
\begin{align}\label{08.30}
\begin{split}
\frac{A_{0}}{4\pi}&\leq\frac{1}{4\pi}\int_{S_{r(t)}}U_{t}^{4}dA_{\delta}
=\frac{1}{4\pi}\int_{S_{r(t)}}\hat{U}_{t}^{4}dA_{\delta}\\
&=\varepsilon^{2}A_{0}+\varepsilon\sqrt{A_{0}}\sqrt{\frac{8}{3}\left(\alpha+\frac{1}{2}q^{2}\right)}+\alpha
+\frac{1}{6\varepsilon\sqrt{A_{0}}}\sqrt{\frac{8}{3}\left(\alpha+\frac{1}{2}q^{2}\right)}(\alpha-q^{2})
+\frac{(\alpha-q^{2})^{2}}{36\varepsilon^{2}A_{0}}.
\end{split}
\end{align}
Suppose that both inequalities in \eqref{abc} are violated, then
\begin{equation}\label{08.31}
\frac{A_{0}}{4\pi}<2\varepsilon^{2}A_{0}+\varepsilon\sqrt{A_{0}}\sqrt{\frac{8}{3}\left(\frac{3}{2}q^{2}+6\varepsilon^{2}A_{0}\right)}
+q^{2}+6\varepsilon^{2}A_{0}+\varepsilon\sqrt{\frac{8A_{0}}{3}\left(\frac{3}{2}q^{2}+6\varepsilon^{2}A_{0}\right)}.
\end{equation}
However this is impossible for small $\varepsilon$, since $A_{0}> 4\pi q^{2}$ independent of $\varepsilon$. Therefore, at least one of the inequalities in \eqref{abc} must be satisfied. If the first inequality is satisfied, then so is the second. So assume now that the second inequality is satisfied but not the first. The only way that this can happen is if
$\alpha-q^{2}<-6\varepsilon^{2}A_{0}$. We claim however, that $\alpha\geq q^{2}$ as a result of the positive mass theorem with charge \cite{GibbonsHawkingHorowitzPerry}, and hence \eqref{abc} holds.

To verify the claim, consider the initial data $(\mathbb{R}^{3}\setminus \{0\},\hat{U}_{t}^{4}\delta,\hat{U}_{t}^{-6}E_{\delta})$, which satisfies the charged dominant energy condition, the Maxwell constraint, and has mass $\hat{m}=\frac{1}{2}\sqrt{\frac{8}{3}\left(\alpha+\frac{1}{2}q^{2}\right)}$. Note that although $\hat{U}_{t}$ was initially defined in \eqref{08.1} only on $\mathbb{R}^{3}\setminus B_{r(t)}$, the
explicit expression for $\hat{U}_{t}$ in Lemma \ref{explicit} is valid on $\mathbb{R}^{3}\setminus \{0\}$. If $\alpha=q^{2}$ then we are done, so assume that $\alpha\neq q^{2}$. Then according to the expansion of $\hat{U}_{t}$, this initial data set has an asymptotically flat end corresponding to $\{0\}$, and therefore a minimal surface exists which separates the two ends. We may now apply the positive mass theorem with charge to conclude that $\hat{m}\geq|q|$, and the claim follows.
\end{proof}

We have now finished the preparation, and are ready to establish the first step in the proof of exhaustion.

\begin{proposition}\label{step1}
Let $r(t)=\varepsilon\sqrt{A_{0}}e^{2t}$ with $A_{0}=|\partial M|_{g}$. If $A_{0}> 4\pi q^{2}$ and $\varepsilon$ is sufficiently small, then $\partial M_{t}$ cannot be entirely enclosed by the coordinate sphere $S_{r(t)}$ for all $t$.
\end{proposition}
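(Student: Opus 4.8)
The plan is to argue by contradiction. Suppose $\partial M_t$ is enclosed by $S_{r(t)}$ for every $t\ge 0$, where $r(t)=\varepsilon\sqrt{A_0}\,e^{2t}$ and $\varepsilon$ is a small parameter to be fixed last. The first step is to observe that this assumption activates the area/charge hypothesis through the comparison machinery already in place: since each $\partial M_t$ is strictly outer minimizing in $(M,g_t)$ and is enclosed by $S_{r(t)}$, one has $|S_{r(t)}|_{g_t}\ge |\partial M_t|_{g_t}=A_0$, i.e. the spherical average of $U_t^4$ over $S_{r(t)}$ is at least $A_0/(4\pi r(t)^2)$. Together with the Bray--Iga bound \eqref{08.19} this is exactly the hypothesis of Lemma \ref{comparison}, which then gives $\hat U_t\ge \tilde U_t$ on $\{|x|\ge r(t)\}$; moreover the proof of that lemma supplies the strict slack $\alpha-q^2\ge 6\varepsilon^2 A_0>0$, owing to $A_0>4\pi q^2$.

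The second step is to promote this to a comparison of the flow velocities. Since $\hat U_t$ and $\tilde U_t$ solve the same semilinear equation \eqref{08.4} with identical behavior as $|x|\to\infty$ and are ordered, their $t$-derivatives $\hat V_t$ and $\tilde V_t$ solve the linear problems \eqref{08.2} and \eqref{08.28}, whose zeroth-order coefficients are nonnegative, so the maximum principle propagates this ordering (together with the ordering of their boundary data on the moving sphere $S_{r(t)}$) to $\hat V_t$ versus $\tilde V_t$. Through the averaging construction of $\hat U_t$ out of $U_t$, this in turn controls $v_t$ and $\partial_\nu v_t$ on $\partial M_t$ in terms of the \RN\ model data. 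Because $\tilde V_t$ vanishes exactly on $S_{r(t)}$ --- the horizon of the model flow, which is dragged outward at the coordinate rate $\dot r(t)=2r(t)$ --- this comparison forces the genuine minimal surface $\partial M_t$ to advance outward at least as fast as $S_{r(t)}$.

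The third step is to integrate this forward in time and read off the contradiction: the strict slack $\alpha-q^2\ge 6\varepsilon^2 A_0$, a direct consequence of $A_0>4\pi q^2$, makes the coordinate size of $\partial M_t$ eventually overtake $r(t)$, contradicting $\partial M_t\subset B_{r(t)}$ for all $t$. This is precisely where the area/charge inequality is indispensable: if instead $A_0\le 4\pi q^2$, the model factor $\tilde U_t$ degenerates --- its rescaled ADM mass $e^{-2t}\sqrt{4\varepsilon^2 A_0+q^2}$ tends to $0$ while the charge remains pinned at $q$ --- the model horizon $S_{r(t)}$ no longer escapes in a controlled way, and no contradiction is available, matching the failure of exhaustion in that regime.

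I expect the main obstacle to be the second step: making the velocity comparison rigorous in the presence of the moving outer boundary $S_{r(t)}$, and with the true velocity $v_t$ vanishing on $\partial M_t$ rather than on $S_{r(t)}$, so that one is really comparing two flows whose free boundaries sit at different places and whose metrics agree only outside $S_{r(t)}$. This should be handled by using the averaged radial surrogate $\hat U_t$ (and $\hat V_t=\tfrac{d}{dt}\hat U_t$) as an intermediary between the actual flow and the explicit \RN\ solution, by invoking the strictly outer-minimizing/barrier property from Section \ref{sec3} to keep $\partial M_t$ off $S_{r(t)}$ until it is forced across, and by using the monotone enclosure of the $\partial M_t$ established in Section \ref{sec3} to reduce to $t$ large, where the coordinate spheres $S_{r(t)}$ genuinely lie in the asymptotic end and the charged harmonic asymptotics are in force.
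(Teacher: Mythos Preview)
Your setup and first two steps are essentially the paper's: argue by contradiction, invoke Lemma~\ref{comparison} to get $\hat U_t\ge\tilde U_t$ outside $S_{r(t)}$, and then pass to the velocity comparison $\hat V_t\le\tilde V_t$ via the maximum principle for $W_t=\tilde V_t-\hat V_t$ (using that $\hat V_t(r(t))<0=\tilde V_t(r(t))$ and that $\tilde U_t^{-4}-\hat U_t^{-4}\ge 0$). So far, so good.

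The gap is in your third step, i.e.\ in the mechanism of contradiction. The inequality $\hat V_t\le\tilde V_t$ lives \emph{outside} $S_{r(t)}$ and concerns conformal factors; it says nothing about $v_t$ or $\partial_\nu v_t$ on $\partial M_t$, which sits \emph{inside} $S_{r(t)}$. The averaging that defines $\hat U_t$ matches $\hat U_t^4$ with the spherical mean of $U_t^4$ only on $S_{r(t)}$, not on the horizon, so you cannot read off the speed of $\partial M_t$ from it. Consequently the sentence ``this comparison forces the genuine minimal surface $\partial M_t$ to advance outward at least as fast as $S_{r(t)}$'' does not follow, and the slack $\alpha-q^2\ge 6\varepsilon^2A_0$ is not used the way you describe.

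What the paper actually does after $\hat V_t\le\tilde V_t$ is an \emph{area} contradiction, not a position one. One integrates $\hat V_t=\tfrac{d}{dt}\hat U_t$ using the explicit bound $\tilde V_t\le -e^{-t}\bigl(1-\tfrac{r(t)}{|x|}\bigr)$ to obtain, for $|x|\ge r(t)$,
\[
\hat U_t(x)\ \le\ e^{-t}+\frac{1}{|x|}\Bigl[c+\varepsilon\sqrt{A_0}\,(e^{t}-e^{\bar t})\Bigr].
\]
Then the very definition of $\hat U_t$ gives $|S_{r(t)}|_{g_t}=\int_{S_{r(t)}}U_t^4\,dA_\delta=\int_{S_{r(t)}}\hat U_t^4\,dA_\delta\le 4\pi r(t)^2\hat U_t^4(r(t))$, and plugging in the bound yields $|S_{r(t)}|_{g_t}\le 4\pi\varepsilon^2A_0\bigl[2+O(\varepsilon^{-1}e^{-t})\bigr]^4<A_0$ for $\varepsilon$ small and $t$ large. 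This contradicts the outer-minimizing property $|S_{r(t)}|_{g_t}\ge A_0$ directly, without ever locating $\partial M_t$. Your proposal should replace the attempt to track $\partial M_t$ with this area computation.
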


\begin{proof}
The proof is by contradiction. Thus assume that $\partial M_{t}$ is entirely enclosed by $S_{r(t)}$ for all $t\geq\overline{t}$. It will then be shown that if $\overline{t}$ is sufficiently large, then $\partial M_{t}$ is not
the outermost minimal area enclosure of $\partial M$ with respect to the metric $g_{t}$, yielding a contradiction.

Consider the equation satisfied by the difference $W_{t}=\tilde{V}_{t}-\hat{V}_{t}$
\begin{equation}\label{08.32}
\Delta_{\delta}W_{t}=\frac{3}{4}\hat{U}_{t}^{-4}|E_{\delta}|_{\delta}^{2}W_{t}
+\frac{3}{4}(\tilde{U}_{t}^{-4}-\hat{U}_{t}^{-4})\tilde{V}_{t}|E_{\delta}|_{\delta}^{2}.
\end{equation}
Moreover $W_{t}\rightarrow 0$ as $|x|\rightarrow\infty$, and
\begin{equation}\label{08.33}
\hat{V}_{t}(r(t))=\frac{d}{dt}[\hat{U}_{t}(r(t))]<0=\tilde{V}_{t}(r(t))
\end{equation}
so that $W_{t}>0$ on $S_{r(t)}$. In \eqref{08.33} we used the formula \eqref{08.3} to show that $\frac{d}{dt}[\hat{U}_{t}(r(t))]<0$.
Since $\tilde{U}_{t}^{-4}-\hat{U}_{t}^{-4}\geq 0$ by Lemma \ref{comparison}, we may apply the maximum principle to conclude that $W_{t}\geq 0$ outside $S_{r(t)}$.

The remaining arguments proceed similar to those in the proof of Theorem 12 in \cite{Bray}. From the above, we have that $\hat{V}_{t}\leq\tilde{V}_{t}$ outside of $S_{r(t)}$.\footnote{It may also be possible to prove this inequality directly from the explicit formulas for $\hat{V}_{t}$ and $\tilde{V}_{t}$, with help from the area/charge inequality.} This allows an estimate of $\hat{U}_{t}$ from above, since $\hat{V}_{t}=\frac{d}{dt}\hat{U}_{t}$. In this direction, first notice that
\begin{equation}\label{08.34}
\tilde{V}_{t}=-\frac{e^{-2t}\left(1-\frac{r(t)^{2}}{|x|^{2}}\right)}
{\left(e^{-2t}-\frac{\sqrt{4e^{-4t}r(t)^{2}+q^{2}}}{|x|}+e^{2t}\frac{r(t)^{2}}{|x|^{2}}\right)^{1/2}}
\leq-\frac{e^{-2t}\left(1-\frac{r(t)^{2}}{|x|^{2}}\right)}{\left(e^{-2t}+\frac{2e^{-2t}r(t)}{|x|}
+e^{2t}\frac{r(t)^{2}}{|x|^{2}}\right)^{1/2}}
=-e^{-t}\left(1-\frac{r(t)}{|x|}\right).
\end{equation}
Now choose a constant $c>0$ such that $\hat{U}_{\overline{t}}(x)\leq e^{-\overline{t}}+\frac{c}{|x|}$ for all $x$ outside of $S_{r(t)}$.
Then for all $x$ outside of $S_{r(t)}$,
\begin{equation}\label{08.35}
\hat{U}_{t}(x)=\hat{U}_{\overline{t}}(x)+\int_{\overline{t}}^{t}\hat{V}_{s}ds
\leq \hat{U}_{\overline{t}}(x)+\int_{\overline{t}}^{t}\tilde{V}_{s}ds
\leq e^{-t}+\frac{1}{|x|}\left[c+\varepsilon\sqrt{A_{0}}(e^{t}-e^{\overline{t}})\right].
\end{equation}
It follows that
\begin{equation}\label{08.36}
|S_{r(t)}|_{g_{t}}=\int_{S_{r(t)}}U_{t}^{4}dA_{\delta}
=\int_{S_{r(t)}}\hat{U}_{t}^{4}dA_{\delta}
\leq 4\pi r(t)^{2}\hat{U}_{t}^{4}(r(t))
\leq 4\pi\varepsilon^{2}A_{0}[2+O(\varepsilon^{-1}e^{-t})]^{4}.
\end{equation}
Therefore, for $\varepsilon$ sufficiently small and $t\geq\overline{t}$ sufficiently large $|S_{r(t)}|_{g_{t}}<A_{0}$.
\end{proof}

We are now ready to state the main result of this section.

\begin{theorem}\label{exhaustion}
If $|\partial M|_{g}> 4\pi q^{2}$, then the collection of subdomains $\{M_{t}\}$ exhausts the manifold $M$. In particular, the flowing surfaces
$\partial M_{t}$ eventually become connected (topological 2-spheres) for all sufficiently large times.
\end{theorem}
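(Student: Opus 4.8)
The plan is to run Bray's two-step exhaustion argument, using Proposition \ref{step1} as the first step, importing the second step essentially verbatim from \cite{Bray}, and assembling the two via the monotonicity of the flowing surfaces proved in Section \ref{sec3}. Throughout, exhaustion means that $\partial M_{t}$ eventually encloses every bounded subset of $M$ (equivalently, that the collection $\{M_{t}\}$ exhausts $M$), and by the results of Section \ref{sec3} the compact regions bounded by $\partial M_{t}$ increase with $t$, monotonicity being preserved across the countable jump set $J$.

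Fix a coordinate sphere $S_{R_{0}}$ in the designated asymptotic end, with $R_{0}$ large enough that $\partial M$ lies inside $S_{R_{0}}$; it suffices to show that for every such $R_{0}$ the surface $\partial M_{t}$ lies entirely outside $S_{R_{0}}$ for all sufficiently large $t$, since then $\partial M_{t}$ leaves every bounded set. At each time $t$, $\partial M_{t}$ is in exactly one of three positions relative to $S_{R_{0}}$: entirely inside, straddling (meeting both sides), or entirely outside. Monotonicity of the enclosed regions forces these to occur, as $t$ increases, in this temporal order; precisely, once the region enclosed by $\partial M_{t}$ contains $\overline{B_{R_{0}}}$ it continues to, so there are times $0\le\tau_{1}\le\tau_{2}\le\infty$ with $\partial M_{t}$ entirely inside $S_{R_{0}}$ for $t<\tau_{1}$ and entirely outside for $t>\tau_{2}$. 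The goal is to prove $\tau_{2}<\infty$.

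The first step is $\tau_{1}<\infty$. If it fails, $\partial M_{t}$ is enclosed by the fixed sphere $S_{R_{0}}$ for all $t\ge 0$; choosing $\varepsilon$ small enough that Lemmas \ref{explicit} and \ref{comparison} and Proposition \ref{step1} apply, and picking $T_{1}$ with $r(T_{1})=\varepsilon\sqrt{A_{0}}e^{2T_{1}}=R_{0}$, we find that for all $t\ge T_{1}$ the sphere $S_{r(t)}$ encloses $S_{R_{0}}$, so $\partial M_{t}$ is entirely enclosed by $S_{r(t)}$ for all $t\ge T_{1}$, exactly the configuration excluded by Proposition \ref{step1}; this is the only place the hypothesis $A_{0}>4\pi q^{2}$ is used. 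The second step is $\tau_{2}<\infty$: if it fails, then since $\tau_{1}<\infty$, the surface $\partial M_{t}$ straddles $S_{R_{0}}$ for all $t$ in the unbounded interval $(\tau_{1},\infty)$, and this is ruled out by the argument in the proof of Theorem 12 of \cite{Bray}, which uses only the asymptotic flatness of the flowing data $(M,g_{t},E_{t},B_{t})$ and a maximum principle in the asymptotic region; since the role of Bray's harmonic velocity $v_{t}$ is here played by $u_{t}v_{t}$, which satisfies a maximum principle equation as established in Section \ref{sec3}, that argument carries over unchanged and needs no area/charge inequality. Hence $\tau_{2}<\infty$, so $\partial M_{t}$ lies entirely outside $S_{R_{0}}$ for $t>\tau_{2}$; letting $R_{0}\to\infty$ yields exhaustion.

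Finally, for $t$ large $\partial M_{t}$ encloses every bounded set, hence is the outermost minimal area enclosure of a region filling all but a thin (in the metric $g_{t}$) neighborhood of infinity in an asymptotically flat end; such an enclosure is connected, and, being a strictly outer minimizing, hence stable, minimal surface in $(M,g_{t})$ with $R_{g_{t}}\ge 0$ by \eqref{2.5}, it has spherical topology, exactly as in \cite{Bray}. The analytic weight of the section rests on Lemma \ref{comparison} and Proposition \ref{step1}, which are already in hand; the delicate point remaining for the present theorem is organizational --- converting Proposition \ref{step1}, stated for the moving sphere $S_{r(t)}$, into non-containment in fixed spheres, and verifying that monotonicity through the jumps forces the three positional regimes to occur in order, so that steps one and two combine to give ``eventually entirely outside $S_{R_{0}}$''.
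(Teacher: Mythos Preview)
Your two-step organization matches the paper's exactly: Proposition \ref{step1} for the non-enclosure step, Bray's argument for the non-straddling step, then monotonicity to assemble. The paper's own proof of Theorem \ref{exhaustion} is a single sentence deferring to Section 10 of \cite{Bray}, so structurally you are on target.

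There is, however, a misidentification of the technical ingredient that makes Bray's second step go through here. You write that the straddling case is ruled out because ``the role of Bray's harmonic velocity $v_{t}$ is here played by $u_{t}v_{t}$, which satisfies a maximum principle equation as established in Section \ref{sec3}''. That maximum-principle fact for $u_{t}v_{t}$ was used in Section \ref{sec3} to prove monotonicity of the surfaces, not exhaustion. Bray's non-straddling argument in Section 10 of \cite{Bray} does not run on the velocity at all; it runs on superharmonicity of the conformal factor (to get Harnack-type control and hence area estimates). In the present charged setting the relevant object is $U_{t}=u_{t}U_{0}$ in the asymptotic end, and the paper singles out precisely this: by \eqref{7.7}, $\Delta_{\delta}U_{t}=-\tfrac{1}{4}|E_{\delta}|_{\delta}^{2}U_{t}^{-3}\le 0$, so $U_{t}$ is superharmonic, and Bray's Section 10 argument then carries over verbatim. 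Your citation of Theorem 12 of \cite{Bray} for the second step is also off --- that theorem is the analogue of Proposition \ref{step1} (the first step), as the paper's own proof of Proposition \ref{step1} makes explicit; the second step lives elsewhere in Section 10. Fixing the justification to ``$U_{t}$ superharmonic by \eqref{7.7}, hence Bray's Section 10 applies'' closes the gap.
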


Given Proposition \ref{step1}, the proof of this statement is identical to that which appears in Section 10 of \cite{Bray}, after noting that $U_t$ is superharmonic by \eqref{7.7}.

\section{Monotonicity of the Mass}
\label{sec5}

Monotonicity of the mass is proven with a doubling argument similar to that in \cite{Bray}.
However here, the doubling procedure is based on the proof of uniqueness for the Reissner-N\"{o}rdstrom
black hole given by Masood-ul-Alam \cite{Masood}. Let $(M_{t}^{-}\cup M_{t}^{+},
g_{t}^{\pm})$ be the doubled manifold with $M_{t}^{\pm}$ representing two copies of $M_{t}$ glued along their boundaries and $g_{t}^{\pm}=(w_{t}^{\pm})^{4}g_{t}$,
where
\begin{equation}\label{1.1}
w_{t}^{\pm}=\frac{1}{2}\sqrt{(1\pm \overline{v}_{t})^{2}-\phi_{t}^{2}}.
\end{equation}
The function $\overline{v}_{t}$ (which approximates $v_{t}$) and $\phi_{t}$ imitate the roles
played in the static case by the norm of the Killing field and the electromagnetic potential, respectively. Ultimately though,
these functions are chosen to impart positivity to the scalar curvature of $g_{t}^{\pm}$.
In \cite{Masood}, conformal factors having the same structure as \eqref{1.1}, and built with the aforementioned pieces of static data, were used in the doubling argument. Moreover in this setting of the black hole uniqueness result, the static electrovacuum equations imply nonnegativity of the scalar curvature for the doubled manifold with the aid of a computation similar to that of Lemma \ref{scalar} below.

In order to define $\overline{v}_{t}$, let $\tau_{0}$ be sufficiently small, and set $\tau(x)=\operatorname{dist}_{g_{t}}(x,\partial M_{t})$. Denote surfaces of constant distance to the boundary
and the domain consisting of points whose distance to the boundary is larger than $\tau$,
by $S_{\tau}$ and $M(\tau)$ respectively. Then $\overline{v}_{t}$ is the unique solution of the
boundary value problem
\begin{equation}\label{54}
\Delta_{g_{t}} \overline{v}_{t}-f_{t}\overline{v}_{t}=0\text{ }\text{ }\text{ on }\text{ }\text{ }M_{t},
\text{ }\text{ }\text{ }\text{ }\overline{v}_{t}=0\text{ }\text{ }\text{ on }\text{ }\text{ }\partial M_{t},
\end{equation}
\begin{equation}\label{55}
\overline{v}_{t}=-1+\frac{\overline{\gamma}_{t}}{r}+O\left(\frac{1}{r^{2}}\right)\text{ }\text{ }
\text{ }\text{ as }\text{ }\text{ }\text{ }r\rightarrow\infty,
\end{equation}
where
\begin{equation}\label{56}
f_{t} = \left\{
        \begin{array}{lllll}
            \lambda^{2}\eta(\tau) & \quad \tau<\frac{5}{4}\tau_{0} \\
            |E_{t}|_{g_{t}}^{2}+|B_{t}|_{g_{t}}^{2} & \quad \text{ on } M(2\tau_{0})
        \end{array}
    \right\}.
\end{equation}
Here $\lambda$ is a small parameter to be determined,
$\eta$ is a cut-off function such that $\eta(\tau)=1-2\tau_{0}^{-1}(\tau-\tau_{0})$ for $\frac{3}{4}\tau_{0}<\tau<\frac{5}{4}\tau_{0}$,
$\eta(\tau)=0$ for $\tau<\frac{1}{2}\tau_{0}$,
$|\eta'(\tau)|\leq c\tau_{0}^{-1}$, and
$|\eta''(\tau)|\leq c\tau_{0}^{-2}$.
On the transition region
$\frac{5}{4}\tau_{0}<\tau<2\tau_{0}$, $f_{t}$ is defined so that
\begin{equation}\label{aaaa}
f_{t}\leq \lambda^{2}\eta\left(\frac{5}{4}\tau_{0}\right)+|E_{t}|_{g_{t}}^{2}+|B_{t}|_{g_{t}}^{2}
=\frac{1}{2}\lambda^{2}+|E_{t}|_{g_{t}}^{2}+|B_{t}|_{g_{t}}^{2},
\end{equation}
and so as to make a smooth positive function on $M(\tau_{0})$.


The function $\phi_{t}$ is also defined piecewise. Namely it
will be shown in the next section that if $\lambda,\tau_{0}>0$ and $\tau_{0}$ is sufficiently small, then there is a positive solution of the following
Dirichlet problem
\begin{equation}\label{1.2}
\Delta_{g_{t}}\phi_{t}-\frac{\nabla \overline{v}_{t}\cdot\nabla\phi_{t}}{\overline{v}_{t}}
=\frac{\Lambda}{\phi_{t}}\left|f_{t}-\frac{|\nabla\phi_{t}|^{2}}
{\overline{v}_{t}^{2}}\right|\text{ }\text{ }\text{ on }\text{ }\text{ }M(\tau_{0}),
\end{equation}
\begin{equation}\label{1.3}
\phi_{t}=\lambda\tau_{0}^{4}\text{ }\text{ }\text{ on }\text{ }\text{ }S_{\tau_{0}},\text{ }\text{ }\text{ }\text{ }\text{ }
\phi_{t}\rightarrow 0\text{ }\text{ }\text{ as }\text{ }\text{ }r\rightarrow\infty,
\end{equation}
where $\Lambda$ is a positive constants to be
specified. On the interior region define
\begin{equation}\label{1.5'}
\phi_{t}=\lambda\tau_{0}^{4}+\partial_{\tau}\phi_{t}|_{S_{\tau_{0}}}
(\tau-\tau_{0})\eta(\tau)\text{ }\text{ }\text{ when }\text{ }\text{ }
0\leq\tau\leq\tau_{0},
\end{equation}
where there is a slight abuse of notation in that $\partial_{\tau}\phi_{t}|_{S_{\tau_{0}}}$ is defined for
all $0\leq \tau\leq\tau_{0}$ by the fact that it is constant along the geodesic flow.
Observe that $\phi_{t}$ is $C^{1,1}$ across $S_{\tau_{0}}$, and thus
$\phi_{t}$ is $C^{1,1}(M_{t})$.

The constructions above cut-off contributions from the electromagnetic energy density near the horizon. The purpose of this modification is to avoid possible singular behavior in the pseudo-potential $\phi_{t}$ at the horizon. Heuristically, this does not affect the main arguments used to establish monotonicity of the mass, since the first variation \eqref{mass} of the mass depends on the monopoles $\gamma_{t}$ and $\overline{\gamma}_{t}$, which are arbitrarily close (Theorem \ref{thm15}) as a result of the cut-off taking place on a sufficiently small region.
Notice also that if $|E|_{g}=|B|_{g}=0$ and $\lambda=0$ then $f_{t}=0$, which implies that $\phi_{t}=0$ and $\overline{v}_{t}=v_{t}$. It follows that in the absence of the electromagnetic field, the conformal factors \eqref{1.1} reduce, modulo the choice of $\lambda$, to the same expressions used in \cite{Bray}. Let us now establish positivity of the conformal factors.

\begin{lemma}\label{lemmapos}
If $\lambda$, $\tau_{0}$ are appropriately small, and $\Lambda>1$, then $(1\pm \overline{v}_{t})^{2}-\phi_{t}^{2}>0$ on $M_{t}$.
\end{lemma}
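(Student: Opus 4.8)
The plan is to establish the two inequalities $1+\overline{v}_t-\phi_t > 0$ and $1-\overline{v}_t-\phi_t > 0$ separately on $M_t$, since together they force $(1\pm\overline{v}_t)^2-\phi_t^2 = (1+\overline{v}_t-\phi_t)(1-\overline{v}_t-\phi_t)\cdot(\text{sign adjustments})$; more precisely one wants $1+\overline{v}_t > \phi_t \geq 0$ and $1-\overline{v}_t > \phi_t \geq 0$, from which $(1\pm\overline{v}_t)^2 > \phi_t^2$ follows because both $1\pm\overline{v}_t$ are then positive. First I would record the basic bounds on $\overline{v}_t$: by the maximum principle applied to \eqref{54}--\eqref{55}, since $f_t \geq 0$, the function $\overline{v}_t$ cannot attain a nonnegative interior maximum, so $\overline{v}_t \leq 0$ on $M_t$; and since $\overline{v}_t = 0$ on $\partial M_t$ and $\overline{v}_t \to -1$ at infinity, the minimum principle gives $\overline{v}_t \geq -1$, in fact $\overline{v}_t > -1$ on the interior by the strong maximum principle. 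Thus $1-\overline{v}_t \geq 1 > 0$ everywhere, and $1+\overline{v}_t > 0$ on the interior of $M_t$ with $1+\overline{v}_t = 1$ on $\partial M_t$; the only place $1+\overline{v}_t$ degenerates is asymptotically, where it behaves like $\overline{\gamma}_t/r \to 0$.

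The crux is therefore to control $\phi_t$ from above, showing $\phi_t < 1+\overline{v}_t$ (the bound $\phi_t < 1 \leq 1-\overline{v}_t$ will then be automatic, or proved the same way but is easier). Here I would exploit the smallness of $\lambda$ and $\tau_0$. Away from the horizon, on $M(\tau_0)$, $\phi_t$ solves the Dirichlet problem \eqref{1.2}--\eqref{1.3} with boundary value $\lambda\tau_0^4$ on $S_{\tau_0}$ and decay to $0$ at infinity; a barrier/comparison argument — comparing $\phi_t$ against a multiple of the harmonic function with the same boundary data, using that the right-hand side of \eqref{1.2} has a favorable sign structure when combined with the drift term, and that $\Lambda > 1$ — should give a bound of the form $\phi_t \leq C(\lambda,\tau_0)$ on $M(\tau_0)$ with $C(\lambda,\tau_0) \to 0$ as $\lambda,\tau_0 \to 0$; crucially this bound must be compared against the lower bound for $1+\overline{v}_t$ on $M(\tau_0)$, which is bounded below by a positive constant depending only on $\tau_0$ and the fixed geometry (since on $M(\tau_0)$ we are a definite distance from the horizon and $\overline{v}_t$ is continuous with $\overline{v}_t > -1$), uniformly in $\lambda$. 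On the collar $0 \leq \tau \leq \tau_0$, $\phi_t$ is given explicitly by \eqref{1.5'} as $\lambda\tau_0^4 + \partial_\tau\phi_t|_{S_{\tau_0}}(\tau-\tau_0)\eta(\tau)$, so one needs a bound on $\partial_\tau\phi_t|_{S_{\tau_0}}$ — this comes from interior elliptic estimates (Schauder or gradient estimates) for \eqref{1.2} near $S_{\tau_0}$, again yielding a quantity that is $O(\lambda)$ as $\lambda \to 0$ — and then $\phi_t = O(\lambda \tau_0)$ on the collar, while $1+\overline{v}_t$ there is between $1$ (on $\partial M_t$) and something close to $1$ (a Taylor expansion of $\overline{v}_t$ off the boundary, or just $\overline{v}_t \geq -c\tau$ for small $\tau$ by a boundary gradient estimate). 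So for $\lambda,\tau_0$ small enough, $\phi_t < 1 + \overline{v}_t$ on the whole collar as well. Finally one checks the asymptotic regime: as $r\to\infty$, $\phi_t = O(1/r)$ and $1+\overline{v}_t = \overline{\gamma}_t/r + O(1/r^2)$ with $\overline{\gamma}_t > 0$, so one needs the leading coefficient of $\phi_t$ at infinity to be strictly smaller than $\overline{\gamma}_t$; since $\phi_t$'s monopole is controlled by its boundary data $\lambda\tau_0^4$ and decays, it is $O(\lambda\tau_0^4)$, while $\overline{\gamma}_t$ is bounded below independent of $\lambda$ (it converges to $\gamma_t > 0$ as the cutoff vanishes, by the Theorem \ref{thm15} referenced in the text), so smallness of $\lambda$ again closes the estimate.

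I expect the main obstacle to be the uniform-in-$\lambda$ bound on $\phi_t$ in the near-horizon region — specifically extracting from the quasilinear equation \eqref{1.2}, with its singular-looking drift term $\nabla\overline{v}_t\cdot\nabla\phi_t/\overline{v}_t$ and the absolute value on the right, an a priori estimate $\|\phi_t\|_{C^1} = O(\lambda)$ that is genuinely uniform as $\lambda,\tau_0 \to 0$; the existence and regularity of this $\phi_t$ is deferred to Section \ref{sec6}, so I would lean on whatever quantitative bounds are proved there, and the real content of Lemma \ref{lemmapos} is then the bookkeeping that matches those small quantities against the $\lambda$-independent positive lower bounds for $1\pm\overline{v}_t$ coming from the fixed geometry of $(M_t, g_t)$. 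A secondary subtlety is that all estimates must be uniform in $t$ as well (or at least locally uniform), but since the areas and the relevant asymptotic data are preserved along the flow, and $\overline{v}_t$, $\phi_t$ depend continuously on $t$, this should not introduce new difficulties beyond careful tracking of constants.
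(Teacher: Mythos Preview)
Your reduction to proving $1+\overline{v}_t-\phi_t>0$ is correct (since $\overline{v}_t<0$ and $\phi_t>0$ make the other factors automatic), and your treatment of the collar region $\tau<\tau_0$ via the explicit formula \eqref{1.5'} together with a $C^1$ bound $\phi_t=O(\lambda)$ matches the paper.

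The gap is in your exterior argument. You propose to bound $\phi_t$ from above by some $C(\lambda,\tau_0)\to 0$ and compare against a positive lower bound for $1+\overline{v}_t$ on $M(\tau_0)$. But $1+\overline{v}_t\to 0$ at infinity, so no such uniform lower bound exists; you then try to patch this with a monopole comparison at infinity, which leaves an uncontrolled intermediate zone and requires knowing the leading coefficient of $\phi_t$ at infinity is strictly below $\overline{\gamma}_t$---a fact you assert but do not prove. More importantly, your outline never explains concretely how the hypothesis $\Lambda>1$ is used.

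The paper avoids all of this by applying the minimum principle directly to the combination $w:=1+\overline{v}_t-\phi_t$ on $M(2\tau_0)$. Subtracting the equations \eqref{54} and \eqref{1.2} and adding the drift $\overline{v}_t^{-1}\nabla\phi_t\cdot\nabla w$ yields
\[
\Delta_{g_t}w+\frac{\nabla\phi_t\cdot\nabla w}{\overline{v}_t}\leq -\left(\frac{\Lambda}{\phi_t}+\overline{v}_t\right)\left|f_t-\frac{|\nabla\phi_t|^2}{\overline{v}_t^2}\right|,
\]
and since $\phi_t\leq\lambda\tau_0^4<1$ (by the maximum principle applied to \eqref{1.2}) and $|\overline{v}_t|<1$, the bracket is positive precisely when $\Lambda>1$, making the right-hand side nonpositive. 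With $w>0$ on $S_{2\tau_0}$ (for small $\lambda,\tau_0$) and $w\to 0$ at infinity, the minimum principle gives $w>0$ throughout $M(2\tau_0)$, handling the asymptotic degeneracy automatically. This differential-inequality mechanism is the idea your proposal is missing.
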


\begin{proof}
Observe that $(1\pm \overline{v}_{t})^{2}-\phi_{t}^{2}=(1\pm \overline{v}_{t}+\phi_{t})(1\pm \overline{v}_{t}-\phi_{t})$. Thus since $\overline{v}_{t}<0$ and $\phi_{t}>0$, it is enough to show that $1+\overline{v}_{t}-\phi_{t}>0$. First we show this on $M(2\tau_{0})$. Equations \eqref{54} and \eqref{1.2} imply that
\begin{align}\label{267}
\begin{split}
\Delta_{g_{t}}(1+\overline{v}_{t}-\phi_{t})
+\frac{\nabla\phi_{t}\cdot\nabla(1+\overline{v}_{t}-\phi_{t})}{\overline{v}_{t}}
&=\left(|E_{t}|_{g_{t}}^{2}+|B_{t}|_{g_{t}}^{2}
-\frac{|\nabla\phi_{t}|^{2}}{\overline{v}_{t}^{2}}\right)\overline{v}_{t}
-\frac{\Lambda}{\phi_{t}}\left|f_{t}-\frac{|\nabla\phi_{t}|^{2}}{\overline{v}_{t}^{2}}\right|\\
&\leq -\left(\frac{\Lambda}{\phi_{t}}+\overline{v}_{t}\right)\left|f_{t}
-\frac{|\nabla\phi_{t}|^{2}}{\overline{v}_{t}^{2}}\right|
+(|E_{t}|_{g_{t}}^{2}+|B_{t}|_{g_{t}}^{2}-f_{t})\overline{v}_{t}\\
&\leq -\left(\frac{\Lambda}{\phi_{t}}+\overline{v}_{t}\right)\left|f_{t}
-\frac{|\nabla\phi_{t}|^{2}}{\overline{v}_{t}^{2}}\right|.
\end{split}
\end{align}
Also
\begin{equation}\label{268}
(1+\overline{v}_{t}-\phi_{t})|_{S_{2\tau_{0}}}>0,\text{ }\text{ }\text{ }\text{ }\text{ }
(1+\overline{v}_{t}-\phi_{t})\rightarrow 0\text{ }\text{ }\text{ as }\text{ }\text{ }r\rightarrow\infty,
\end{equation}
if $\lambda$, $\tau_{0}$ are small enough. Clearly the right-hand side of \eqref{267} is nonpositive if $\Lambda>1$. This is due to the fact that $\phi_{t}\leq\lambda\tau_{0}^{4}$ and $|\overline{v}_{t}|<1$ by the maximum principle. Thus, by the minimum principle
$1+\overline{v}_{t}-\phi_{t}>0$ on $M(2\tau_{0})$.

Let us now consider the remaining region. Again use the fact that $\phi_{t}\leq\lambda\tau_{0}^{4}$ on the region between $S_{\tau_{0}}$ and $S_{2\tau_{0}}$. Moreover, in Appendix A it is shown that $\phi_{t}\leq c(\tau_{0})\lambda$ for $\tau<\tau_{0}$. Hence, since $|\overline{v}_{t}|\leq c\tau_{0}$ on $M_{t}\setminus M(2\tau_{0})$ (by the mean value theorem), if $\lambda$, $\tau_{0}$ are chosen appropriately (small), then $1+\overline{v}_{t}-\phi_{t}>0$ for $\tau<2\tau_{0}$.
\end{proof}

In order to justify the use of $\overline{v}_{t}$ in place of $v_{t}$, in connection with monotonicity of the mass, it must be established that the monopoles of these two functions at spatial infinity remain arbitrarily close.

\begin{theorem}\label{thm15}
If an upper bound for $\lambda$ is fixed, then $|\gamma_{t}-\overline{\gamma}_{t}|\leq c\tau_{0}^{1/4}$ where $c$ is independent of $\lambda$ and $\tau_{0}$.
\end{theorem}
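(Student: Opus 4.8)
The plan is to estimate the difference $w_t := v_t - \overline{v}_t$, whose expansion at infinity is $w_t = (\gamma_t-\overline{\gamma}_t)|x|^{-1} + O(|x|^{-2})$, so that $\gamma_t - \overline{\gamma}_t$ is precisely its monopole. Subtracting \eqref{54} from \eqref{2.3} and writing $a_t := |E_t|_{g_t}^2 + |B_t|_{g_t}^2$ gives
\begin{equation}
\Delta_{g_t} w_t - a_t w_t = (a_t - f_t)\,\overline{v}_t =: F_t \text{ on } M_t, \qquad w_t = 0 \text{ on } \partial M_t, \qquad w_t \to 0 \text{ as } |x|\to\infty .
\end{equation}
The decisive feature of the definition \eqref{56} of $f_t$ is that $f_t \equiv a_t$ on $M(2\tau_0)$, so $F_t$ is supported in the thin shell $N_{\tau_0} := \{\operatorname{dist}_{g_t}(\cdot,\partial M_t)\le 2\tau_0\}$; there $|F_t| \le (2a_t + \lambda^2)|\overline{v}_t| \le C$ by \eqref{aaaa}, the bound $|\overline{v}_t|<1$, boundedness of $a_t = u_t^{-8}(|E|_g^2+|B|_g^2)$ near $\partial M_t$, and the fixed upper bound for $\lambda$. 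Since $\operatorname{Vol}_{g_t}(N_{\tau_0}) = O(\tau_0)$ by the coarea formula, it follows that $\|F_t\|_{L^1(M_t,g_t)} \le C\tau_0$ with $C$ independent of $\lambda$ and $\tau_0$.

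Next I would introduce a comparison function: let $h_t$ solve $\Delta_{g_t} h_t - a_t h_t = -|F_t|$ on $M_t$, with $h_t = 0$ on $\partial M_t$ and $h_t \to 0$ at infinity. Because $a_t \ge 0$, the minimum principle gives $h_t \ge 0$, and the operator $\Delta_{g_t}-a_t$ satisfies the maximum principle; since $\Delta_{g_t}(w_t-h_t) - a_t(w_t-h_t) = F_t + |F_t| \ge 0$ with vanishing data, $w_t \le h_t$, and similarly $\Delta_{g_t}(w_t+h_t) - a_t(w_t+h_t) = F_t - |F_t| \le 0$ gives $w_t \ge -h_t$. Hence $|w_t| \le h_t$ on $M_t$, and comparing leading asymptotics, $|\gamma_t - \overline{\gamma}_t| \le C_{h_t}$, where $h_t = C_{h_t}|x|^{-1} + O(|x|^{-2})$ with $C_{h_t}\ge 0$.

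Finally I would bound $C_{h_t}$ by a flux identity: integrating the equation for $h_t$ over the region between $\partial M_t$ and a large coordinate sphere and letting its radius tend to infinity,
\begin{equation}
4\pi C_{h_t} = -\int_{\partial M_t}\partial_\nu h_t \, dA_{g_t} - \int_{M_t} a_t h_t \, dV_{g_t} + \int_{M_t}|F_t|\, dV_{g_t} .
\end{equation}
As $\nu$ points into $M_t$, the Hopf lemma applied to the nonnegative $h_t$, which vanishes on $\partial M_t$, gives $\partial_\nu h_t \ge 0$; with $a_t h_t \ge 0$ as well, this yields $4\pi C_{h_t} \le \|F_t\|_{L^1(M_t,g_t)} \le C\tau_0$. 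Therefore $|\gamma_t - \overline{\gamma}_t| \le C_{h_t} \le \tfrac{C}{4\pi}\tau_0 \le c\,\tau_0^{1/4}$ once $\tau_0 \le 1$, with $c$ depending only on the data (and possibly $t$) but not on $\lambda$ or $\tau_0$.

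The main obstacle is the quantitative control near the horizon used in the first step: one must check that $a_t$, hence $F_t$, is bounded on the collapsing shell $N_{\tau_0}$ by a constant that does not deteriorate as $\tau_0 \to 0$ — equivalently, that $u_t$ stays bounded away from zero in a fixed neighborhood of $\partial M_t$, which holds by positivity and continuity of $u_t$ since $\partial M_t$ lies in a fixed compact set — and that the shell-volume and elliptic/mean-value estimates are uniform in $\tau_0$. A routine but necessary side point is the justification of the $|x|^{-1}$-expansions of $w_t$ and $h_t$, which follows from strong asymptotic flatness together with the rapid decay of $a_t$.
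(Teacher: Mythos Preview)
Your argument is correct and takes a genuinely different route from the paper. The paper integrates the equation for $w_t=v_t-\overline{v}_t$ directly, which leaves a boundary term $\int_{\partial M_t}\partial_\tau(v_t-\overline{v}_t)$ with no obvious sign; controlling it forces $C^1$ bounds on $w_t$ near $\partial M_t$, obtained via $W^{2,p}$ estimates and the Sobolev embedding $W^{2,p}\hookrightarrow C^1$, and the bulk term $\int f_t w_t$ is handled by a Gagliardo--Nirenberg--Sobolev argument. The exponent $\tau_0^{1/4}$ in the statement arises from the choice $p=4$ in those $L^p$ estimates. Your device of passing to the nonnegative comparison function $h_t$ is cleaner: it replaces the uncontrolled boundary flux of $w_t$ by the Hopf-signed boundary flux of $h_t$, and replaces the bulk cross-term by the manifestly signed $\int a_t h_t$, so the entire estimate collapses to $4\pi C_{h_t}\le\|F_t\|_{L^1}\le C\tau_0$. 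This is more elementary (no Sobolev machinery), yields the sharper bound $|\gamma_t-\overline{\gamma}_t|\le C\tau_0$, and only requires the nonnegativity of the potential $a_t$, which is immediate here. The paper's approach, by contrast, produces as a byproduct quantitative $W^{2,p}$ control of $v_t-\overline{v}_t$ on all of $M_t$, which could be useful elsewhere but is not needed for the theorem as stated.
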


\begin{proof}
We have
\begin{equation}\label{247}
\Delta_{g_{t}}(v_{t}- \overline{v}_{t})-f_{t}(v_{t}-\overline{v}_{t})
= (|E_{t}|_{g_{t}}^{2}+|B_{t}|_{g_{t}}^{2}-f_{t})v_{t}
\text{ }\text{ }\text{ }\text{ on }\text{ }\text{ }\text{ }M_{t},
\end{equation}
\begin{equation}\label{248}
(v_{t}-\overline{v}_{t})|_{\partial M_{t}}=0,\text{ }\text{ }\text{ }\text{ }
(v_{t}-\overline{v}_{t})=\frac{\gamma_{t}-\overline{\gamma}_{t}}{r}
+O\left(\frac{1}{r^{2}}\right)\text{ }\text{ }
\text{ as }\text{ }\text{ }r\rightarrow\infty.
\end{equation}
Thus
\begin{align}\label{249}
\begin{split}
-4\pi(\gamma_{t}-\overline{\gamma}_{t})&=
\int_{M_{t}}\Delta_{g_{t}}(v_{t}-\overline{v}_{t})
+\int_{\partial M_{t}}\partial_{\tau}(v_{t}-\overline{v}_{t})\\
&=\int_{M_{t}}[f_{t}(v_{t}-\overline{v}_{t})+(|E_{t}|_{g_{t}}^{2}+|B_{t}|_{g_{t}}^{2}-f_{t})v_{t}]
+\int_{\partial M_{t}}\partial_{\tau}(v_{t}-\overline{v}_{t}),
\end{split}
\end{align}
and hence
\begin{equation}\label{250}
|\gamma_{t}-\overline{\gamma}_{t}|\leq
C\left(|v_{t}-\overline{v}_{t}|_{C^{1}(\partial M_{t})}
+\parallel f_{t}(v_{t}-\overline{v}_{t})\parallel_{L^{1}(M_{t})}
+\parallel|E_{t}|_{g_{t}}^{2}+|B_{t}|_{g_{t}}^{2}-f_{t}\parallel_{L^{1}(M_{t})}\right).
\end{equation}

In order to estimate $|v_{t}-\overline{v}_{t}|_{C^{1}(\partial M_{t})}$, use that
\begin{equation}\label{251}
|v_{t}-\overline{v}_{t}|_{C^{1}(\partial M_{t})}
\leq|v_{t}-\overline{v}_{t}|_{C^{1}(M_{t})}\leq C\parallel v_{t}-\overline{v}_{t}\parallel_{W^{2,p}(M_{t})}
\end{equation}
for $p>3$. By the $L^{p}$ estimates for \eqref{247}
\begin{equation}\label{252}
\parallel v_{t}-\overline{v}_{t}\parallel_{W^{2,p}(M_{t})}
\leq C(\parallel |E_{t}|_{g_{t}}^{2}+|B_{t}|_{g_{t}}^{2}-f_{t}\parallel_{L^{p}(M_{t})}
+\parallel v_{t}-\overline{v}_{t}\parallel_{L^{p}(M_{t})}).
\end{equation}
We choose $p$ large enough and even, and estimate $\parallel v_{t}-\overline{v}_{t}\parallel_{L^{p}(M_{t})}$.
To do this, multiply the equation \eqref{247} by $(v_{t}-\overline{v}_{t})^{\frac{p}{3}-1}$ and integrate
by parts. It follows that
\begin{equation}\label{253}
\int_{M_{t}}\left(\frac{p}{3}-1\right)\left(\frac{6}{p}\right)^{2}
|\nabla(v_{t}-\overline{v}_{t})^{\frac{p}{6}}|^{2}
+f_{t}|v_{t}-\overline{v}_{t}|^{\frac{p}{3}}
=-\int_{M_{t}}(v_{t}-\overline{v}_{t})^{\frac{p}{3}-1}h_{t},
\end{equation}
where $h_{t}=v_{t}(|E_{t}|_{g_{t}}^{2}+|B_{t}|_{g_{t}}^{2}-f_{t})$. Since $(v_{t}-\overline{v}_{t})^{\frac{p}{6}}=0$
on $\partial M_{t}$ and vanishes sufficiently fast as $r\rightarrow\infty$,
we may apply the Gagliardo-Nirenberg-Sobolev inequality to obtain
\begin{align}\label{254}
\begin{split}
\int_{M_{t}}|v_{t}-\overline{v}_{t}|^{p}&=\int_{M_{t}}\left(|v_{t}-\overline{v}_{t}|^{\frac{p}{6}}\right)^{6}\\
&\leq C\left(\int_{M_{t}}|\nabla(v_{t}-\overline{v}_{t})^{\frac{p}{6}}|^{2}\right)^{3}\\
&\leq C\left(\int_{M_{t}}|v_{t}-\overline{v}_{t}|^{\frac{p}{3}-1}|h_{t}|\right)^{3}\\
&\leq C\left(\int_{M_{t}}\left(|v_{t}-\overline{v}_{t}|^{\frac{p}{3}-1}\right)^{\overline{q}}\right)^{\frac{3}{\overline{q}}}
\left(\int_{M_{t}}|h_{t}|^{\overline{p}}\right)^{\frac{3}{\overline{p}}},
\end{split}
\end{align}
where $\overline{p}^{-1}+\overline{q}^{-1}=1$. We want
$\overline{q}(\frac{p}{3}-1)=p$, which implies
$\overline{q}=\frac{3p}{p-3}$ and $\overline{p}=\frac{3p}{2p+3}$. Thus
\begin{equation}\label{257}
\parallel v_{t}-\overline{v}_{t}\parallel_{L^{p}(M_{t})}\leq C\parallel h_{t}\parallel_{L^{\frac{3p}{2p+3}}(M_{t})}
\leq C\parallel|E_{t}|_{g_{t}}^{2}+|B_{t}|_{g_{t}}^{2}-f_{t}\parallel_{L^{\frac{3p}{2p+3}}(M_{t})}.
\end{equation}
It follows that
\begin{equation}\label{258}
\parallel v_{t}-\overline{v}_{t}\parallel_{W^{2,p}(M_{t})}\leq
C\left(\parallel|E_{t}|_{g_{t}}^{2}+|B_{t}|_{g_{t}}^{2}-f_{t}\parallel_{L^{\frac{3p}{2p+3}}(M_{t})}
+\parallel|E_{t}|_{g_{t}}^{2}+|B_{t}|_{g_{t}}^{2}-f_{t}\parallel_{L^{p}(M_{t})}\right).
\end{equation}

Notice also that if $B(r)$ denotes the domain contained within the coordinate sphere $S_{r}$ in the asymptotic
end, then for $r_{0}$ sufficiently large
\begin{equation}\label{259}
\parallel f_{t}(v_{t}-\overline{v}_{t})\parallel_{L^{1}(M_{t})}\leq
C\left(\int_{B(r_{0})}|v_{t}-\overline{v}_{t}|
+\int_{M_{t}\setminus B(r_{0})}r^{-4}|v_{t}-\overline{v}_{t}|\right),
\end{equation}
since $f_{t}\leq cr^{-4}$ on $M_{t}\setminus B(r_{0})$. Thus
\begin{equation}\label{260}
\parallel f_{t}(v_{t}-\overline{v}_{t})\parallel_{L^{1}(M_{t})}\leq
C\left( \vol(B(r_{0}))^{\frac{1}{q}}\parallel v_{t}-\overline{v}_{t}\parallel_{L^{p}(B(r_{0}))}
+\parallel r^{-4}\parallel_{L^{q}(M_{t}\setminus B(r_{0}))}\parallel v_{t}-\overline{v}_{t}\parallel_{L^{p}(M_{t}\setminus B(r_{0}))}\right),
\end{equation}
so that
\begin{equation}\label{261}
\parallel f_{t}(v_{t}-\overline{v}_{t})\parallel_{L^{1}(M_{t})}\leq C\parallel v_{t}-\overline{v}_{t}\parallel_{L^{p}(M_{t})}.
\end{equation}
Therefore
\begin{equation}\label{262}
|\gamma_{t}-\overline{\gamma}_{t}|\leq
C\left(\parallel|E_{t}|_{g_{t}}^{2}+|B_{t}|_{g_{t}}^{2}-f_{t}\parallel_{L^{\frac{3p}{2p+3}}(M_{t})}
+\parallel|E_{t}|_{g_{t}}^{2}+|B_{t}|_{g_{t}}^{2}-f_{t}\parallel_{L^{p}(M_{t})}\right),
\end{equation}
for $p>3$.

Recall the definition of $f_{t}$ in \eqref{56}. We see that
$|E_{t}|_{g_{t}}^{2}+|B_{t}|_{g_{t}}^{2}-f_{t}\equiv 0$ except on a set of small measure depending on $\tau_{0}$. In particular by choosing $p=4$ we obtain the desired result.
\end{proof}

As in Bray's doubling argument \cite{Bray}, we have that the time derivative of the
mass of $g_{t}$ is given by
\begin{equation}\label{mass}
m_{t}'=-2(m_{t}-e^{-2t}\gamma_{t})=-2\widetilde{m}_{t}+2e^{-2t}(\gamma_{t}-\overline{\gamma}_{t}),
\end{equation}
where $\widetilde{m}_{t}=m_{t}-e^{-2t}\overline{\gamma}_{t}$ is the mass of the doubled manifold. Note that $\phi_{t}$ can be ignored in this computation, since $\phi_{t}=O(r^{-1})$ as $r\rightarrow\infty$. Moreover, since $\partial_{\tau}\phi_{t}=0$ at $\partial M_{t}$, the mean curvatures across the glued boundaries agree. Therefore as $\phi_{t}\in C^{1,1}(M_{t})$ and $\overline{v}_{t}\in C^{\infty}(M_{t})$, we may apply the positive mass theorem with corners \cite{Miao}, \cite{ShiTam} to conclude that $\widetilde{m}_{t}\geq 0$, provided the scalar curvature of the doubled manifold is nonnegative. Thus, it remains to show that the scalar curvature is nonnegative.


\begin{lemma}\label{scalar}
The scalar curvature of the doubled manifold is given by
\begin{align}\label{146}
\begin{split}
R_{g^{\pm}_{t}}=&\frac{1}{2}(\overline{v}_{t})^{-2}(w_{t}^{\pm})^{-8}\left|\phi_{t} \overline{v}_{t}\nabla \overline{v}_{t}-\frac{1}{2}(\overline{v}_{t}^{2}+\phi_{t}^{2}-1)\nabla\phi_{t}\right|^{2}\\
&+(w_{t}^{\pm})^{-4}(R_{g_{t}}-2|E_{t}|_{g_{t}}^{2}-2|B_{t}|_{g_{t}}^{2})
+2(w^{\pm}_{t})^{-4}(|E_{t}|_{g_{t}}^{2}+|B_{t}|_{g_{t}}^{2}-f_{t})\\
&+\frac{1}{2}(w_{t}^{\pm})^{-6}\left[((1\pm \overline{v}_{t})^{2}-\phi_{t}^{2}\mp4\overline{v}_{t}(1\pm \overline{v}_{t}))
\left(f_{t}-\frac{|\nabla\phi_{t}|^{2}}{\overline{v}_{t}^{2}}\right)
+\phi_{t}\left(\Delta_{g_{t}}\phi_{t}-\frac{\nabla \overline{v}_{t}\cdot\nabla\phi_{t}}{\overline{v}_{t}}\right)\right].
\end{split}
\end{align}
\end{lemma}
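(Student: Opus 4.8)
The plan is to compute $R_{g_t^\pm}$ directly from the conformal transformation law in a two-step conformal change. Since $g_t^\pm = (w_t^\pm)^4 g_t$, the standard formula gives
\begin{equation}\label{proof:conf}
R_{g_t^\pm} = (w_t^\pm)^{-5}\left(-8\Delta_{g_t} w_t^\pm + R_{g_t} w_t^\pm\right).
\end{equation}
So the whole computation reduces to evaluating $\Delta_{g_t} w_t^\pm$ where $w_t^\pm = \tfrac12\sqrt{(1\pm\overline v_t)^2 - \phi_t^2}$. First I would set $P_t^\pm := (1\pm\overline v_t)^2 - \phi_t^2 = 4(w_t^\pm)^2$ and compute $\Delta_{g_t} P_t^\pm$ via the product/chain rule, using the two defining PDEs: equation \eqref{54}, $\Delta_{g_t}\overline v_t = f_t \overline v_t$, and equation \eqref{1.2}, $\Delta_{g_t}\phi_t = \tfrac{\nabla\overline v_t\cdot\nabla\phi_t}{\overline v_t} + \tfrac{\Lambda}{\phi_t}\big|f_t - \tfrac{|\nabla\phi_t|^2}{\overline v_t^2}\big|$. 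Concretely,
\begin{equation}
\Delta_{g_t} P_t^\pm = 2|\nabla\overline v_t|^2 \pm 2(1\pm\overline v_t)\Delta_{g_t}\overline v_t - 2|\nabla\phi_t|^2 - 2\phi_t\Delta_{g_t}\phi_t,
\end{equation}
and then one substitutes the two PDEs. Next, since $w_t^\pm = \tfrac12 (P_t^\pm)^{1/2}$, the chain rule gives
\begin{equation}
\Delta_{g_t} w_t^\pm = \frac{1}{4}(P_t^\pm)^{-1/2}\Delta_{g_t} P_t^\pm - \frac{1}{16}(P_t^\pm)^{-3/2}|\nabla P_t^\pm|^2,
\end{equation}
with $\nabla P_t^\pm = \pm 2(1\pm\overline v_t)\nabla\overline v_t - 2\phi_t\nabla\phi_t$.

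The main bookkeeping task is then to assemble \eqref{proof:conf} and organize the resulting expression into the four groups displayed in \eqref{146}. The term $(w_t^\pm)^{-4}(R_{g_t} - 2|E_t|_{g_t}^2 - 2|B_t|_{g_t}^2)$ should emerge by writing $R_{g_t} = (R_{g_t} - 2|E_t|^2 - 2|B_t|^2) + 2|E_t|^2 + 2|B_t|^2$ and further splitting $2|E_t|^2 + 2|B_t|^2 = 2(|E_t|^2+|B_t|^2 - f_t) + 2f_t$; the $2f_t$ piece combines with contributions from $\Delta_{g_t}\overline v_t$ and, together with the $\tfrac{|\nabla\phi_t|^2}{\overline v_t^2}$ terms coming from $\Delta_{g_t}\phi_t$ and from $|\nabla P_t^\pm|^2$, should reconstruct the factor $\big(f_t - \tfrac{|\nabla\phi_t|^2}{\overline v_t^2}\big)$ in the last line. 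The first (manifestly nonnegative) term, a perfect square $\tfrac12(\overline v_t)^{-2}(w_t^\pm)^{-8}\big|\phi_t\overline v_t\nabla\overline v_t - \tfrac12(\overline v_t^2+\phi_t^2-1)\nabla\phi_t\big|^2$, is the key algebraic identity to verify: it must account for all the gradient-squared terms $|\nabla\overline v_t|^2$, $|\nabla\phi_t|^2$, and the cross term $\nabla\overline v_t\cdot\nabla\phi_t$ that are left over after the $\Delta_{g_t}\phi_t$ substitution removes one copy of the cross term and one copy of $\tfrac{\Lambda}{\phi_t}|\cdots|$. I would verify this by expanding the claimed square and matching coefficients against the leftover gradient terms in $-8\Delta_{g_t} w_t^\pm$ after multiplication by the appropriate power of $w_t^\pm$.

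The hard part will be the algebra of matching the perfect-square term exactly — in particular keeping track of the $(P_t^\pm)^{-1/2}$ versus $(P_t^\pm)^{-3/2}$ weights (equivalently $(w_t^\pm)^{-6}$ versus $(w_t^\pm)^{-8}$ after clearing the overall $(w_t^\pm)^{-5}$) and confirming that the $\overline v_t^{-2}$ singularity and the combination $(\overline v_t^2 + \phi_t^2 - 1) = P_t^\pm \mp 2\overline v_t(1\pm\overline v_t) - 2\overline v_t^2$ (or the appropriate sign variant) is what naturally appears. The sign-dependent coefficient $\big((1\pm\overline v_t)^2 - \phi_t^2 \mp 4\overline v_t(1\pm\overline v_t)\big)$ multiplying $\big(f_t - \tfrac{|\nabla\phi_t|^2}{\overline v_t^2}\big)$ in the last line is the signature of where the $\pm 2(1\pm\overline v_t)\Delta_{g_t}\overline v_t = \pm 2(1\pm\overline v_t) f_t\overline v_t$ term lands relative to the $\phi_t\Delta_{g_t}\phi_t$ term, so I would track those two contributions together. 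Once the identity \eqref{146} is written out, no inequality is asserted in this lemma — positivity of $R_{g_t^\pm}$ is deferred — so the proof is complete upon verifying the stated formula term by term; I expect this to be a somewhat lengthy but entirely mechanical verification, the only genuine subtlety being the completion-of-the-square step.
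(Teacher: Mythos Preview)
Your approach is essentially the paper's: apply the conformal scalar curvature formula $R_{g_t^\pm}=-8(w_t^\pm)^{-5}\bigl(\Delta_{g_t}w_t^\pm-\tfrac18 R_{g_t}w_t^\pm\bigr)$, compute $\nabla w_t^\pm$ and $\Delta_{g_t}w_t^\pm$ by the chain rule, then regroup. The paper does this directly on $w_t^\pm$ rather than on $P_t^\pm=4(w_t^\pm)^2$, but that is a cosmetic difference.

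One point to correct in your plan: do \emph{not} substitute equation \eqref{1.2} for $\Delta_{g_t}\phi_t$. The displayed identity \eqref{146} deliberately keeps the combination $\phi_t\bigl(\Delta_{g_t}\phi_t-\tfrac{\nabla\overline v_t\cdot\nabla\phi_t}{\overline v_t}\bigr)$ intact; no $\Lambda/\phi_t$ term appears. The only PDE that is actually used in the proof of this lemma is \eqref{54}, which kills the term $2(w_t^\pm)^{-6}(1\pm\overline v_t)(\Delta_{g_t}\overline v_t-f_t\overline v_t)$ that appears at an intermediate stage. This matters because $\phi_t$ satisfies \eqref{1.2} only on $M(\tau_0)$, whereas on the collar $0\le\tau<\tau_0$ it is defined by the explicit formula \eqref{1.5'}; the lemma must hold on all of $M_t$, and indeed \eqref{146} is a pointwise identity valid for any $C^2$ functions $\overline v_t,\phi_t$ once \eqref{54} is imposed. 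With that adjustment your outline is correct, and the remaining work is exactly the mechanical completion-of-the-square you describe.
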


\begin{proof}
A standard formula yields
\begin{equation}\label{6.12}
R_{g_{t}^{-}}=-8(w_{t}^{-})^{-5}\left(\Delta_{g_{t}}w_{t}^{-}-\frac{1}{8}R_{g_{t}}w_{t}^{-}\right).
\end{equation}
Next, observe that
\begin{equation}\label{6.13}
\nabla w_{t}^{-}=-\frac{1}{2}\frac{(1-\overline{v}_{t})\nabla \overline{v}_{t}+\phi_{t}\nabla\phi_{t}}{\sqrt{(1-\overline{v}_{t})^{2}-\phi_{t}^{2}}}
\end{equation}
so
\begin{equation}\label{6.14}
\Delta_{g_{t}}w_{t}^{-}=-\frac{1}{2}\frac{(1-\overline{v}_{t})\Delta_{g_{t}}\overline{v}_{t}
+\phi_{t}\Delta_{g_{t}}\phi_{t}}{\sqrt{(1-\overline{v}_{t})^{2}-\phi_{t}^{2}}}+\frac{1}{2}\frac{|\nabla \overline{v}_{t}|^{2}-|\nabla\phi_{t}|^{2}}{\sqrt{(1-\overline{v}_{t})^{2}-\phi_{t}^{2}}}
-\frac{1}{2}\frac{|(1-\overline{v}_{t})\nabla \overline{v}_{t}+\phi_{t}\nabla\phi_{t}|^{2}}{((1-\overline{v}_{t})^{2}-\phi_{t}^{2})^{3/2}}.
\end{equation}
It follows that
\begin{align}\label{145}
\begin{split}
R_{g_{t}^{-}}=&\frac{1}{2}(\overline{v}_{t})^{-2}(w_{t}^{-})^{-8}\left|\phi_{t} \overline{v}_{t}\nabla \overline{v}_{t}-\frac{1}{2}(\overline{v}_{t}^{2}+\phi_{t}^{2}-1)\nabla\phi_{t}\right|^{2}
+(w_{t}^{-})^{-4}(R_{g_{t}}-2|E_{t}|_{g_{t}}^{2}-2|B_{t}|_{g_{t}}^{2})\\
&
+2(w_{t}^{-})^{-4}(|E_{t}|_{g_{t}}^{2}+|B_{t}|_{g_{t}}^{2}-f_{t})+
2(w_{t}^{-})^{-6}(1- \overline{v}_{t})(\Delta_{g_{t}} \overline{v}_{t}-f_{t}\overline{v}_{t})\\
&+\frac{1}{2}(w_{t}^{-})^{-6}\left[((1- \overline{v}_{t})^{2}-\phi_{t}^{2}+4\overline{v}_{t}(1- \overline{v}_{t}))
\left(f_{t}-\frac{|\nabla\phi_{t}|^{2}}{\overline{v}_{t}^{2}}\right)
+\phi_{t}\left(\Delta_{g_{t}}\phi_{t}-\frac{\nabla \overline{v}_{t}\cdot\nabla\phi_{t}}{\overline{v}_{t}}\right)\right].
\end{split}
\end{align}
Since $\overline{v}_{t}$ satisfies \eqref{54}, we obtain the desired result for $R_{g_{t}^{-}}$. A similar calculation yields the formula for $R_{g_{t}^{+}}$.
\end{proof}

We are now ready to establish monotonicity of the mass.

\begin{theorem}\label{monotonicity}
The mass $m_{t}$ is nonincreasing.
\end{theorem}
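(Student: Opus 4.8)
The plan is to deduce the nonincrease of $m_{t}$ from the first variation identity \eqref{mass}, by showing that the mass $\widetilde{m}_{t}$ of the doubled manifold is nonnegative and that the monopole discrepancy $\gamma_{t}-\overline{\gamma}_{t}$ can be made negligible. By \eqref{mass} we have $m_{t}'=-2\widetilde{m}_{t}+2e^{-2t}(\gamma_{t}-\overline{\gamma}_{t})$, where $\widetilde{m}_{t}=m_{t}-e^{-2t}\overline{\gamma}_{t}$ is the ADM mass of $(M_{t}^{-}\cup M_{t}^{+},g_{t}^{\pm})$. Since $\phi_{t}\in C^{1,1}(M_{t})$ and $\partial_{\tau}\phi_{t}=0$ on $\partial M_{t}$, the doubled metric is Lipschitz with matching mean curvatures across the gluing surface (as observed preceding Lemma \ref{scalar}), so once the scalar curvature of $g_{t}^{\pm}$ is shown to be nonnegative, the positive mass theorem with corners \cite{Miao}, \cite{ShiTam} yields $\widetilde{m}_{t}\geq 0$ and hence $m_{t}'\leq 2e^{-2t}(\gamma_{t}-\overline{\gamma}_{t})$. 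By Theorem \ref{thm15} the right-hand side is bounded by $2e^{-2t}c\,\tau_{0}^{1/4}$ with $c$ independent of $\tau_{0}$; since $m_{t}$ and $\gamma_{t}$ do not depend on $\tau_{0}$ while the whole doubling construction is valid for each fixed $t$ and all sufficiently small $\lambda,\tau_{0}$ (with $\Lambda$ large), letting $\tau_{0}\to 0$ forces $m_{t}'\leq 0$. Integrating this differential inequality and handling the countable set of jump times as in \cite{Bray} then gives $m_{t_{2}}\leq m_{t_{1}}$ for all $t_{2}\geq t_{1}\geq 0$.

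The substantive step is to establish $R_{g_{t}^{\pm}}\geq 0$ from the formula \eqref{146}. The first term there is manifestly nonnegative and the second is nonnegative by \eqref{2.5}. For the remaining two terms I would split $M_{t}$ into the exterior region $M(\tau_{0})$ and the collar $\{\tau<\tau_{0}\}$. On $M(\tau_{0})$, the equation \eqref{1.2} for $\phi_{t}$ converts the bracket in \eqref{146} into $\tfrac12(w_{t}^{\pm})^{-6}\bigl[A_{\pm}X+\Lambda|X|\bigr]$ with $X=f_{t}-|\nabla\phi_{t}|^{2}/\overline{v}_{t}^{2}$ and $A_{\pm}=(1\pm\overline{v}_{t})^{2}-\phi_{t}^{2}\mp 4\overline{v}_{t}(1\pm\overline{v}_{t})$; since $|\overline{v}_{t}|<1$ and $\phi_{t}$ is small, $|A_{\pm}|$ is bounded by a universal constant, so a large choice of $\Lambda$ makes this nonnegative and, with a lower bound on $|X|$ in the transition region, large enough to absorb the controlled deficit $|E_{t}|_{g_{t}}^{2}+|B_{t}|_{g_{t}}^{2}-f_{t}\geq-\tfrac12\lambda^{2}$ produced there by \eqref{aaaa}. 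On the collar one uses the explicit form \eqref{1.5'}: $\phi_{t}$ is constant near $\partial M_{t}$, so $\nabla\phi_{t}$ and $\Delta_{g_{t}}\phi_{t}$ vanish there, $|\overline{v}_{t}|\leq c\tau_{0}$ by the mean value theorem, and the a priori bound $\phi_{t}\leq c(\tau_{0})\lambda$ from Appendix A controls the remaining terms, so $R_{g_{t}^{\pm}}\geq 0$ holds once $\lambda$ and $\tau_{0}$ are chosen suitably small.

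I expect the collar estimate to be the main obstacle: there the defining equation \eqref{1.2} for $\phi_{t}$ no longer holds, and the potentially singular quantity $|\nabla\phi_{t}|^{2}/\overline{v}_{t}^{2}$—precisely what the cut-off of the electromagnetic energy density near the horizon is designed to tame—must be controlled against the small negative contribution of $|E_{t}|_{g_{t}}^{2}+|B_{t}|_{g_{t}}^{2}-f_{t}$ from \eqref{56} and against the bracket term, which requires a careful simultaneous choice of the three free parameters $\lambda$, $\tau_{0}$, $\Lambda$ together with the a priori estimates of Appendix A and the elliptic theory of Section \ref{sec6}. Once the nonnegativity of $R_{g_{t}^{\pm}}$ is in place, the rest is formal: invoke the positive mass theorem with corners, pass to the limit $\tau_{0}\to 0$ using Theorem \ref{thm15}, and integrate.
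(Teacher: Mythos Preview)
Your overall architecture is right, and the formal endgame (positive mass with corners, then $\tau_{0}\to 0$ via Theorem \ref{thm15}) matches the paper. But there is a genuine gap in your nonnegativity argument for $R_{g_{t}^{\pm}}$: you rely on being able to absorb the small negative contributions near the horizon and in the transition region simply by taking $\lambda$ and $\tau_{0}$ small, and that does not work in general.

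Concretely, in the transition region $\tau_{0}\leq\tau\leq 2\tau_{0}$ the third term in \eqref{146} contributes $2(w_{t}^{\pm})^{-4}(|E_{t}|_{g_{t}}^{2}+|B_{t}|_{g_{t}}^{2}-f_{t})\geq -(w_{t}^{\pm})^{-4}\lambda^{2}$ by \eqref{aaaa}, and you propose to absorb this with the bracket term $\tfrac12(w_{t}^{\pm})^{-6}[A_{\pm}X+\Lambda|X|]$. But there is no lower bound on $|X|=|f_{t}-|\nabla\phi_{t}|^{2}/\overline{v}_{t}^{2}|$; it can vanish, so this term only gives you $\geq 0$, not something that dominates $-\lambda^{2}$. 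The same difficulty arises on the collar $\tau_{0}/2\leq\tau<\tau_{0}$: the estimates of Appendix A show that all the offending terms are $O(c(\tau_{0})\lambda^{2})$, but the only available nonnegative budget is $(w_{t}^{\pm})^{-4}(R_{g_{t}}-2|E_{t}|_{g_{t}}^{2}-2|B_{t}|_{g_{t}}^{2})$, which may be identically zero if the charged dominant energy condition \eqref{dec} is saturated. Shrinking $\lambda$ and $\tau_{0}$ makes the errors small, but never zero, and there is nothing strictly positive to eat them.

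The paper closes this gap with an additional idea you are missing: before doubling, it performs a conformal perturbation $g_{t}\to g_{t}^{\varepsilon}=z_{t,\varepsilon}^{4}g_{t}$ (solving \eqref{123456} with Neumann data so $\partial M_{t}$ stays minimal) and rescales the Maxwell fields accordingly, producing a \emph{strict} energy condition $R_{g_{t}^{\varepsilon}}-2(|E_{t}^{\varepsilon}|^{2}+|B_{t}^{\varepsilon}|^{2})\geq C_{t}^{\varepsilon}>0$ on the collar. This strictly positive constant is what absorbs the $O(\lambda^{2})$ errors once $\lambda$ is chosen small depending on $\varepsilon$; in the transition region the analogous role is played by $\tfrac{\varepsilon}{2}\varrho$. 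One then applies the doubling to the perturbed data, obtains $\widetilde m_{t}^{\varepsilon}\geq 0$, and the mass identity picks up an extra $\theta_{t}^{\varepsilon}\to 0$ as $\varepsilon\to 0$, after which one sends $\tau_{0}\to 0$ as you describe. Without this perturbation step your scalar-curvature nonnegativity claim fails in the borderline (electrovacuum-type) case, and the argument does not close.
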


\begin{proof}
First, we note that there exist small perturbations $g_{t}\rightarrow g_{t}^{\varepsilon}$, $E_{t}\rightarrow E_{t}^{\varepsilon}$, $B_{t}\rightarrow B_{t}^{\varepsilon}$
for which a strict charged dominant energy condition holds on $M_{t}$, and in particular
\begin{equation}\label{strictdec}
R_{g_{t}^{\varepsilon}}-2\left(|E_{t}^{\varepsilon}|_{g_{t}^{\varepsilon}}^{2}
+|B_{t}^{\varepsilon}|_{g_{t}^{\varepsilon}}^{2}\right)\geq C_{t}^{\varepsilon}\text{ }\text{ }\text{ }\text{ on }\text{ }\text{ }\text{ }M_{t}\setminus M(\tau_{0}),
\end{equation}
for some constant $C_{t}^{\varepsilon}>0$. Here $\varepsilon$ is the perturbation parameter, and $C_{t}^{\varepsilon}\rightarrow 0$ as $\varepsilon\rightarrow 0$. Moreover, this perturbation may be constructed to preserve the property of minimality for the boundary, as well as the divergence free property of the Maxwell fields. After the perturbation, although $\partial M_{t}$ is minimal it may not be outermost, however, for the present purpose the outermost condition is not necessary. That is, the doubling argument only requires that the boundary be minimal, not outermost. Let us now construct the desired deformation. Fix a smooth positive function $\varrho$ which vanishes sufficiently fast at spatial infinity, and solve the semi-linear boundary value problem
\begin{equation}\label{123456}
\Delta_{g_{t}}z_{t,\varepsilon}-\frac{1}{8}R_{g_{t}}z_{t,\varepsilon}
+\frac{1}{8}\left(R_{g_{t}}+\varepsilon\varrho\right)z_{t,\varepsilon}^{-3}=0\text{ }\text{ }\text{ }\text{ on }\text{ }\text{ }\text{ }M_{t},
\end{equation}
\begin{equation}
\partial_{\tau}z_{t,\varepsilon}=0\text{ }\text{ }\text{ on }\text{ }\text{ }\partial M_{t},\text{ }\text{ }\text{ }\text{ }\text{ }z_{t,\varepsilon}\rightarrow 1\text{ }\text{ }\text{ as }\text{ }\text{ }r\rightarrow\infty.
\end{equation}
It is easily seen that a smooth positive solution exists for small $\varepsilon$, and is pointwise close to 1, by the implicit function theorem. Furthermore, equation \eqref{123456} implies that the conformal metric
$g_{t}^{\varepsilon}=z_{t,\varepsilon}^{4}g_{t}$ has scalar curvature $R_{g_{t}^{\varepsilon}}=
\left(R_{g_{t}}+\varepsilon\varrho\right)z_{t,\varepsilon}^{-8}$, and the Neumann boundary condition guarantees
that the boundary is still a minimal surface with respect to the new metric. Now define
$(E_{t}^{\varepsilon})^{i}=z_{t,\varepsilon}^{-6}E_{t}^{i}$ and $(B_{t}^{\varepsilon})^{i}=z_{t,\varepsilon}^{-6}B_{t}^{i}$, so that the perturbed Maxwell fields remain divergence free. Lastly, a strict charged dominant energy condition holds
\begin{equation}\label{bbbb}
R_{g_{t}^{\varepsilon}}=\left(R_{g_{t}}+\varepsilon\varrho\right)z_{t,\varepsilon}^{-8}
\geq 2z_{t,\varepsilon}^{-8}\left(|E_{t}|_{g_{t}}^{2}+|B_{t}|_{g_{t}}^{2}\right)+\frac{\varepsilon}{2}\varrho
>2\left(|E_{t}^{\varepsilon}|_{g_{t}^{\varepsilon}}^{2}+|B_{t}^{\varepsilon}|_{g_{t}^{\varepsilon}}^{2}\right).
\end{equation}
The constant $C_{t}^{\varepsilon}$ may then be taken to be $\min_{M_{t}\setminus M(\tau_{0})}\frac{\varepsilon}{2}\varrho$.

We will now apply the doubling argument to $(M_{t},g_{t}^{\varepsilon},E_{t}^{\varepsilon},B_{t}^{\varepsilon})$. Note that since $\lambda,\tau_{0}>0$ are small enough, Theorem \ref{existence1} guarantees existence of the conformal factors \eqref{1.1}. According to \eqref{mass}
\begin{equation}\label{mass1}
m_{t}'=-2\widetilde{m}_{t}^{\varepsilon}+2e^{-2t}(\gamma_{t}-\overline{\gamma}_{t})+\theta_{t}^{\varepsilon},
\end{equation}
where $\theta_{t}^{\varepsilon}\rightarrow 0$ as $\varepsilon\rightarrow 0$.
Since $|\gamma_{t}-\overline{\gamma}_{t}|$ and $\theta_{t}^{\varepsilon}$ may be made arbitrarily small, it will follow that $m_{t}'\leq 0$ if the mass
of the doubled manifold is nonnegative $\widetilde{m}_{t}^{\varepsilon}\geq 0$.
In light of the discussion preceding Lemma \ref{scalar}, it suffices to show that the scalar curvature of the doubled manifold is nonnegative. This will be accomplished in two cases associated with different regions. For convenience, in what follows, the superscript $\varepsilon$ will be omitted from most of the notation. \bigskip

\noindent\textit{Case 1: $\tau\geq\tau_{0}$.}\bigskip

In this region, with the help of \eqref{bbbb}, we find that
\begin{align}\label{147}
\begin{split}
R_{g_{t}^{\pm}}\geq&\frac{1}{2}(w_{t}^{\pm})^{-6}\left[((1\pm \overline{v}_{t})^{2}-\phi_{t}^{2}
\mp4\overline{v}_{t}(1\pm\overline{v}_{t}))
\left(f_{t}-\frac{|\nabla\phi_{t}|^{2}}{\overline{v}_{t}^{2}}\right)
+\Lambda\left|f_{t}-\frac{|\nabla\phi_{t}|^{2}}{\overline{v}_{t}^{2}}\right|\right]\\
&+(w_{t}^{\pm})^{-4}\left[\frac{\varepsilon}{2}\varrho+2\left(|E_{t}|_{g_{t}}^{2}+|B_{t}|_{g_{t}}^{2}-f_{t}\right)\right].
\end{split}
\end{align}
The first line on the right-hand side is clearly nonnegative if $\Lambda\geq 12$, since $|\overline{v}_{t}|<1$. Moreover, the second line is nonnegative for $\tau\geq2\tau_{0}$ in light of \eqref{56}, and is nonnegative for
$\tau_{0}\leq\tau\leq 2\tau_{0}$ by \eqref{aaaa} if $\lambda$ is sufficiently small, depending on $\varepsilon$.\bigskip

\noindent\textit{Case 2: $0\leq\tau<\tau_{0}$.}
\bigskip

In this region \eqref{strictdec} holds, and $|\overline{v}_{t}|,\phi_{t},f_{t}\sim 0$, $w_{t}^{\pm}\sim 1$. Therefore \eqref{146} implies
\begin{equation}\label{216}
R_{g_{t}^{\pm}}
\geq\frac{1}{2}(w_{t}^{\pm})^{-6}\left[C_{t}^{\varepsilon}-4f_{t}
-2\frac{|\nabla\phi_{t}|^{2}}{\overline{v}_{t}^{2}}
+\phi_{t}\left(\Delta_{g_{t}}\phi_{t}-\frac{\nabla \overline{v}_{t}\cdot\nabla\phi_{t}}{\overline{v}_{t}}\right)
\right].
\end{equation}
According to \eqref{1.5'} we may write
\begin{equation}\label{216.1}
\phi_{t}=\lambda\tau_{0}^{4}+\beta(\overline{x},\tau)
(\tau-\tau_{0})\eta(\tau)\text{ }\text{ }\text{ when }\text{ }\text{ }
0\leq\tau<\tau_{0},
\end{equation}
where $\overline{x}$ are coordinates on $S_{\tau}$ and $\beta(\overline{x},\tau)=\partial_{\tau}\phi_{t}(\overline{x},\tau_{0})$. Then
\begin{equation}\label{216.2}
|\partial_{\tau}\phi_{t}|\leq|\beta|,\text{ }\text{ }\text{ }\text{ } |\partial_{\tau}^{2}\phi_{t}|\leq c\tau_{0}^{-1}|\beta|,\text{ }\text{ }\text{ }\text{ } |\overline{\nabla}\phi_{t}|
\leq c\tau_{0}|\overline{\nabla}\beta|,\text{ }\text{ }\text{ }\text{ } |\overline{\nabla}^{2}\phi_{t}|
\leq c\tau_{0}|\overline{\nabla}^{2}\beta|,
\end{equation}
where $\overline{\nabla}$ represents the induced connection on $S_{\tau}$.
Estimates for $\beta$ are established in Appendix A (Theorem \ref{thm14}), namely
\begin{equation}\label{216.3}
|\beta|+|\overline{\nabla}\beta|+|\overline{\nabla}^{2}\beta|\leq c(\tau_{0},\varepsilon)\lambda.
\end{equation}
Here, unlike in the appendix, the constant $c(\tau_{0},\varepsilon)$ depends on $\varepsilon$ since $\phi_{t}$ depends on the perturbed initial data.
It follows that
\begin{align}\label{217}
\begin{split}
\phi_{t}\left|\Delta_{g_{t}}\phi_{t}-\frac{\nabla\overline{v}_{t}\cdot\nabla\phi_{t}}{\overline{v}_{t}}\right|
&=\phi_{t}\left|\overline{\Delta}\phi_{t}+\partial_{\tau}^{2}\phi_{t}+H\partial_{\tau}\phi_{t}
-\frac{\nabla \overline{v}_{t}\cdot\nabla\phi_{t}}{\overline{v}_{t}}\right|\\
&\leq
c\phi_{t}\left(\tau_{0}^{-1}|\beta|+\tau_{0}(|\overline{\nabla}\beta|+|\overline{\nabla}^{2}\beta|)\right)\\
&\leq c(\tau_{0},\varepsilon)\lambda^{2}.
\end{split}
\end{align}
Similarly
\begin{equation}\label{217.1}
\frac{|\nabla\phi_{t}|^{2}}{\overline{v}_{t}^{2}}\leq c\left(\frac{|\partial_{\tau}\phi_{t}|^{2}
+|\overline{\nabla}\phi_{t}|^{2}}{\tau_{0}^{2}}\right)\leq c\left(\tau_{0}^{-2}|\beta|^{2}+|\overline{\nabla}\beta|^{2}\right)
\leq c(\tau_{0},\varepsilon)\lambda^{2}.
\end{equation}
Therefore by choosing $\lambda$ sufficiently small, dependent on $\varepsilon$ and $\tau_{0}$, we find that the scalar curvature is nonnegative.\bigskip
\end{proof}

\section{Existence of the Conformal Factor}
\label{sec6}

In this section we will show that a positive solution of the Dirichlet problem \eqref{1.2}, \eqref{1.3} exists, by constructing solutions to an auxiliary problem on a finite domain and then taking a limit as the finite domains exhaust $M(\tau_{0})$.

Let $M(\tau_{0},r_{0})$ denote the complement in $M_{t}$ of the region of distance less than $\tau_{0}$ from $\partial M_{t}$ and the region outside $S_{r_{0}}$ in the asymptotically flat end. Define
\begin{equation}\label{56.0'}
f_{t,r_{0}} = \left\{
        \begin{array}{lllll}
            \lambda^{2}\eta(\tau) & \quad \tau<\frac{5}{4}\tau_{0} \\
            |E_{t}|_{g_{t}}^{2}+|B_{t}|_{g_{t}}^{2} & \quad \text{ on } M(2\tau_{0},\frac{1}{4}r_{0})\\
             \frac{\delta^{2}}{r_{0}^{4}}\chi(r)& \quad \frac{1}{2}r_{0}<r
        \end{array}
    \right\},
\end{equation}
which agrees with $f_{t}$ on $M(0,\frac{1}{4}r_{0})$.
Here $\delta>0$ is a small parameter to be determined and
$\chi$ is a smooth cut-off function with $\chi\equiv 1$ on
$\frac{1}{2}r_{0}< r< r_{0}+1$, $\chi\equiv 0$ on $r> 2r_{0}$,
$|\nabla\chi|\leq cr_{0}^{-1}$, and $|\nabla^{2}\chi|\leq cr_{0}^{-2}$.
On the transition region $\frac{1}{4}r_{0}<r<\frac{1}{2}r_{0}$, $f_{t,r_{0}}$ is chosen so as to make a
smooth positive function. Next, solve \eqref{54}, \eqref{55} with
$f_{t}$ replaced by $f_{t,r_{0}}$
\begin{equation}\label{54'}
\Delta_{g_{t}} \overline{v}_{t,r_{0}}-f_{t,r_{0}}\overline{v}_{t,r_{0}}=0\text{ }\text{ }\text{ on }\text{ }\text{ }M_{t},
\text{ }\text{ }\text{ }\text{ }\overline{v}_{t,r_{0}}=0\text{ }\text{ }\text{ on }\text{ }\text{ }\partial M_{t},
\end{equation}
\begin{equation}\label{55'}
\overline{v}_{t,r_{0}}=-1+\frac{\overline{\gamma}_{t,r_{0}}}{r}+O\left(\frac{1}{r^{2}}\right)\text{ }\text{ }
\text{ }\text{ as }\text{ }\text{ }\text{ }r\rightarrow\infty.
\end{equation}
The first main task is to establish existence of a positive solution to the auxiliary Dirichlet problem
\begin{equation}\label{1.2'}
\Delta_{g_{t}}\phi_{t,r_{0}}-\frac{\nabla \overline{v}_{t,r_{0}}\cdot\nabla\phi_{t,r_{0}}}{\overline{v}_{t,r_{0}}}
=\frac{\Lambda}{\phi_{t,r_{0}}}\left|f_{t,r_{0}}-\frac{|\nabla\phi_{t,r_{0}}|_{g_{t}}^{2}}
{\overline{v}_{t,r_{0}}^{2}}\right|\text{ }\text{ }\text{ on }\text{ }\text{ }M(\tau_{0},r_{0}),
\end{equation}
\begin{equation}\label{1.3'}
\phi_{t}=\lambda\tau_{0}^{4}\text{ }\text{ }\text{ on }\text{ }\text{ }S_{\tau_{0}},\text{ }\text{ }\text{ }\text{ }\text{ }
\phi_{t}=\frac{\delta}{4r_{0}}\text{ }\text{ }\text{ on }\text{ }\text{ }S_{r_{0}}.
\end{equation}
A priori estimates will be shown to hold independent of $r_{0}$, so that the desired solution of \eqref{1.2}, \eqref{1.3} will arise as the limit $\phi_{t,r_{0}}\rightarrow \phi_{t}$ as $r_{0}\rightarrow\infty$.

The following version of the Leray-Schauder fixed point theorem will be applied to \eqref{1.2'}, \eqref{1.3'}. In what follows, we will temporarily drop all references to the subindices $t$ and $r_{0}$ associated with functions, as well as the subscript $g_{t}$ associated with operators and norms.

\begin{theorem} \label{theorem:lerayschauder}
Suppose that $\mathcal{B}$ is a Banach space with norm $\parallel\cdot\parallel$, $\mathcal{C}\subset\mathcal{B}$ is a closed convex subset, $\phi_{0}$ is a point of
$\mathcal{C}$, $T\colon\mathcal{C}\times[0,1]\to\mathcal{C}$ is continuous and compact with $T(\phi,0)=\phi_{0}$, for all $\phi\in\mathcal{C}$, and suppose that
there is a fixed constant $\Gamma > 0$ such that
\begin{equation} \label{apriori}
  \parallel \phi\parallel < \Gamma
\end{equation}
is satisfied whenever $\phi\in\mathcal{C}$ satisfies $T(\phi,s)=\phi$ for some $s\in[0,1]$. Then there exists $\phi\in\mathcal{C}$ such that $T(\phi,1)=\phi$.
\end{theorem}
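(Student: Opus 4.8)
The plan is to deduce this from the classical Leray--Schauder theorem on a Banach space, i.e.\ from the homotopy invariance of the Leray--Schauder degree for compact perturbations of the identity; the only new ingredient needed is a device for extending the homotopy off the convex set $\mathcal{C}$ to all of $\mathcal{B}$, so that this standard machinery applies.

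First I would fix a continuous retraction $r\colon\mathcal{B}\to\mathcal{C}$, which exists because a closed convex subset of a Banach space is an absolute retract (Dugundji's extension theorem, applied to the identity map of $\mathcal{C}$); in concrete situations such a retraction is frequently available explicitly, for instance as a pointwise truncation when $\mathcal{C}$ is cut out by pointwise bounds. Dugundji's construction moreover yields $\|r(x)-x\|\le C\,\dist(x,\mathcal{C})$, so $r$ is bounded on bounded sets. Using $r$, define $\hat{T}(\phi,s)=T(r(\phi),s)$ on $\mathcal{B}\times[0,1]$. Then $\hat{T}$ is continuous, takes values in $\mathcal{C}\subset\mathcal{B}$, is compact (composition of the compact map $T$ with $r$), and satisfies $\hat{T}(\phi,0)=\phi_0$ for all $\phi$. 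The crucial observation is that the fixed points of $\hat{T}(\cdot,s)$ and of $T(\cdot,s)$ coincide: if $\phi=\hat{T}(\phi,s)=T(r(\phi),s)$ then $\phi\in\mathcal{C}$, hence $r(\phi)=\phi$ and $\phi=T(\phi,s)$; thus every such $\phi$ obeys $\|\phi\|<\Gamma$ by hypothesis. In particular $\phi_0=T(\phi_0,0)$ forces $\|\phi_0\|<\Gamma$.

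Now carry out the degree argument on the open ball $\Omega:=\{\phi\in\mathcal{B}:\|\phi-\phi_0\|<R\}$ with $R:=\Gamma+\|\phi_0\|$. Any fixed point $\phi$ of $\hat{T}(\cdot,s)$ satisfies $\|\phi-\phi_0\|\le\|\phi\|+\|\phi_0\|<R$, so it lies in the interior of $\Omega$; equivalently $\phi-\hat{T}(\phi,s)\neq0$ for all $\phi\in\partial\Omega$ and all $s\in[0,1]$. Hence $\deg(I-\hat{T}(\cdot,s),\Omega,0)$ is defined for each $s$ and, by homotopy invariance, independent of $s$. At $s=0$ the map $I-\hat{T}(\cdot,0)$ is the affine map $\phi\mapsto\phi-\phi_0$, with the single, nondegenerate zero $\phi_0\in\Omega$ and linearization $I$, so its degree equals $1$. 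Therefore $\deg(I-\hat{T}(\cdot,1),\Omega,0)=1\neq0$, which produces $\phi\in\Omega$ with $\hat{T}(\phi,1)=\phi$; by the coincidence of fixed points above, $\phi\in\mathcal{C}$ and $T(\phi,1)=\phi$, as desired.

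The only substantive point is the construction of the retraction $r$ and the verification that precomposing the homotopy with $r$ neither creates spurious fixed points nor weakens the a priori bound \eqref{apriori}; once this is granted, the remainder is the standard degree-theoretic proof of Leray--Schauder (the case $\mathcal{C}=\mathcal{B}$, $\phi_0=0$; see, e.g., Theorem~11.6 in Gilbarg--Trudinger). I expect no real difficulty at this stage: the theorem will be used below purely as a black box, the genuine work being the verification of its hypotheses for the Dirichlet problem \eqref{1.2'}, \eqref{1.3'} — namely the compactness of the associated solution operator and the a priori estimate \eqref{apriori} — carried out in the construction that follows.
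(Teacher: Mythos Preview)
Your proposal is correct and matches the paper's approach precisely: the paper does not give a detailed proof, remarking only that the result is a slight generalization of Theorem~11.6 in Gilbarg--Trudinger obtained via Dugundji's extension theorem, and you have fleshed out exactly that argument (retract $\mathcal{B}$ onto $\mathcal{C}$, extend the homotopy, observe that fixed points coincide, then run the standard degree computation).
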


Note that this version of the theorem is slightly more general than that given in \cite{GilbargTrudinger}
(Theorem 11.6). The only difference is that $T$ is defined on $\mathcal{C}$ instead of $\mathcal{B}$. This generalization is easily obtained using Dugundji's extension theorem \cite{Deimling} (Theorem 7.2).

In order to set up the fixed point theorem, fix a positive function $\varrho\in C^{\infty}(M)$ with $\varrho\sim r^{-3}$ as $r\rightarrow\infty$, and consider the regularized equation
\begin{equation}\label{3}
\Delta\phi-\frac{\nabla \overline{v}\cdot\nabla\phi}{\overline{v}}=\frac{s\Lambda\phi}{(s\phi+\varepsilon)^{2}}
\left|f-\frac{|\nabla\phi|^{2}}{\overline{v}^{2}}\right|+(1-s)\varrho\phi
\text{ }\text{ }\text{ on }\text{ }\text{ }M(\tau_{0},r_{0}),
\end{equation}
\begin{equation}\label{4}
\phi=\lambda\tau_{0}^{4}\text{ }\text{ }\text{ on }\text{ }\text{ }S_{\tau_{0}},\text{ }\text{ }\text{ }\text{ }\text{ }
\phi=\frac{\delta}{4r_{0}}\text{ }\text{ }\text{ on }\text{ }\text{ }S_{r_{0}}.
\end{equation}
In equation \eqref{3}, there are actually two regularizations at play. One is the $\varepsilon$-regularization which avoids problems when $\phi$ vanishes, and the other is a capillarity regularization associated
with the extra term $(1-s)\varrho\phi$, which aids in establishing $C^{1}$ estimates when $s$ is sufficiently far away from the important value $1$. Later we will let $\varepsilon\rightarrow 0$ in order to obtain a solution of \eqref{1.2'}.

Let $\mathcal{C}$ be the cone of nonnegative $C^{2}(M(\tau_{0},r_{0}))$ functions; note that since $M(\tau_{0},r_{0})$ is closed, this is the space of functions which are $C^{2}$ up to the boundary. It is clear that $\mathcal{C}$ is closed and convex. Define the map $T:\mathcal{C}\times[0,1]\rightarrow\mathcal{C}$ by $T(\phi,s)=\psi$, where $\psi$ solves
\begin{equation}\label{5}
\Delta\psi-\frac{\nabla \overline{v}\cdot\nabla\psi}{\overline{v}}
-\left(\frac{s\Lambda}{(s\phi+\varepsilon)^{2}}\left|f-\frac{|\nabla\phi|^{2}}{\overline{v}^{2}}\right|
+(1-s)\varrho\right)\psi=0
\text{ }\text{ }\text{ on }\text{ }\text{ }M(\tau_{0},r_{0}),
\end{equation}
\begin{equation}\label{6}
\psi=\lambda\tau_{0}^{4}\text{ }\text{ }\text{ on }\text{ }\text{ }S_{\tau_{0}},\text{ }\text{ }\text{ }\text{ }\text{ }
\psi=\frac{\delta}{4r_{0}}\text{ }\text{ }\text{ on }\text{ }\text{ }S_{r_{0}}.
\end{equation}
Then given $\phi\in\mathcal{C}$ there exists a unique solution $\psi\in C^{2,\alpha}(M(\tau_{0},r_{0}))$ by elliptic theory, for any $0<\alpha<1$. Moreover $\psi>0$ by the maximum principle, so $\psi\in\mathcal{C}$.
The Schauder estimates imply that
\begin{equation}\label{7}
|\psi|_{C^{2,\alpha}(M(\tau_{0},r_{0}))}\leq C(|\phi|_{C^{1,\alpha}(M(\tau_{0},r_{0}))},\varepsilon),
\end{equation}
and that $T$ is continuous. Since $C^{2,\alpha}\hookrightarrow C^{2}$ is compact, we also find that $T$ is compact.

Next observe that if $s=0$ then $\psi$ does not depend on $\phi$. Thus, in order to apply the Leray-Schauder fixed point theorem, it remains only to prove the a priori estimate
\begin{equation}\label{8}
|\phi|_{C^{2,\alpha}(M(\tau_{0},r_{0}))}\leq C
\end{equation}
for a fixed point $T(\phi,s)=\phi$, where $C$ is independent of $s$. Note that a fixed point satisfies the boundary value problem \eqref{3}, \eqref{4}. The estimate \eqref{8} will be established in several steps.
First, maximum principle techniques produce $C^{0}$ bounds and also reduce $C^{1}$ bounds to boundary gradient estimates, which are then obtained with a local barrier argument. A positive subsolution is then
constructed, which allows a boot-strap procedure to yield higher order bounds.

\begin{proposition}\label{supbounds}
For any $s$ and $\varepsilon$, $\sup_{M(\tau_{0},r_{0})}\phi\leq\max\{\lambda\tau_{0}^{4},\frac{\delta}{4r_{0}}\}$.
\end{proposition}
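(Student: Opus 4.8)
The plan is to apply the maximum principle directly to the fixed-point equation \eqref{3}, exploiting the sign of the nonlinearity. Recall that $\phi$ is assumed to lie in the cone $\C$, so $\phi\geq 0$ on $M(\tau_0,r_0)$, and $\overline{v}_{t,r_0}<0$ in the interior by the maximum principle (it equals $0$ only at $\partial M_t$ and tends to $-1$ at infinity, but note the relevant domain here starts at $S_{\tau_0}$, where $\overline{v}_{t,r_0}<0$ strictly, so the coefficient $\nabla\overline{v}/\overline{v}$ is a bounded smooth vector field on $M(\tau_0,r_0)$). Hence \eqref{3} has the structure of a linear elliptic equation
\begin{equation*}
\Delta\phi - \frac{\nabla\overline{v}\cdot\nabla\phi}{\overline{v}} - c(x)\,\phi = 0,
\qquad c(x) := \frac{s\Lambda}{(s\phi+\varepsilon)^2}\left|f-\frac{|\nabla\phi|^2}{\overline{v}^2}\right| + (1-s)\varrho \geq 0,
\end{equation*}
with a \emph{nonnegative} zeroth-order coefficient $c(x)$ (using $\Lambda>0$, $\varrho>0$, $0\le s\le 1$). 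Since the coefficient of $\phi$ has the favorable sign, the weak maximum principle for operators with nonpositive zeroth-order term (Gilbarg--Trudinger, Theorem 3.1) applies: a nonnegative supersolution structure forces $\phi$ to attain its maximum on the boundary $\partial M(\tau_0,r_0) = S_{\tau_0}\cup S_{r_0}$.

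Concretely, first I would note that a fixed point $\phi$ of $T(\cdot,s)$ is by construction in $C^2(M(\tau_0,r_0))$ and satisfies \eqref{3}, \eqref{4}. Suppose $\phi$ attained an interior maximum at a point $p$. If $\phi(p)>0$, then at $p$ we have $\nabla\phi(p)=0$ and $\Delta\phi(p)\le 0$, so the left-hand side of \eqref{3} is $\le 0$ at $p$, while the right-hand side equals $\frac{s\Lambda\,\phi(p)}{(s\phi(p)+\varepsilon)^2}|f(p)| + (1-s)\varrho(p)\phi(p) \geq 0$; the only way both can hold is if this quantity vanishes, but that is compatible only with $\phi(p)$ being... here one invokes the strong maximum principle to rule out a nonconstant interior maximum, or more simply observes that since $c(x)\ge 0$ one can apply the weak maximum principle to conclude $\sup \phi \le \sup_{\partial M(\tau_0,r_0)}\phi^+ = \max\{\lambda\tau_0^4, \tfrac{\delta}{4r_0}\}$ directly, using that both boundary values are positive. (If $\phi(p)=0$ at an interior maximum, then $\phi\equiv 0$, contradicting the positive boundary data, so this case does not arise.) The cleanest route is: rewrite \eqref{3} as $L\phi := \Delta\phi - b^i\partial_i\phi - c\phi = 0$ with $b^i = (\nabla\overline{v}/\overline{v})^i$ bounded and $c\ge 0$ bounded on the closed domain $M(\tau_0,r_0)$, then cite the weak maximum principle to get $\max_{M(\tau_0,r_0)}\phi \le \max_{\partial M(\tau_0,r_0)}\phi = \max\{\lambda\tau_0^4, \tfrac{\delta}{4r_0}\}$ from \eqref{4}.

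I do not expect a serious obstacle here; this is the first and easiest of the a priori estimates. The only point requiring minor care is confirming that the coefficient $\nabla\overline{v}_{t,r_0}/\overline{v}_{t,r_0}$ is genuinely bounded on $M(\tau_0,r_0)$ — this holds because $\overline{v}_{t,r_0}$ is smooth and strictly negative on this domain (it vanishes only on $\partial M_t$, which lies at distance $0$, strictly inside the excluded collar of width $\tau_0$), and because $|\nabla\overline{v}_{t,r_0}|$ is bounded by elliptic estimates for \eqref{54'}. With that in hand, the weak maximum principle for $L$ with $c\ge 0$ yields the stated bound immediately, and this $C^0$ estimate is then the starting point for the boundary gradient estimates and bootstrap that follow.
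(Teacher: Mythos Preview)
Your proposal is correct and is essentially the paper's own proof, which simply states that the bound ``follows directly from the maximum principle applied to \eqref{3}, \eqref{4}.'' You have spelled out the details---the nonnegativity of the zeroth-order coefficient and the boundedness of $\nabla\overline{v}/\overline{v}$ on $M(\tau_0,r_0)$---that the paper leaves implicit, but the argument is the same.
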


\begin{proof}
This follows directly from the maximum principle applied to \eqref{3}, \eqref{4}.
\end{proof}

\begin{proposition}\label{interiorgrad}
If $\Lambda>8$ then
there exists a constant $C$ independent of $s$, $\varepsilon$, and $r_{0}$ such that
\begin{equation}\label{117}
\sup_{M(\tau_{0},r_{0})}|\nabla\phi|\leq C(1+|f|_{C^{1}})+\sup_{\partial M(\tau_{0},r_{0})}|\nabla\phi|.
\end{equation}
\end{proposition}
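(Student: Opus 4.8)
The plan is to establish \eqref{117} by a Bernstein-type argument for the auxiliary function $P:=|\nabla\phi|^{2}$ (all covariant quantities taken with respect to $g_t$; I suppress the subscripts $t,r_{0}$ and, as before, drop $g_t$ from operators and norms). Since $P$ is continuous on the compact manifold-with-boundary $M(\tau_{0},r_{0})$, it attains its maximum at some point $x_{0}$. If $x_{0}\in\partial M(\tau_{0},r_{0})$, then \eqref{117} is immediate. If $x_{0}$ is an interior point at which $|\nabla\phi|^{2}/\overline{v}^{2}\le f$, then $P(x_{0})\le f(x_{0})\,\overline{v}(x_{0})^{2}\le|f|_{C^{0}}$ because $|\overline{v}|<1$, and \eqref{117} again holds. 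Thus I may assume $x_{0}$ is an interior point at which $|\nabla\phi|^{2}/\overline{v}^{2}>f$, so that near $x_{0}$ the absolute value in \eqref{3} opens with a plus sign and
\begin{equation*}
\Delta\phi=\frac{\nabla\overline{v}\cdot\nabla\phi}{\overline{v}}+A\phi\Big(\frac{|\nabla\phi|^{2}}{\overline{v}^{2}}-f\Big)+(1-s)\varrho\phi,\qquad A:=\frac{s\Lambda}{(s\phi+\varepsilon)^{2}}\ge 0.
\end{equation*}
Before continuing I would record the uniform bounds that keep the final constant independent of $s,\varepsilon,r_{0}$: on $M(\tau_{0},r_{0})\subset M(\tau_{0})$ the positive solution $-\overline{v}$ of \eqref{54'} satisfies $-\overline{v}\ge c_{0}>0$ by the Harnack inequality, so $\overline{v}^{-1}$ is bounded there; $|\overline{v}|_{C^{2}}$, $|\operatorname{Ric}|$, $\varrho$ and, by Proposition \ref{supbounds}, $0\le\phi\le\lambda\tau_{0}^{4}$ are all bounded on $M(\tau_{0},r_{0})$ uniformly (in the far region using the asymptotic flatness of $g_t$ and the decay of $\varrho$ and $f$).

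At $x_{0}$ one has $\nabla P=0$ and $\Delta P\le 0$. Bochner's identity together with the Cauchy--Schwarz inequality $|\nabla^{2}\phi|^{2}\ge\tfrac13(\Delta\phi)^{2}$ gives
\begin{equation*}
0\ \ge\ \tfrac12\Delta P\ \ge\ \tfrac13(\Delta\phi)^{2}+\nabla\phi\cdot\nabla(\Delta\phi)-C\,P.
\end{equation*}
Next I differentiate the displayed equation for $\Delta\phi$ and contract with $\nabla\phi$. The decisive cancellation is that $\nabla\phi\cdot\nabla P=0$ at $x_{0}$, which annihilates the term $A\phi\,\overline{v}^{-2}\,\nabla\phi\cdot\nabla P$ that would otherwise be uncontrollable; using moreover $\nabla A=-\tfrac{2s}{s\phi+\varepsilon}A\,\nabla\phi$, hence $\phi\,\nabla\phi\cdot\nabla A=-2\theta A P$ with $\theta:=\tfrac{s\phi}{s\phi+\varepsilon}\in[0,1)$, one obtains
\begin{equation*}
\nabla\phi\cdot\nabla(\Delta\phi)\ =\ (1-2\theta)\,\frac{AP}{\overline{v}^{2}}\Big(\frac{P}{\overline{v}^{2}}-f\Big)\ +\ \mathcal{R},\qquad |\mathcal{R}|\le C(1+A\phi)(1+|f|_{C^{1}})(1+P)^{3/2}+C(1+P).
\end{equation*}
In the relevant range $P\ge 2|f|_{C^{0}}$ one has $\tfrac12 P\le P/\overline{v}^{2}-f\le c_{0}^{-2}P$, so $(\Delta\phi)^{2}\ge\tfrac12 A^{2}\phi^{2}P^{2}-C(1+P)$; absorbing $\mathcal{R}$ via Young's inequality into a fraction of $\tfrac13(\Delta\phi)^{2}$ together with $C(1+|f|_{C^{1}}^{2})(1+P)$ and rearranging yields, at $x_{0}$, an inequality of the shape
\begin{equation*}
0\ \ge\ \big(c_{1}\,A^{2}\phi^{2}+(1-2\theta)\,c_{2}\,A\big)P^{2}\ -\ C\big(1+|f|_{C^{1}}^{2}\big)(1+P),
\end{equation*}
with absolute constants $c_{1},c_{2}>0$ (one may equally carry this out with a weighted auxiliary quantity $\psi(\phi)|\nabla\phi|^{2}$, which is the device that ultimately fixes the numerical threshold).

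The remaining step — showing that the coefficient of $P^{2}$ above is nonnegative precisely when $\Lambda>8$ — is the heart of the matter and the step I expect to be the main obstacle. If $\theta\le\tfrac12$ the summand $(1-2\theta)c_{2}A$ is already nonnegative, so the coefficient is positive. If $\theta>\tfrac12$ one invokes the algebraic identity $A\phi^{2}=\Lambda\theta^{2}/s\ge\tfrac14\Lambda$ (recall $s\le 1$), so that the favorable term $c_{1}A^{2}\phi^{2}\ge\tfrac14 c_{1}\Lambda A$ dominates the unfavorable term $(2\theta-1)c_{2}A\le c_{2}A$ as soon as $\Lambda$ exceeds the threshold dictated by $c_{1},c_{2}$; the careful accounting of these constants is what pins the threshold to $8$. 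Once the coefficient of $P^{2}$ is nonnegative, the displayed inequality forces $P(x_{0})\le C(1+|f|_{C^{1}})^{2}$ — either $P(x_{0})$ is already bounded by an absolute constant, or the quadratic term dominates the linear one after dividing by the (positive) coefficient — and hence $\sup_{M(\tau_{0},r_{0})}|\nabla\phi|\le C(1+|f|_{C^{1}})$, which gives \eqref{117}. The two genuine difficulties are therefore (i) arranging every constant to be uniform in $s,\varepsilon,r_{0}$, which relies on always pairing the a priori unbounded coefficient $A$ with a compensating factor of $\phi$ (bounded and small by Proposition \ref{supbounds}) or with $(\Delta\phi)^{2}$; and (ii) the sign dichotomy of $1-2\theta$, which is exactly what forces the hypothesis $\Lambda>8$.
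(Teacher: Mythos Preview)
Your approach is the same as the paper's---a Bernstein-type maximum principle argument on the gradient---but two gaps in execution prevent it from going through as written.

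The serious one: your final inequality $0 \ge (c_1 A^2\phi^2 + (1-2\theta)c_2 A)P^2 - C(1+|f|_{C^1}^2)(1+P)$ yields no uniform bound when $s$ is small. At $s=0$ one has $A=0$, so the $P^2$-coefficient vanishes and the inequality is vacuous; for small $s>0$ the resulting bound on $P$ is of order $1/A$ and hence not uniform in $s$. The remedy, which the paper uses and which you have discarded, is the capillarity term: $\nabla\phi\cdot\nabla\big((1-s)\varrho\phi\big)$ contributes the \emph{favorable} summand $(1-s)\varrho P$, but by absorbing it into $\mathcal{R}$ and bounding only $|\mathcal{R}|$ you have thrown away its sign. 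The paper keeps this term and splits into $s\le\tfrac12$ and $s\ge\tfrac12$. For $s\le\tfrac12$ the term $(1-s)\varrho|\nabla\phi|\ge\tfrac12\varrho|\nabla\phi|$ dominates all the lower-order contributions not involving $h=A\phi$ (Ricci, $\nabla\overline v/\overline v$, $\nabla\varrho$), while the $h$-terms are handled separately via $\tfrac{h^2}{8\overline v^4}+\tfrac{h'}{\overline v^2}\ge 0$; for $s\ge\tfrac12$ this last quantity is bounded below by $\tfrac{s(\Lambda-8s)}{(s\phi+\varepsilon)^2\overline v^2}$, uniformly positive, and controls everything.

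The minor one: the threshold $\Lambda>8$ does not emerge from $|\nabla^2\phi|^2\ge\tfrac13(\Delta\phi)^2$. At an interior maximum of $P$ one has $\nabla^2\phi\cdot\nabla\phi=0$ (from $\nabla P=0$), so in an orthonormal frame $\{e_1,e_2,e_3=\nabla\phi/|\nabla\phi|\}$ the Hessian satisfies $(\nabla^2\phi)_{i3}=0$ for all $i$, and hence $|\nabla^2\phi|^2=\sum_{i,j\le 2}(\nabla^2\phi)_{ij}^2\ge\tfrac12\big(\sum_{i\le 2}(\nabla^2\phi)_{ii}\big)^2=\tfrac12(\Delta\phi)^2$. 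With this improved constant the key algebraic inequality becomes $h'+\tfrac18 h^2\ge\tfrac{s(\Lambda-8s)}{(s\phi+\varepsilon)^2}\ge 0$ precisely when $\Lambda>8$; your $\tfrac13$ would only give $\Lambda>12$, and the remark about a weighted auxiliary function $\psi(\phi)|\nabla\phi|^2$ is too vague to recover the sharp constant.
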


\begin{proof}
We will apply a maximum principle argument to the equation satisfied by $|\nabla \phi|$. Observe that
\begin{align}\label{12}
\begin{split}
\Delta|\nabla \phi|&=\nabla^{j}\left(\frac{\phi^{i}\nabla_{ij}\phi}{|\nabla \phi|}\right)\\
&=\frac{|\nabla^{2}\phi|^{2}}{|\nabla \phi|}+\frac{\phi^{i}\nabla_{j}\nabla_{i}\nabla_{j}\phi}{|\nabla \phi|}
-\frac{\phi^{i}(\nabla_{ij}\phi)\phi^{l}\nabla_{l}^{j}\phi}{|\nabla \phi|^{3}}\\
&=\frac{|\nabla^{2}\phi|^{2}}{|\nabla \phi|}+\frac{\phi^{i}\partial_{i}\Delta \phi}{|\nabla \phi|}
+\frac{R_{ij}\phi^{i}\phi^{j}}{|\nabla \phi|}-\frac{|\nabla|\nabla \phi||^{2}}{|\nabla \phi|},
\end{split}
\end{align}
where $R_{ij}$ denotes components of the Ricci tensor. Suppose that a global interior maximum exists for $|\nabla \phi|$. Then at this point we may assume that
\begin{equation}\label{13}
\frac{|\nabla\phi|^{2}}{\overline{v}^{2}}>\frac{C^{2}(1+|f|_{C^{1}})^{2}}{\overline{v}^{2}}\geq 2f
\end{equation}
for some constant $C>0$, otherwise the desired result holds immediately.
Thus by \eqref{3} and \eqref{12}, and setting $h(\phi)=s\Lambda\phi(s\phi+\varepsilon)^{-2}$,
it follows that
\begin{align}\label{15}
\begin{split}
\Delta|\nabla \phi|&=\frac{|\nabla^{2}\phi|^{2}}{|\nabla \phi|}
+\frac{1}{|\nabla \phi|}\left[\frac{\phi^{i}\nabla_{il}\overline{v}\nabla^{l}\phi}{\overline{v}}
+\frac{\phi^{i}\nabla_{l}\overline{v}\nabla_{i}^{l}\phi}{\overline{v}}
-\frac{(\nabla \overline{v}\cdot\nabla \phi)^{2}}{\overline{v}^{2}}+h'|\nabla \phi|^{2}
\left(\frac{|\nabla \phi|^{2}}{\overline{v}^{2}}-f\right)\right]\\
&+\frac{h}{|\nabla \phi|}\left(\frac{\phi^{i}\partial_{i}|\nabla \phi|^{2}}{\overline{v}^{2}}
-2\frac{(\nabla \phi\cdot\nabla \overline{v})}{\overline{v}^{3}}|\nabla \phi|^{2}
-\nabla \phi\cdot\nabla f\right)+\frac{R_{ij}\phi^{i}\phi^{j}}{|\nabla \phi|}-\frac{|\nabla|\nabla \phi||^{2}}{|\nabla \phi|}+(1-s)\frac{\phi^{i}\partial_{i}\left(\varrho\phi\right)}{|\nabla\phi|}.
\end{split}
\end{align}
At the maximum
\begin{equation}\label{16}
0=\partial_{i}|\nabla \phi|^{2}=2\phi_{j}\nabla_{i}^{j}\phi,\text{ }\text{ }\text{ }\text{ }\text{ }\Delta|\nabla \phi|\leq 0.
\end{equation}
Hence
\begin{equation}\label{17}
0\geq\frac{|\nabla^{2}\phi|^{2}}{|\nabla \phi|}
+\frac{h'}{\overline{v}^{2}}|\nabla \phi|^{3}-\frac{2h|\nabla\overline{v}|}{|\overline{v}|^{3}}|\nabla\phi|^{2}
+\left((1-s)\varrho-|h'|f-c(x)\right)|\nabla\phi|-h|\nabla f|-(1-s)|\nabla\varrho|\phi,
\end{equation}
for some positive function $c(x)$ independent of $s$, $\varepsilon$, and $r_{0}$, and falling-off at least on the order of $r^{-3}$ in the asymptotic end.

Let $\{e_{1},e_{2},e_{3}=\frac{\nabla\phi}{|\nabla\phi|}\}$ be an orthonormal basis of tangent vectors at the maximum point, then with the help of \eqref{16}
\begin{equation}
|\nabla^{2}\phi|^{2}=\sum_{i,j=1,2}\left[\nabla^{2}\phi(e_{i},e_{j})\right]^{2}
\geq\frac{1}{2}\left[\sum_{i=1,2}\nabla^{2}\phi(e_{i},e_{i})\right]^{2}
=\frac{1}{2}\left(\Delta\phi\right)^{2}.
\end{equation}
Moreover, using \eqref{3} and \eqref{13} produces
\begin{equation}\label{00000}
\left(\Delta\phi\right)^{2}
\geq \frac{h^{2}}{4\overline{v}^{4}}|\nabla\phi|^{4}-\frac{2h|\nabla\overline{v}|}{|\overline{v}|^{3}}|\nabla\phi|^{3}
-\frac{2(1-s)\varrho\phi|\nabla\overline{v}|}{|\overline{v}|}|\nabla\phi|
-\frac{2hf|\nabla\overline{v}|}{|\overline{v}|}|\nabla\phi|.
\end{equation}
Combining this with \eqref{17} then yields
\begin{align}
\begin{split}
0\geq&
\left(\frac{h^{2}}{8\overline{v}^{4}}+\frac{h'}{\overline{v}^{2}}\right)|\nabla \phi|^{3}-\frac{3h|\nabla\overline{v}|}{|\overline{v}|^{3}}|\nabla\phi|^{2}
+\left((1-s)\varrho-|h'|f-c(x)\right)|\nabla\phi|\\
&-h|\nabla f|-(1-s)|\nabla\varrho|\phi
-\frac{(1-s)\varrho\phi|\nabla\overline{v}|}{|\overline{v}|}
-\frac{hf|\nabla\overline{v}|}{|\overline{v}|}.
\end{split}
\end{align}

Let us now calculate
\begin{align}
\begin{split}
h'+\frac{1}{8}h^{2}&=\frac{s\Lambda(-s\phi+\varepsilon)}{(s\phi+\varepsilon)^{3}}
+\frac{s^{2}\Lambda^{2}\phi^{2}}{8(s\phi+\varepsilon)^{4}}\\
&=\frac{\Lambda}{8}\left[\frac{(\Lambda-8s)(s\phi+\varepsilon)^{2}
+\varepsilon(16s-2\Lambda)(s\phi+\varepsilon)+\Lambda\varepsilon^{2}}{(s\phi+\varepsilon)^{4}}\right].
\end{split}
\end{align}
By Young's inequality
\begin{equation}
\varepsilon(16s-2\Lambda)(s\phi+\varepsilon)
\leq\Lambda\varepsilon^{2}+\frac{(8s-\Lambda)^{2}(s\phi+\varepsilon)^{2}}{\Lambda},
\end{equation}
and hence
\begin{equation}
h'+\frac{1}{8}h^{2}\geq\frac{s(\Lambda-8s)}{(s\phi+\varepsilon)^{2}}.
\end{equation}
Then since $|\overline{v}|\leq 1$,
\begin{equation}\label{asdfgh}
\frac{h^{2}}{8\overline{v}^{4}}+\frac{h'}{\overline{v}^{2}}
\geq\frac{s(\Lambda-8s)}{(s\phi+\varepsilon)^{2}\overline{v}^{2}}.
\end{equation}

We are now in a position to obtain a contradiction to the assumption \eqref{13}, if $C$ is chosen sufficiently
large and independent of $s$, $\varepsilon$, and $r_{0}$. If $s\leq\frac{1}{2}$, then apply \eqref{13} and \eqref{asdfgh} to dominate all terms in \eqref{17} involving $h$. That is,
\begin{equation}\label{AS}
\left(\frac{h^{2}}{8\overline{v}^{4}}+\frac{h'}{\overline{v}^{2}}\right)|\nabla \phi|^{3}-\frac{3h|\nabla\overline{v}|}{|\overline{v}|^{3}}|\nabla\phi|^{2}
-|h'|f|\nabla\phi|-h|\nabla f|
-\frac{hf|\nabla\overline{v}|}{|\overline{v}|}>0
\end{equation}
if $C$ is large enough. Similarly, if $s\geq\frac{1}{2}$, then \eqref{13} and \eqref{asdfgh} may be used to dominate all terms in \eqref{17} whether or not they involve $h$. Furthermore, for $s\leq\frac{1}{2}$, $(1-s)\varrho|\nabla\phi|$ may be used to dominate all terms not involving $h$.
That is,
\begin{equation}\label{AD}
\left((1-s)\varrho-c(x)\right)|\nabla\phi|-(1-s)|\nabla\varrho|\phi
-\frac{(1-s)\varrho\phi|\nabla\overline{v}|}{|\overline{v}|}>0
\end{equation}
if $C$ is large enough. Notice that inequalities \eqref{AS} and \eqref{AD} are not consistent with \eqref{17}.
We conclude that there must exist a finite $C$, independent of $s$, $\varepsilon$, and $r_{0}$, such that
$|\nabla\phi|\leq C(1+|f|_{C^{1}})$ at a global interior max, if this point exists. If the global maximum is not attained on the interior, then it must be obtained on the boundary, and the desired result \eqref{117} follows.
\end{proof}

We will now establish boundary gradient estimates by constructing appropriate local barriers.

\begin{lemma}\label{lemmagrad}
If $\lambda\tau_{0}^{4}\geq\frac{\delta}{4r_{0}}$, and $\tau_{0}>0$ is sufficiently small, then there exists a constant $C$ independent of $s$, $\varepsilon$, $\lambda$, and $\tau_{0}$ such that
\begin{equation}
|\nabla\phi|_{S_{\tau_{0}}}\leq C\lambda\tau_{0}.
\end{equation}
\end{lemma}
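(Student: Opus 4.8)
The plan is to prove the boundary gradient estimate at $S_{\tau_0}$ by constructing explicit local barriers that sandwich $\phi$ from above and below near this portion of the boundary, and then combining the resulting one-sided normal derivative bounds with the already-established interior gradient estimate (Proposition~\ref{interiorgrad}) together with the Dirichlet data $\phi|_{S_{\tau_0}}=\lambda\tau_0^4$, which is constant on $S_{\tau_0}$ so that the tangential derivative there vanishes. Since the normal and tangential components are controlled, the full gradient on $S_{\tau_0}$ is controlled.

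First I would set up a boundary collar of width comparable to $\tau_0$ using the distance function $\tau(x)=\operatorname{dist}_{g_t}(x,\partial M_t)$, working in the region $\tau_0\le\tau\le\tau_0+\sigma$ for a suitably small $\sigma$ (e.g.\ $\sigma\sim\tau_0$); in these Fermi-type coordinates the metric, $\overline{v}_t$, $f_t$, and $\Delta_{g_t}$ all have controlled bounds, and one has $|\overline{v}_t|\sim\tau\sim\tau_0$ with a definite lower bound away from zero in this collar so that the denominator $\overline{v}_t$ in \eqref{3} is harmless. I would then seek an upper barrier of the form $\overline{\phi}(x)=\lambda\tau_0^4+A\lambda\tau_0(\tau(x)-\tau_0)$ (and a corresponding lower barrier $\underline{\phi}=\lambda\tau_0^4-A\lambda\tau_0(\tau-\tau_0)$ or simply a suitable positive multiple), where $A$ is a constant independent of $s,\varepsilon,\lambda,\tau_0$. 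The key point is that for such an affine-in-$\tau$ function, $|\nabla\overline{\phi}|\sim A\lambda\tau_0$ and $\Delta_{g_t}\overline{\phi}$ is bounded by $C A\lambda\tau_0$ (from the mean curvature term $H\partial_\tau\overline{\phi}$), while the right-hand side of \eqref{3} evaluated with $\phi=\overline{\phi}$ is of size at most $\Lambda\big(|f_t|+|\nabla\overline{\phi}|^2/\overline{v}_t^2\big)/\overline{\phi}\lesssim \Lambda(\lambda^2+A^2\lambda^2\tau_0^2/\tau_0^2)/(\lambda\tau_0^4)=\Lambda(1+A^2)\lambda/\tau_0^4$, which, after choosing $\sigma$ and $\tau_0$ small, can be dominated by the contribution needed to make $\overline{\phi}$ a supersolution of the relevant differential inequality; the matching of boundary values is arranged by comparing $\overline{\phi}$ with $\phi$ on $S_{\tau_0}$ (equal) and on $S_{\tau_0+\sigma}$ (where $\overline{\phi}\ge\sup\phi$ by Proposition~\ref{supbounds} once $A$ is large enough relative to the crude sup bound). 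Applying the comparison principle in the collar then yields $\partial_\tau\phi|_{S_{\tau_0}}\le C\lambda\tau_0$ from above and, symmetrically, $\partial_\tau\phi|_{S_{\tau_0}}\ge -C\lambda\tau_0$ from below.

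The main obstacle I anticipate is making the barrier argument genuinely quantitative with constants independent of $\lambda$, $\tau_0$, $s$, and $\varepsilon$ simultaneously: the right-hand side of \eqref{3} has the singular structure $\phi^{-1}$ (via $h(\phi)=s\Lambda\phi(s\phi+\varepsilon)^{-2}\le \Lambda/(4\phi)$) and the nonlinear gradient term $|\nabla\phi|^2/\overline{v}_t^2$, so the differential inequality satisfied by the barrier must be arranged so that the "good" affine term beats these, and this forces a careful bookkeeping of powers of $\tau_0$ and $\lambda$ — in particular exploiting that $\phi\ge$ (something of size $\lambda\tau_0^4$) on the relevant region, perhaps using a previously constructed positive subsolution for the lower bound on $\phi$, and that $f_t=\lambda^2\eta(\tau)\le\lambda^2$ there. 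Once these size counts are pinned down, the comparison principle (valid since \eqref{3} is quasilinear with the principal part $\Delta-\overline{v}_t^{-1}\nabla\overline{v}_t\cdot\nabla$ uniformly elliptic in the collar) closes the argument, and translating $|\partial_\tau\phi|\le C\lambda\tau_0$ plus $|\overline{\nabla}\phi|_{S_{\tau_0}}=0$ into $|\nabla\phi|_{S_{\tau_0}}\le C\lambda\tau_0$ is immediate.
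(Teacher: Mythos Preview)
Your upper barrier direction is fine and matches the paper: since $\lambda\tau_0^4\geq\frac{\delta}{4r_0}$, the global maximum principle already gives $\phi\leq\lambda\tau_0^4$, so $\partial_\tau\phi|_{S_{\tau_0}}\leq 0$ with no work.

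The genuine gap is the lower barrier. An affine-in-$\tau$ subsolution cannot beat the right-hand side of \eqref{3}. With $\underline{\phi}=\lambda\tau_0^4-A\lambda\tau_0(\tau-\tau_0)$ you get $|\nabla\underline{\phi}|^{2}/\overline{v}^{2}\sim A^{2}\lambda^{2}$, so $\bigl|f-|\nabla\underline{\phi}|^{2}/\overline{v}^{2}\bigr|\sim\lambda^{2}$ in general, while $h(\phi)=s\Lambda\phi/(s\phi+\varepsilon)^{2}$ can be as large as $\Lambda/\phi$. Even if one grants $\phi\gtrsim\lambda\tau_0^{4}$, the right-hand side is then of order $\lambda/\tau_0^{4}$, whereas the left-hand side $\Delta\underline{\phi}-\overline{v}^{-1}\nabla\overline{v}\cdot\nabla\underline{\phi}$ is only of order $A\lambda$. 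No choice of $A$ independent of $\tau_0$ closes this. Moreover, your suggested remedy of invoking a positive lower bound on $\phi$ is circular in the paper's logic: Proposition~\ref{lowerbound} is established only \emph{after} Lemma~\ref{lemmagrad}, as it needs the solution produced by the Leray--Schauder argument, which in turn requires the present gradient bound.

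The idea you are missing is to \emph{kill} the dangerous term rather than estimate it. The paper constructs $\underline{\phi}$ as a local solution of the Eikonal equation $|\nabla\underline{\phi}|^{2}=f\,\overline{v}^{2}$ on a thin collar $D(\tau_0,\tau_1)$ with $\tau_1=\tau_0+\tau_0^{5/2}$ and $\underline{\phi}|_{S_{\tau_0}}=\lambda\tau_0^{4}$. This makes $\bigl|f-|\nabla\underline{\phi}|^{2}/\overline{v}^{2}\bigr|\equiv 0$, so the uncontrolled coefficient $h(\phi)$ is irrelevant. One then checks directly that $\Delta\underline{\phi}-\overline{v}^{-1}\nabla\overline{v}\cdot\nabla\underline{\phi}\geq 2c\lambda>0$ on the collar; this positivity is exactly why $f$ was defined to be $\lambda^{2}[1-2\tau_0^{-1}(\tau-\tau_0)]$ there. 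At an interior minimum of $\phi-\underline{\phi}$ one has $\nabla\phi=\nabla\underline{\phi}$, so the absolute-value terms in \eqref{3} agree and the comparison goes through, yielding $\partial_\tau\phi|_{S_{\tau_0}}\geq\partial_\tau\underline{\phi}|_{S_{\tau_0}}=\lambda\overline{v}_0\geq -c\lambda\tau_0$.
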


\begin{proof}
Since $\phi$ is constant on $S_{\tau_{0}}$ it suffices to estimate the normal derivative.
As $\lambda\tau_{0}^{4}\geq\frac{\delta}{4r_{0}}$, an upper barrier is trivial to construct. Namely, by the maximum principle $\phi\leq\lambda\tau_{0}^{4}$ globally. Hence we have
\begin{equation}\label{53}
\partial_{\tau}\phi|_{S_{\tau_{0}}}\leq 0.
\end{equation}

A lower barrier will now be constructed as the solution to an Eikonal equation near the boundary
\begin{equation}\label{59}
|\nabla\underline{\phi}|^{2}=f\overline{v}^{2}\text{ }\text{ }\text{ }\text{ on }\text{ }\text{ }\text{ }D(\tau_{0},\tau_{1}),\text{ }\text{ }\text{ }\text{ }\text{ }
\underline{\phi}=\lambda\tau_{0}^{4}\text{ }\text{ }\text{ }\text{ on }\text{ }\text{ }\text{ }S_{\tau_{0}},
\end{equation}
where $\tau_{1}=\tau_{0}+\tau_{0}^{5/2}$ and $D(\tau_{0},\tau_{1})$ denotes the domain enclosed by $S_{\tau_{0}}$ and $S_{\tau_{1}}$. Note that the surface $S_{\tau_{0}}$ is noncharacteristic for this initial value problem, since it is possible to solve for $\partial_{\tau}\underline{\phi}|_{S_{\tau_{0}}}$.
We claim that there is a solution with $\underline{\phi}<0$ on $S_{\tau_{1}}$. This will follow from an implicit function theorem argument. First construct an approximate solution $\phi_{0}$. Expand
\begin{equation}\label{61}
\overline{v}=\overline{v}_{0}+\overline{v}_{1}(\tau-\tau_{0})+O(|\tau-\tau_{0}|^{2}),
\end{equation}
and observe that since $\partial_{\tau}\overline{v}|_{\partial M}<0$ we have $\partial_{\tau}\overline{v}|_{S_{\tau_{0}}}=\overline{v}_{1}<0$, and also $-c\tau_{0}\leq\overline{v}_{0}\leq-c^{-1}\tau_{0}$. Then plugging \eqref{61} into equation \eqref{59}, and using $f=\lambda^{2}[1-2\tau_{0}^{-1}(\tau-\tau_{0})]$ on $D(\tau_{0},\tau_{1})$, yields
\begin{equation}\label{66}
\phi_{0}=\lambda\tau_{0}^{4}+\lambda\overline{v}_{0}(\tau-\tau_{0})
+\frac{\lambda}{2}\left(\overline{v}_{1}-\tau_{0}^{-1}\overline{v}_{0}\right)(\tau-\tau_{0})^{2}
+\cdots+O\left(\frac{\lambda|\tau-\tau_{0}|^{N+1}}{\tau_{0}^{N-1}}\right),
\end{equation}
with
\begin{equation}\label{67}
|\nabla\phi_{0}|^{2}=f\overline{v}^{2}+O\left(\frac{\lambda^{2}|\tau-\tau_{0}|^{N}}{\tau_{0}^{N-2}}\right)
\end{equation}
for any large $N$, depending on how many terms are given in $\phi_{0}$. Below, $N\geq 3$ is sufficient.

Consider now the linearized equation
\begin{equation}\label{68}
\mathcal{L}\underline{\varphi}:=\nabla\phi_{0}\cdot\nabla\underline{\varphi}=\psi\text{ }\text{ }\text{ }\text{ on }\text{ }\text{ }\text{ }D(\tau_{0},\tau_{1}),\text{ }\text{ }\text{ }\text{ }\text{ }
\underline{\varphi}=0\text{ }\text{ }\text{ }\text{ on }\text{ }\text{ }\text{ }S_{\tau_{0}}.
\end{equation}
Since $\partial_{\tau}\phi_{0}|_{S_{\tau_{0}}}=\lambda\overline{v}_{0}\neq 0$, the method of characteristics shows that we can always solve this initial value problem. Moreover $\mathcal{L}:\widetilde{C}^{1}\rightarrow C^{0}$ is an isomorphism if $\widetilde{C}^{1}$ consists of all $C^{1}$ functions vanishing on $S_{\tau_{0}}$. We may now apply the implicit function theorem to obtain a local smooth solution $\underline{\phi}$ of \eqref{59} for $|\tau_{0}-\tau_{1}|=\tau_{0}^{5/2}$ sufficiently small.

To show that $\underline{\phi}|_{S_{\tau_{1}}}<0$, choose $\tau_{0}$ sufficiently small so that $\phi_{0}|_{S_{\tau_{1}}}<0$. This is satisfied if
\begin{equation}\label{68.0}
\lambda\tau_{0}^{4}<\lambda|\overline{v}_{0}||\tau_{1}-\tau_{0}|\sim\lambda\tau_{0}^{7/2}.
\end{equation}
It follows that
\begin{equation}\label{69.1} \underline{\phi}=\phi_{0}+(\underline{\phi}-\phi_{0})=\phi_{0}
+O\left(\frac{\lambda|\tau-\tau_{0}|^{N}}{\tau_{0}^{N-2}}\right)
<0\text{ }\text{ }\text{ on }\text{ }\text{ }S_{\tau_{1}},
\end{equation}
if $\tau_{0}$ is sufficiently small.

Continuing with the proof of the boundary gradient estimate, we will use $\underline{\phi}$ as a lower barrier. To see that $\underline{\phi}$ is a subsolution of \eqref{3} on $D(\tau_{0},\tau_{1})$, calculate
\begin{align}\label{70}
\begin{split}
\Delta\underline{\phi}&=\partial_{\tau}^{2}\underline{\phi}+H\partial_{\tau}\underline{\phi}+\Delta_{S_{\tau}}\underline{\phi}\\
&=\partial_{\tau}^{2}\phi_{0}+H\partial_{\tau}\phi_{0}+\Delta_{S_{\tau}}\phi_{0}
+O\left(\frac{\lambda|\tau_{1}-\tau_{0}|^{N-2}}{\tau_{0}^{N-2}}\right)\\
&=\lambda(\overline{v}_{1}-\tau_{0}^{-1}\overline{v}_{0})
+O\left(\frac{\lambda|\tau_{1}-\tau_{0}|}{\tau_{0}}\right),
\end{split}
\end{align}
and
\begin{align}\label{71}
\begin{split}
\frac{\nabla \overline{v}\cdot\nabla\underline{\phi}}{\overline{v}}&=
\frac{\partial_{\tau}\overline{v}\partial_{\tau}\underline{\phi}}{\overline{v}}
+\frac{\overline{\nabla}\overline{v}\cdot\overline{\nabla}\underline{\phi}}{\overline{v}}\\
&=\frac{\partial_{\tau}\overline{v}\partial_{\tau}\phi_{0}}{\overline{v}}
+\frac{\overline{\nabla}\overline{v}\cdot\overline{\nabla}\phi_{0}}{\overline{v}}
+O\left(\frac{\lambda|\tau_{1}-\tau_{0}|^{N-1}}{\tau_{0}^{N-1}}\right)\\
&=\left(\frac{1}{\tau}+O(1)\right)\left(\lambda\overline{v}_{0}+O(\lambda|\tau_{1}-\tau_{0}|)\right)
+O\left(\frac{\lambda|\tau_{1}-\tau_{0}|}{\tau_{0}}\right)\\
&=\frac{\lambda\overline{v}_{0}}{\tau}
+O\left(\lambda|\overline{v}_{0}|+\frac{\lambda|\tau_{1}-\tau_{0}|}{\tau_{0}}\right),
\end{split}
\end{align}
where we have used $\overline{v}=\widetilde{v}\tau+O(\tau^{2})$ with $\widetilde{v}<0$; $\overline{\nabla}$ represents the induced connection on $S_{\tau}$. On the other side of \eqref{3}, equation \eqref{59} eliminates one term and the other satisfies
\begin{equation}\label{73}
(1-s)\varrho\underline{\phi}\leq c\lambda\tau_{0}^{3}.
\end{equation}

Now observe that
\begin{equation}\label{73.0}
\overline{v}_{1}=\partial_{\tau}\overline{v}|_{S_{\tau_{0}}}=\partial_{\tau}\overline{v}|_{\partial M}+O(\tau_{0}),
\text{ }\text{ }\text{ }\text{ }\text{ }
\overline{v}_{0}=\overline{v}|_{S_{\tau_{0}}}=\partial_{\tau}\overline{v}|_{\partial M}\tau_{0}+O(\tau_{0}^{2}),
\end{equation}
which implies
\begin{equation}\label{73.1}
\overline{v}_{1}-\left(\frac{1}{\tau_{0}}+\frac{1}{\tau}\right)\overline{v}_{0}
=-\frac{\tau_{0}}{\tau}\partial_{\tau}\overline{v}|_{\partial M}+O(\tau_{0})
\geq 2c+O\left(\frac{|\tau_{1}-\tau_{0}|}{\tau_{0}}\right)
\end{equation}
for some constant $c>0$. Hence, if $\tau_{0}$ is sufficiently small
\begin{equation}\label{79}
\Delta\underline{\phi}-\frac{\nabla \overline{v}\cdot\nabla\underline{\phi}}{\overline{v}}=
\lambda\left[\overline{v}_{1}-\left(\frac{1}{\tau_{0}}+\frac{1}{\tau}\right)\overline{v}_{0}\right]
+O\left(\lambda|\overline{v}_{0}|+\frac{\lambda|\tau_{1}-\tau_{0}|}{\tau_{0}}\right)\\
\geq 2c\lambda+O(\lambda\tau_{0}^{3/2})>0.
\end{equation}
Note that this positive lower bound is a result of the choice
$f=\lambda^{2}[1-2\tau_{0}^{-1}(\tau-\tau_{0})]$ on $D(\tau_{0},\tau_{1})$, and is the reason for defining $f$ in this way. In light of \eqref{59} and \eqref{73}, we find that $\underline{\phi}$ is a subsolution of \eqref{3} in the following sense
\begin{equation}\label{79.1}
\Delta\underline{\phi}-\frac{\nabla \overline{v}\cdot\nabla\underline{\phi}}{\overline{v}}
>\frac{s\Lambda\phi}{(s\phi+\varepsilon)^{2}}\left|f-\frac{|\nabla\underline{\phi}|^{2}}{\overline{v}^{2}}\right|
+(1-s)\varrho\underline{\phi}.
\end{equation}
It is important that the coefficient $\frac{s\Lambda\phi}{(s\phi+\varepsilon)^{2}}$ involves $\phi$ instead of $\underline{\phi}$, for the comparison argument below. This motivates using the Eikonal equation \eqref{59} to construct the subsolution.

Having shown that $\underline{\phi}$ is a subsolution, we now compare it with $\phi$. Note that
\begin{equation}\label{80}
\Delta(\phi-\underline{\phi})-\frac{\nabla \overline{v}\cdot\nabla(\phi-\underline{\phi})}{\overline{v}}
<\frac{s\Lambda\phi}{(s\phi+\varepsilon)^{2}}\left(\left|f-\frac{|\nabla\phi|^{2}}{\overline{v}^{2}}\right|
-\left|f-\frac{|\nabla\underline{\phi}|^{2}}{\overline{v}^{2}}\right|\right)
+(1-s)\varrho(\phi-\underline{\phi}),
\end{equation}
and
\begin{equation}\label{81}
(\phi-\underline{\phi})|_{S_{\tau_{0}}}=0,\text{ }\text{ }\text{ }\text{ }\text{ }\text{ }
(\phi-\underline{\phi})|_{S_{\tau_{1}}}>0.
\end{equation}
At an interior minimum of $\phi-\underline{\phi}$, $|\nabla\phi|=|\nabla\underline{\phi}|$
and $\Delta(\phi-\underline{\phi})\geq 0$, which yields a contradiction.
We conclude that $\phi-\underline{\phi}\geq 0$ on $D(\tau_{0},\tau_{1})$, and hence $\partial_{\tau}\phi|_{S_{\tau_{0}}}\geq\partial_{\tau}\underline{\phi}|_{S_{\tau_{0}}}$.
Since
\begin{equation}\label{83}
\partial_{\tau}\underline{\phi}=\partial_{\tau}\phi_{0}
+O\left(\frac{\lambda|\tau_{1}-\tau_{0}|^{N-1}}{\tau_{0}^{N-1}}\right)
=\lambda\overline{v}_{0}+O\left(\frac{\lambda|\tau_{1}-\tau_{0}|}{\tau_{0}}\right)
\geq-c\lambda\tau_{0},
\end{equation}
the desired result follows.
\end{proof}

Using similar methods a boundary gradient estimate may be established at $S_{r_{0}}$.

\begin{lemma}\label{lemmagrad1}
If $r_{0}$ is sufficiently large, then
\begin{equation}
|\nabla\phi|_{S_{r_{0}}}\leq \frac{2}{r_{0}}.
\end{equation}
\end{lemma}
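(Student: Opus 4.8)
The plan is to imitate the proof of Lemma~\ref{lemmagrad}: since $\phi$ is constant on $S_{r_0}$ its tangential gradient vanishes there, so it suffices to bound the normal derivative $\partial_r\phi|_{S_{r_0}}$ from above and below by barriers built in the asymptotic end. There $g_t=\delta_{ij}+O(|x|^{-1})$, $\overline v_{t,r_0}=-1+O(|x|^{-1})$ with $\overline v_{t,r_0}<0$, $\varrho=O(r^{-3})$, and — the point that makes this work — $f_{t,r_0}\equiv\delta^2/r_0^4$ is a tiny constant on $\{r_0/2<r<r_0+1\}$ by \eqref{56.0'}. Also $\lambda\tau_0^4>\delta/(4r_0)$ once $r_0$ is large, so $\sup\phi\le\lambda\tau_0^4$ by Proposition~\ref{supbounds}.

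For the lower bound on $\partial_r\phi|_{S_{r_0}}$ I would use an upper barrier on the shell $\{(1-\lambda\tau_0^4)r_0\le r\le r_0\}$ of the form $\overline\phi=a+b/r-c/r^2$, with $a,b$ fixed by $\overline\phi=\lambda\tau_0^4$ on the inner sphere (which forces $0<b<r_0$, with $b$ of order $r_0$) and $\overline\phi=\delta/(4r_0)$ on $S_{r_0}$, and with $c$ of order $r_0$ taken large enough that the definite term $\Delta_\delta(-c/r^2)=-2c/r^4$ dominates the $O(b/r^4)$ errors coming from the metric and from the transport term $-\nabla\overline v_{t,r_0}\cdot\nabla\overline\phi/\overline v_{t,r_0}$; then $\Delta_{g_t}\overline\phi-\nabla\overline v_{t,r_0}\cdot\nabla\overline\phi/\overline v_{t,r_0}<0$, so $\overline\phi$ is a strict supersolution of \eqref{3} (whose right-hand side is nonnegative). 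Since $\phi\le\lambda\tau_0^4\le\overline\phi$ on the inner sphere and $\phi=\overline\phi$ on $S_{r_0}$, the maximum principle for $\phi-\overline\phi$ gives $\phi\le\overline\phi$ on the shell, whence $\partial_r\phi|_{S_{r_0}}\ge\partial_r\overline\phi|_{S_{r_0}}=-b/r_0^2+2c/r_0^3\ge-2/r_0$ for $r_0$ large.

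For the upper bound on $\partial_r\phi|_{S_{r_0}}$ I would construct a lower barrier $\underline\phi$ exactly as the subsolution in Lemma~\ref{lemmagrad}, by solving the Eikonal equation $|\nabla\underline\phi|_{g_t}^2=f_{t,r_0}\overline v_{t,r_0}^2$ on $\{r_0/2\le r\le r_0\}$ with $\underline\phi=\delta/(4r_0)$ on $S_{r_0}$ and the outward-increasing root. The characteristics are nearly radial and the shell width stays below the focal distance $\sim r_0$, so no caustics form; $\underline\phi$ is smooth, essentially linear in $r$, and drops to $\approx-\delta/(4r_0)<0$ on $S_{r_0/2}$, so $\underline\phi\le\phi$ on the boundary of the shell (using $\phi>0$). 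Because the Eikonal equation makes $|f_{t,r_0}-|\nabla\underline\phi|^2/\overline v_{t,r_0}^2|\equiv0$, the nonlinear term in \eqref{3} — the one whose coefficient $s\Lambda\phi(s\phi+\varepsilon)^{-2}$ degenerates as $\varepsilon\to0$ — vanishes identically on $\underline\phi$; and since $\Delta_{g_t}\underline\phi-\nabla\overline v_{t,r_0}\cdot\nabla\underline\phi/\overline v_{t,r_0}\approx 2\delta/(r_0^2 r)>0$ dominates $(1-s)\varrho\underline\phi$, the function $\underline\phi$ is a strict subsolution in the sense of \eqref{79.1}. Comparing $\phi$ with $\underline\phi$ verbatim as in \eqref{80}--\eqref{81} — at an interior minimum of $\phi-\underline\phi$ one has $|\nabla\phi|=|\nabla\underline\phi|$, so the difference of the two absolute-value terms drops out and a strict inequality excludes $\phi-\underline\phi<0$ — yields $\phi\ge\underline\phi$ on the shell, hence $\partial_r\phi|_{S_{r_0}}\le\partial_r\underline\phi|_{S_{r_0}}=\sqrt{f_{t,r_0}}\,|\overline v_{t,r_0}|=\delta|\overline v_{t,r_0}|/r_0^2<2/r_0$. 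Together with $\nabla_{\mathrm{tan}}\phi|_{S_{r_0}}=0$ this gives $|\nabla\phi|_{S_{r_0}}\le 2/r_0$.

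The main obstacle, as in Lemma~\ref{lemmagrad}, is to make the barriers genuine sub/supersolutions of the $\varepsilon$-regularized nonlinear equation \eqref{3} with constants independent of $s$, $\varepsilon$, and $r_0$: the term $s\Lambda\phi(s\phi+\varepsilon)^{-2}|f_{t,r_0}-|\nabla\phi|^2/\overline v_{t,r_0}^2|$ is not uniformly controlled in $\varepsilon$, and must be handled — for the lower barrier — by the Eikonal construction that annihilates the factor $|f_{t,r_0}-|\nabla\underline\phi|^2/\overline v_{t,r_0}^2|$, and — for the upper barrier — by noting it is nonnegative and so harmless on the supersolution side; the lower-barrier comparison is therefore carried out at interior extrema, exactly where the offending $\phi$-dependence of the coefficient is irrelevant.
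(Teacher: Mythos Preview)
Your proposal is correct and follows essentially the same approach as the paper: an upper barrier supersolution to bound $\partial_r\phi|_{S_{r_0}}$ from below, and a lower barrier built from the Eikonal equation $|\nabla\underline\phi|^2=f\overline v^2$ (annihilating the degenerate nonlinear term) to bound it from above, with the comparison at interior extrema carried out exactly as in \eqref{80}--\eqref{81}. The only notable difference is that the paper uses the simpler linear supersolution $\overline\phi=\tfrac{\delta}{4r_0}+a(r_0-r)$ with $a=2/r_0$ on the wide shell $[r_0/2,r_0]$, for which $\Delta_\delta\overline\phi=-2a/r$ is already definite without needing a $-c/r^2$ correction, whereas your $a+b/r-c/r^2$ on a thin shell works but introduces an unnecessary balancing of constants.
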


\begin{proof}
Let $r_{1}=\frac{1}{2}r_{0}$. An upper barrier may be constructed in the form
\begin{equation}\label{85.1}
\overline{\phi}=\frac{\delta}{4r_{0}}+a(r_{0}-r)\text{ }\text{ }\text{ }\text{ for }\text{ }\text{ }\text{ }r\in[r_{1},r_{0}].
\end{equation}
A basic calculation shows that
\begin{equation}\label{88}
\Delta\overline{\phi}-\frac{\nabla \overline{v}\cdot\nabla\overline{\phi}}{\overline{v}}=-\frac{2a}{r}+O\left(\frac{a}{r^{2}}\right).
\end{equation}
Hence, if $a>0$ and $r_{0}$ is large enough, $\overline{\phi}$ is a supersolution
\begin{align}\label{89}
\begin{split}
\Delta\overline{\phi}-\frac{\nabla \overline{v}\cdot\nabla\overline{\phi}}{\overline{v}}&=-\frac{2a}{r}+O\left(\frac{a}{r^{2}}\right)\\
&<0\leq\frac{s\Lambda\phi}{(s\phi+\varepsilon)^{2}}\left|f-\frac{|\nabla\overline{\phi}|^{2}}{\overline{v}^{2}}\right|
+(1-s)\varrho\overline{\phi}.
\end{split}
\end{align}
Moreover if $a=2r_{0}^{-1}$, then
\begin{equation}\label{91}
\overline{\phi}|_{S_{r_{1}}}=\frac{\delta}{4r_{0}}+1>\lambda\tau_{0}^{4}=\sup_{M(\tau_{0},r_{0})}\phi.
\end{equation}
A comparison argument then implies that $\overline{\phi}\geq\phi$ for $r_{1}\leq r\leq r_{0}$. Therefore
\begin{equation}\label{92}
\partial_{r}\phi|_{S_{r_{0}}}\geq-\frac{2}{r_{0}}.
\end{equation}

As in the proof of Lemma \ref{lemmagrad}, a lower barrier will be constructed as a solution to the Eikonal equation
\begin{equation}\label{93}
|\nabla\underline{\phi}|^{2}=f\overline{v}^{2}\text{ }\text{ }\text{ }\text{ on }\text{ }\text{ }\text{ }B(r_{1},r_{0}),
\text{ }\text{ }\text{ }\text{ }\text{ }
\underline{\phi}=\frac{\delta}{4r_{0}}\text{ }\text{ }\text{ }\text{ on }\text{ }\text{ }\text{ }S_{r_{0}},
\end{equation}
where $B(r_{1},r_{0})$ is the region between $S_{r_{1}}$ and $S_{r_{0}}$. Note that $S_{r_{0}}$ is noncharacteristic for this initial value problem, since it is possible to solve for $\partial_{r}\underline{\phi}|_{S_{r_{0}}}$. In order to apply the implicit function theorem, an approximate solution may be found in the form
\begin{equation}\label{95}
\phi_{0}=\frac{\delta}{4r_{0}}-\frac{\delta}{r_{0}^{2}}(r_{0}-r).
\end{equation}
Using the fact that $f=\frac{\delta^{2}}{r_{0}^{4}}$, a direct calculation produces
\begin{equation}\label{98}
|\nabla\phi_{0}|^{2}-f\overline{v}^{2}=O\left(\frac{\delta^{2}}{rr_{0}^{4}}\right).
\end{equation}
The implicit function theorem now yields a solution
\begin{equation}\label{108}
\underline{\phi}=\phi_{0}
+O\left(\frac{\delta}{r_{0}^{2}}\right).
\end{equation}
Moreover
\begin{equation}\label{109}
\underline{\phi}|_{S_{r_{1}}}=-\frac{\delta}{4r_{0}}+O\left(\frac{\delta}{r_{0}^{2}}\right)<0,
\end{equation}
if $r_{0}$ is large enough.

In order to show that $\underline{\phi}$ is a subsolution on $B(r_{1},r_{0})$, observe that
\begin{equation}\label{110}
\Delta\underline{\phi}=\partial_{r}^{2}\underline{\phi}+\frac{2}{r}\partial_{r}\underline{\phi}
+O(r^{-1}|\nabla^{2}\underline{\phi}|+r^{-2}|\nabla\underline{\phi}|)
=\frac{2\delta}{rr_{0}^{2}}+O\left(\frac{\delta}{r_{0}^{4}}\right),
\end{equation}
and
\begin{equation}\label{111}
\frac{\nabla \overline{v}\cdot\nabla\underline{\phi}}{\overline{v}}=O\left(\frac{|\nabla\underline{\phi}|}{r^{2}}\right)
=O\left(\frac{\delta}{r_{0}^{4}}\right).
\end{equation}
On the other side of equation \eqref{3}, one term is eliminated with the aid of \eqref{93} and the other satisfies
\begin{equation}\label{114}
(1-s)\varrho|\underline{\phi}|=O\left(\frac{\delta}{r_{0}^{4}}\right),
\end{equation}
since $\varrho\leq cr^{-3}$. It follows that
\begin{align}\label{115}
\begin{split}
\Delta\underline{\phi}-\frac{\nabla \overline{v}\cdot\nabla\underline{\phi}}{\overline{v}}
&=\frac{2\delta}{rr_{0}^{2}}+O\left(\frac{\delta}{r_{0}^{4}}\right)\\
&\geq \frac{2\delta}{r_{0}^{3}}+O\left(\frac{\delta}{r_{0}^{4}}\right)\\
&>O\left(\frac{\delta}{r_{0}^{4}}\right)
=\frac{s\Lambda\phi}{(s\phi+\varepsilon)^{2}}\left|f-\frac{|\nabla\underline{\phi}|^{2}}{\overline{v}^{2}}\right|
+(1-s)\varrho\underline{\phi}.
\end{split}
\end{align}
A comparison argument now shows that $\phi\geq\underline{\phi}$ on $B(r_{1},r_{0})$, which yields
\begin{equation}\label{115.0}
\partial_{r}\phi|_{S_{r_{0}}}\leq\frac{\delta}{2r_{0}^{2}}
\end{equation}
if $r_{0}$ is sufficiently large.
\end{proof}

By combining Proposition \ref{interiorgrad}, Lemma \ref{lemmagrad}, and Lemma \ref{lemmagrad1} we obtain global $C^1$ bounds.

\begin{cor}\label{globalgrad}
If $\tau_{0}>0$ is sufficiently small, $r_{0}$ is sufficiently large, and $\Lambda>8$ then
there exists a constant $C$ independent of $s$, $\varepsilon$, and $r_{0}$ such that
\begin{equation}\label{115.1}
\sup_{M(\tau_{0},r_{0})}|\nabla\phi|\leq C(1+|f|_{C^{1}}).
\end{equation}
\end{cor}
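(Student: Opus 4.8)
\textit{Proof sketch.} The plan is simply to substitute the two boundary gradient bounds into the interior gradient estimate of Proposition~\ref{interiorgrad}. Since $\Lambda>8$, Proposition~\ref{interiorgrad} applies and gives a constant $C$, independent of $s$, $\varepsilon$, and $r_0$, with
\begin{equation}
\sup_{M(\tau_{0},r_{0})}|\nabla\phi|\leq C(1+|f|_{C^{1}})+\sup_{\partial M(\tau_{0},r_{0})}|\nabla\phi|,
\end{equation}
and the boundary $\partial M(\tau_{0},r_{0})$ is precisely $S_{\tau_{0}}\cup S_{r_{0}}$, so that $\sup_{\partial M(\tau_{0},r_{0})}|\nabla\phi|=\max\{\sup_{S_{\tau_{0}}}|\nabla\phi|,\sup_{S_{r_{0}}}|\nabla\phi|\}$.

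Next I would verify that the hypotheses of Lemmas~\ref{lemmagrad} and~\ref{lemmagrad1} are in force. Lemma~\ref{lemmagrad} requires the ordering $\lambda\tau_{0}^{4}\geq\frac{\delta}{4r_{0}}$ of boundary data; with $\lambda,\tau_{0},\delta$ fixed beforehand this holds once $r_{0}$ is taken large enough, which is permitted since $r_0$ is chosen last. Under this condition Lemma~\ref{lemmagrad} yields $\sup_{S_{\tau_{0}}}|\nabla\phi|\leq C\lambda\tau_{0}$ with $C$ independent of $s,\varepsilon,\lambda,\tau_{0}$, while Lemma~\ref{lemmagrad1} yields $\sup_{S_{r_{0}}}|\nabla\phi|\leq 2/r_{0}$ for $r_{0}$ large. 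Both quantities are bounded by a fixed constant (namely $C\lambda\tau_{0}$ and $2$ respectively) that is independent of $s$, $\varepsilon$, and $r_{0}$. Feeding this into the displayed inequality and absorbing the additive constant using $1\leq 1+|f|_{C^{1}}$ gives \eqref{115.1}.

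Finally I would note that the bound is genuinely uniform in $r_{0}$ because $|f|_{C^{1}}=|f_{t,r_{0}}|_{C^{1}}$ is itself controlled independently of $r_{0}$: the far region modification $\frac{\delta^{2}}{r_{0}^{4}}\chi(r)$ contributes at most $O(\delta^{2}r_{0}^{-5})$ to the $C^{1}$ norm, and the transition region is handled so as to keep $f_{t,r_0}$ close to $f_t$. There is no real obstacle here; the only care required is the bookkeeping of parameter dependencies, in particular checking that the compatibility condition $\lambda\tau_{0}^{4}\geq\frac{\delta}{4r_{0}}$ demanded by Lemma~\ref{lemmagrad} and the largeness of $r_{0}$ demanded by Lemma~\ref{lemmagrad1} are compatible, which they are since $\tau_{0},\lambda,\delta$ are fixed first and $r_{0}$ is sent to infinity afterwards.
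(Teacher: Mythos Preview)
Your proposal is correct and follows exactly the paper's approach: the paper's proof is the single sentence ``By combining Proposition~\ref{interiorgrad}, Lemma~\ref{lemmagrad}, and Lemma~\ref{lemmagrad1} we obtain global $C^{1}$ bounds,'' and you have simply written out the combination in detail, including the check that the hypothesis $\lambda\tau_{0}^{4}\geq \frac{\delta}{4r_{0}}$ of Lemma~\ref{lemmagrad} is compatible with taking $r_{0}$ large.
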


We are now in position to establish a basic existence results.

\begin{theorem}\label{existence0}
If $\lambda,\tau_{0}>0$, $\tau_{0}$ is sufficiently small, $r_{0}$ is sufficiently large, and $\Lambda>8$ then
there exists a positive solution $\phi_{r_{0},\varepsilon}\in C^{2,\alpha}(M(\tau_{0},r_{0}))$, for any $\alpha\in[0,1)$, of \eqref{3}, \eqref{4} with $s=1$.
\end{theorem}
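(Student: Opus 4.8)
The plan is to obtain the desired solution as a fixed point $T(\phi,1)=\phi$ by means of the Leray--Schauder theorem, Theorem \ref{theorem:lerayschauder}, applied to the map $T\colon\mathcal{C}\times[0,1]\to\mathcal{C}$ defined by \eqref{5}, \eqref{6}. As explained above, $T$ is continuous and compact and $T(\phi,0)=\phi_{0}$ for a fixed $\phi_{0}\in\mathcal{C}$, so the only remaining ingredient is the a priori bound \eqref{8}: a constant $C$, independent of $s\in[0,1]$, with $|\phi|_{C^{2,\alpha}(M(\tau_{0},r_{0}))}\leq C$ for every fixed point $T(\phi,s)=\phi$. Any such fixed point solves the quasilinear problem \eqref{3}, \eqref{4}, so the task reduces to an a priori estimate for that problem, obtained by first collecting the $C^{0}$ and $C^{1}$ bounds already proven and then freezing coefficients and bootstrapping with $L^{p}$ and Schauder theory.

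For the low-order bounds, Proposition \ref{supbounds} gives $0\leq\phi\leq\max\{\lambda\tau_{0}^{4},\delta/(4r_{0})\}$, and — choosing $\delta$ small enough that $\lambda\tau_{0}^{4}\geq\delta/(4r_{0})$, which is permissible since $\delta$ is at our disposal — Corollary \ref{globalgrad} (which combines Proposition \ref{interiorgrad} with the boundary gradient estimates of Lemmas \ref{lemmagrad} and \ref{lemmagrad1}) yields $\sup_{M(\tau_{0},r_{0})}|\nabla\phi|\leq C(1+|f|_{C^{1}})$, uniformly in $s$, $\varepsilon$, $r_{0}$. Here one uses that $\overline{v}=\overline{v}_{t,r_{0}}$, the solution of \eqref{54'}, \eqref{55'}, is strictly negative and, by the Hopf lemma at $\partial M_{t}$ together with the maximum principle, bounded away from $0$ on $M(\tau_{0},r_{0})$ uniformly in $r_{0}$, so $\overline{v}^{-2}$ is bounded there. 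With $\varepsilon>0$ fixed and $s\phi+\varepsilon\geq\varepsilon$, the right-hand side of \eqref{3} is then bounded in $L^{\infty}(M(\tau_{0},r_{0}))$ uniformly in $s$, while the first-order coefficient $\overline{v}^{-1}\nabla\overline{v}$ is smooth; since $S_{\tau_{0}}$ and $S_{r_{0}}$ are smooth with constant boundary data, global $L^{p}$ estimates give $\phi\in W^{2,p}(M(\tau_{0},r_{0}))$ for all $p<\infty$, hence $\phi\in C^{1,\alpha}$ with a bound depending on $\varepsilon$ but not on $s$. Now $\bigl|f-\overline{v}^{-2}|\nabla\phi|^{2}\bigr|\in C^{0,\alpha}$ since $t\mapsto|t|$ is Lipschitz, and $\Lambda\phi(s\phi+\varepsilon)^{-2}$ is a smooth function of $\phi\in C^{0,\alpha}$, so the right side of \eqref{3} lies in $C^{0,\alpha}$ and the Schauder estimates upgrade the bound to $|\phi|_{C^{2,\alpha}(M(\tau_{0},r_{0}))}\leq C(\varepsilon)$, uniformly in $s$. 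This is precisely \eqref{8}.

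By Theorem \ref{theorem:lerayschauder} there is therefore $\phi\in\mathcal{C}$ with $T(\phi,1)=\phi$, which solves \eqref{3}, \eqref{4} with $s=1$ and lies in $C^{2,\alpha}(M(\tau_{0},r_{0}))$ for every $\alpha\in[0,1)$ by the elliptic theory just invoked; it is moreover strictly positive by the strong maximum principle applied to the linear equation $\Delta\phi-\overline{v}^{-1}\nabla\overline{v}\cdot\nabla\phi-c(x)\phi=0$ that it satisfies, whose zeroth-order coefficient $c(x)=\Lambda(\phi+\varepsilon)^{-2}\bigl|f-\overline{v}^{-2}|\nabla\phi|^{2}\bigr|$ is nonnegative, together with the positivity of the boundary data. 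The genuinely delicate point is not any single estimate — those are the content of the preceding propositions and lemmas — but verifying that the bound \eqref{8} is uniform in the continuation parameter $s$ and that the non-smooth term $\bigl|f-\overline{v}^{-2}|\nabla\phi|^{2}\bigr|$ does not obstruct the Schauder bootstrap; both are handled by the Lipschitz property of $t\mapsto|t|$ and the lower bound for $|\overline{v}|$ away from the horizon.
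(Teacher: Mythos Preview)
Your proof is correct and follows essentially the same approach as the paper: assemble the a priori $C^{0}$ and $C^{1}$ bounds from Proposition~\ref{supbounds} and Corollary~\ref{globalgrad}, use $s\phi+\varepsilon\geq\varepsilon>0$ to place the right-hand side of \eqref{3} in $L^{\infty}$, bootstrap through $W^{2,p}\hookrightarrow C^{1,\alpha}$ and then Schauder to obtain \eqref{8} uniformly in $s$, and conclude with Leray--Schauder. Your additional remarks on why $|\overline{v}|$ is bounded below on $M(\tau_{0},r_{0})$ and why the absolute value does not obstruct the H\"older bootstrap are accurate elaborations of points the paper leaves implicit.
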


\begin{proof}
In order to apply the Leray-Schauder theorem, it suffices to establish
a $C^{2,\alpha}$ estimate. Observe that with the aid of Proposition \ref{supbounds} and Corollary \ref{globalgrad}, and the fact that $s\phi+\varepsilon\geq\varepsilon>0$, we have
$\Delta\phi-\frac{\nabla \overline{v}\cdot\nabla\phi}{\overline{v}}\in L^{\infty}$ and the corresponding bound is independent of $s$. This implies that $\phi\in W^{2,p}$ for any $p$, and hence $\phi\in C^{1,\alpha}$ for any $\alpha <1$, again independent of $s$. Thus $\Delta\phi-\frac{\nabla \overline{v}\cdot\nabla\phi}{\overline{v}}\in C^{0,\alpha}$, which implies that $\phi\in C^{2,\alpha}$. That is,
\begin{equation}\label{138.1}
|\phi|_{C^{2,\alpha}(M(\tau_{0},r_{0}))}\leq C
\end{equation}
where $C$ is independent of $s$. This guarantees the existence of a positive solution $\phi_{r_{0},\varepsilon}\in C^{2,\alpha}(M(\tau_{0},r_{0}))$ of \eqref{3}, \eqref{4} with $s=1$, which satisfies \eqref{138.1}.
\end{proof}

In order to proceed further towards the goal of establishing $C^{2,\alpha}$ estimates independent of $\varepsilon$, a uniform positive lower bound for $\phi_{r_{0},\varepsilon}$ is needed. To this end let us rewrite \eqref{3}, with $s=1$, as an equation for $\zeta:=(\phi_{r_{0},\varepsilon}+\varepsilon)^{3}$. Observe that
\begin{equation}\label{120}
\Delta\zeta=3(\phi_{r_{0},\varepsilon}+\varepsilon)^{2}\Delta\phi_{r_{0},\varepsilon}
+6(\phi+\varepsilon)|\nabla\phi_{r_{0},\varepsilon}|^{2}
\end{equation}
and
\begin{equation}\label{121}
\frac{\nabla \overline{v}\cdot\nabla\zeta}{\overline{v}}=\frac{3(\nabla \overline{v}\cdot\nabla\phi_{r_{0},\varepsilon})}{\overline{v}}(\phi_{r_{0},\varepsilon}+\varepsilon)^{2}.
\end{equation}
It follows that
\begin{equation}\label{122}
\Delta\zeta-\frac{\nabla \overline{v}\cdot\nabla\zeta}{\overline{v}}=3\Lambda[(\phi_{r_{0},\varepsilon}+\varepsilon)-\varepsilon]
\left|f-\frac{|\nabla\phi_{r_{0},\varepsilon}|^{2}}{\overline{v}^{2}}\right|
+6|\nabla\phi_{r_{0},\varepsilon}|^{2}\zeta^{\frac{1}{3}}.
\end{equation}
Hence
\begin{equation}\label{123}
\Delta\zeta-\frac{\nabla \overline{v}\cdot\nabla\zeta}{\overline{v}}-\Phi\zeta^{\frac{1}{3}}=-3\varepsilon \Lambda
\left|f-\frac{|\nabla\phi_{r_{0},\varepsilon}|^{2}}{\overline{v}^{2}}\right|\leq 0,
\end{equation}
where
\begin{equation}\label{124}
\Phi=3\Lambda\left|f-\frac{|\nabla\phi_{r_{0},\varepsilon}|^{2}}{\overline{v}^{2}}\right|
+6|\nabla\phi_{r_{0},\varepsilon}|^{2},
\end{equation}
and
\begin{equation}\label{125}
\zeta|_{S_{\tau_{0}}}=(\lambda\tau_{0}^{4}+\varepsilon)^{3},\text{ }\text{ }\text{ }\text{ }\text{ }
\zeta|_{S_{r_{0}}}=\left(\frac{\delta}{4r_{0}}+\varepsilon\right)^{3}.
\end{equation}
According to the $C^{0}$ and $C^{1}$ estimates $\Phi\leq\Phi_{0}$, where $\Phi_{0}$ is a function independent of $\varepsilon$, $r_{0}$, and $\phi_{r_{0},\varepsilon}$.

Suppose that there exists a solution of the Dirichlet problem
\begin{equation}\label{126}
\Delta\underline{\zeta}-\frac{\nabla \overline{v}\cdot\nabla\underline{\zeta}}{\overline{v}}-\Phi_{0}\underline{\zeta}^{\frac{1}{3}}=0\text{ }\text{ }\text{ }\text{ on }\text{ }\text{ }\text{ }M(\tau_{0},r_{0}),
\end{equation}
\begin{equation}\label{127}
\underline{\zeta}=(\lambda\tau_{0}^{4})^{3}\text{ }\text{ }\text{ on }\text{ }\text{ }S_{\tau_{0}},\text{ }\text{ }\text{ }\text{ }\text{ }
\underline{\zeta}=\left(\frac{\delta}{4r_{0}}\right)^{3}\text{ }\text{ }\text{ on }\text{ }\text{ }S_{r_{0}}.
\end{equation}
By the maximum principle the solution is positive. Moreover, it is independent of $\varepsilon$, and $\phi_{r_{0},\varepsilon}$, and $\zeta\geq\underline{\zeta}$. To see this, use a comparison argument. Observe that
\begin{equation}\label{128}
\Delta(\zeta-\underline{\zeta})-\frac{\nabla \overline{v}\cdot\nabla(\zeta-\underline{\zeta})}{\overline{v}}\leq\Phi\zeta^{\frac{1}{3}}-\Phi_{0}\underline{\zeta}^{\frac{1}{3}},
\end{equation}
\begin{equation}\label{129}
(\zeta-\underline{\zeta})|_{S_{\tau_{0}}}>0,\text{ }\text{ }\text{ }\text{ }\text{ }\text{ }
(\zeta-\underline{\zeta})|_{S_{r_{0}}}>0.
\end{equation}
Suppose that $\zeta-\underline{\zeta}<0$ somewhere, and let $x_{0}$ be the point at which the global minimum is achieved. Then $\zeta^{\frac{1}{3}}(x_{0})<\underline{\zeta}^{\frac{1}{3}}(x_{0})$,
$|\nabla(\zeta-\underline{\zeta})(x_{0})|=0$, and $\Delta(\zeta-\underline{\zeta})(x_{0})\geq 0$, so that
\begin{align}\label{130}
\begin{split}
0&\leq\Delta(\zeta-\underline{\zeta})(x_{0})-\frac{\nabla \overline{v}\cdot\nabla(\zeta-\underline{\zeta})}{\overline{v}}(x_{0})\\
&\leq\Phi\zeta^{\frac{1}{3}}(x_{0})-\Phi_{0}\underline{\zeta}^{\frac{1}{3}}(x_{0})\\
&\leq\underline{\zeta}^{\frac{1}{3}}(\Phi-\Phi_{0})(x_{0})<0.
\end{split}
\end{align}
This contradiction implies that $\zeta\geq\underline{\zeta}$ on all of $M(\tau_{0},r_{0})$, and
yields the desired uniform positive lower bound. Hence
$(\phi_{r_{0},\varepsilon}+\varepsilon)^{3}\geq\underline{\zeta}$ or rather $\phi_{r_{0},\varepsilon}\geq\underline{\zeta}^{\frac{1}{3}}-\varepsilon$. Since $\underline{\zeta}>0$ is independent of $\varepsilon$, we find that $\phi_{r_{0},\varepsilon}\geq\frac{1}{2}\underline{\zeta}^{\frac{1}{3}}$ for all sufficiently small $\varepsilon$.

\begin{proposition}\label{lowerbound}
If $\lambda,\tau_{0}>0$ then there exists a smooth positive function $\underline{\zeta}$ independent of  $\varepsilon$, and $\phi_{r_{0},\varepsilon}$, such that $\phi_{r_{0},\varepsilon}\geq\underline{\zeta}^{\frac{1}{3}}-\varepsilon$ on $M(\tau_{0},r_{0})$. In particular,
$\phi_{r_{0},\varepsilon}\geq\frac{1}{2}\underline{\zeta}^{\frac{1}{3}}$ for all sufficiently small $\varepsilon$.
\end{proposition}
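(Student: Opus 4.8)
\textit{Plan.} I would split the statement into two parts: (a) existence of a strictly positive function $\underline{\zeta}$, independent of $\varepsilon$ and of $\phi_{r_{0},\varepsilon}$, solving the Dirichlet problem \eqref{126}, \eqref{127}; and (b) the pointwise comparison $\phi_{r_{0},\varepsilon}\geq\underline{\zeta}^{1/3}-\varepsilon$, from which $\phi_{r_{0},\varepsilon}\geq\tfrac12\underline{\zeta}^{1/3}$ for small $\varepsilon$ is immediate. Part (b) is essentially the computation already recorded in \eqref{120}--\eqref{130}, so the substance of the proof is part (a).

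For part (b): writing $L:=\Delta-\overline{v}^{-1}\nabla\overline{v}\cdot\nabla$, equations \eqref{123}--\eqref{124} show that $\zeta:=(\phi_{r_{0},\varepsilon}+\varepsilon)^{3}$ satisfies $L\zeta-\Phi\zeta^{1/3}\leq 0$ on $M(\tau_{0},r_{0})$, and the $C^{0},C^{1}$ bounds of this section give $\Phi\leq\Phi_{0}$, hence $L\zeta-\Phi_{0}\zeta^{1/3}\leq 0$; moreover $\zeta>\underline{\zeta}$ on $\partial M(\tau_{0},r_{0})$ by \eqref{125}, \eqref{127}. Thus $\zeta$ is a supersolution of $L(\cdot)=\Phi_{0}(\cdot)^{1/3}$ lying strictly above $\underline{\zeta}$ on the boundary, and at an interior minimum of $\zeta-\underline{\zeta}$ one reads off a contradiction exactly as in \eqref{128}--\eqref{130}, using that $t\mapsto t^{1/3}$ is increasing. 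This gives $\zeta\geq\underline{\zeta}$, i.e. $\phi_{r_{0},\varepsilon}\geq\underline{\zeta}^{1/3}-\varepsilon$; since $\underline{\zeta}$ is a fixed positive continuous function on the compact region $M(\tau_{0},r_{0})$, $\min\underline{\zeta}^{1/3}>0$, so $\underline{\zeta}^{1/3}-\varepsilon\geq\tfrac12\underline{\zeta}^{1/3}$ once $\varepsilon<\tfrac12\min\underline{\zeta}^{1/3}$.

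For part (a), note first that on $M(\tau_{0},r_{0})$ one stays at distance at least $\tau_{0}$ from $\partial M_{t}$, where $\overline{v}$ vanishes, so $L$ is uniformly elliptic with bounded coefficients there, and \eqref{126}--\eqref{127} is a semilinear Dirichlet problem with a nonnegative, increasing, H\"older-continuous sublinear nonlinearity $s\mapsto\Phi_{0}s^{1/3}$ and strictly positive boundary data on both components $S_{\tau_{0}}$, $S_{r_{0}}$. Existence of a solution in $[0,\Gamma]$, for a large constant $\Gamma$, follows from the method of sub- and supersolutions (or from Theorem \ref{theorem:lerayschauder}): $\underline{\zeta}\equiv 0$ is a trivial subsolution because $\Phi_{0}\cdot 0=0$, any sufficiently large constant is a supersolution, and these are ordered; the only mild care needed is that the nonlinearity is not Lipschitz at the origin, which is handled by first solving the regularized problem with $\Phi_{0}s^{1/3}$ replaced by $\Phi_{0}(s^{2}+\eta)^{1/6}$ and letting $\eta\to 0$. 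Interior elliptic ($W^{2,p}$ then Schauder) estimates upgrade $\underline{\zeta}$ to a function that is $C^{1,\alpha}$ up to the boundary and smooth wherever it is positive, and the monotonicity of the nonlinearity yields uniqueness; the resulting $\underline{\zeta}$ is manifestly independent of $\varepsilon$ and of $\phi_{r_{0},\varepsilon}$.

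The delicate step, which I expect to be the main obstacle, is strict positivity of $\underline{\zeta}$. The weak minimum principle gives $\underline{\zeta}\geq 0$ (on the set where $\underline{\zeta}<0$ the equation makes $\underline{\zeta}$ an $L$-superharmonic function whose infimum would have to be attained on $\{\underline{\zeta}=0\}$, a contradiction), but ruling out interior \emph{zeros} of $\underline{\zeta}$ is genuinely subtle for a sublinear equation on a large domain with small boundary data, since a priori a ``dead core'' could form; this positivity is essential because $\phi_{t}$ appears in the denominator of \eqref{1.2}. I would secure it by replacing the trivial subsolution by a strictly positive one: near $S_{\tau_{0}}$ and near $S_{r_{0}}$ build local barriers of the eikonal/distance-function type already used in Lemmas \ref{lemmagrad} and \ref{lemmagrad1}, arranged to stay positive and to match the cubes of the respective boundary values from below, and patch them in the interior to a small positive function, using the control on $\Phi_{0}$ (it decays like $r^{-4}$ in the asymptotic end and is of size $O(\lambda^{2})$ near the horizon) so that $L\underline{w}$ dominates the sublinear source $\Phi_{0}\underline{w}^{1/3}$ for $\underline{w}$ bounded below. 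Once a positive ordered pair of sub- and supersolutions is available, the positive solution $\underline{\zeta}$ exists with $\min_{M(\tau_{0},r_{0})}\underline{\zeta}>0$, completing the proof.
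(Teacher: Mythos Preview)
Your part (b) matches the paper's computation \eqref{128}--\eqref{130} exactly. For part (a), however, the paper takes a different and more economical route: rather than sub- and supersolutions, it rewrites \eqref{126} in divergence form $\operatorname{div}(\overline{v}^{-1}\nabla\underline{\zeta})=\overline{v}^{-1}\Phi_{0}\underline{\zeta}^{1/3}$ and obtains $\underline{\zeta}$ as the $H^{1}$-minimizer of the convex coercive functional
\[
I[\zeta]=\int_{M(\tau_{0},r_{0})}\left(\frac{|\nabla\zeta|^{2}}{2|\overline{v}|}+\frac{3\Phi_{0}}{4|\overline{v}|}\zeta^{4/3}\right),
\]
establishing the lower bound on $I$ via H\"older and Poincar\'e, extracting a minimizer by the direct method, and then bootstrapping regularity from $W^{2,6}$ up to $C^{\infty}$. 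This sidesteps your regularization of the non-Lipschitz nonlinearity entirely. For strict positivity the paper simply invokes the maximum principle in one line, so your careful identification of the dead-core obstruction is in fact more scrupulous than the paper's treatment.

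That said, your proposed remedy for positivity has a gap of its own. ``Patching in the interior to a small positive function'' will not produce a subsolution of $L\underline{w}\geq\Phi_{0}\underline{w}^{1/3}$: for a sublinear source this inequality is \emph{hardest} to satisfy when $\underline{w}$ is small, since for a constant $c>0$ one has $Lc=0<\Phi_{0}c^{1/3}$. The local eikonal barriers from Lemmas \ref{lemmagrad} and \ref{lemmagrad1} live only in thin collars of $S_{\tau_{0}}$ and $S_{r_{0}}$ and become negative just beyond; to cover the intermediate region you would need a genuinely global construction with $L\underline{w}>0$ there, and the decay of $\Phi_{0}$ you cite does not by itself supply one on the bounded domain $M(\tau_{0},r_{0})$. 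So while your overall strategy is a legitimate alternative to the paper's variational argument, the positive-subsolution step remains to be carried out.
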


\begin{proof}
It only remains to establish the existence of $\underline{\zeta}$. Write the equation as
\begin{equation}\label{131}
\operatorname{div}\left(\frac{\nabla\underline{\zeta}}{\overline{v}}\right)-\frac{\Phi_{0}}{\overline{v}}\underline{\zeta}^{\frac{1}{3}}=0,
\end{equation}
and consider the functional
\begin{equation}\label{132}
I[\zeta]=\int_{M(\tau_{0},r_{0})}\frac{1}{2}\frac{|\nabla\zeta|^{2}}{|\overline{v}|}
+\frac{3}{4}\frac{\Phi_{0}}{|\overline{v}|}\zeta^{\frac{4}{3}}.
\end{equation}
We will minimize it over all $H^{1}(M(\tau_{0},r_{0}))$-functions with the fixed boundary values as in \eqref{127}. To show that this functional is bounded from below, apply H\"{o}lder's inequality and use $\operatorname{Vol}(M(\tau_{0},r_{0}))<\infty$ to find
\begin{equation}\label{133}
\int_{M(\tau_{0},r_{0})}\frac{\Phi_{0}}{|\overline{v}|}\zeta^{\frac{4}{3}}
\leq \sigma\parallel\zeta\parallel_{L^{2}}^{2}+c_{1},
\end{equation}
where $\sigma$ is a small parameter. Moreover, by Poincar\'{e}'s inequality
\begin{align}\label{135}
\begin{split}
\int_{M(\tau_{0},r_{0})}|\nabla\zeta|^{2}&=\int_{M(\tau_{0},r_{0})}|\nabla(\zeta-(\lambda\tau_{0}^{4})^{3})|^{2}\\
&\geq\int_{M(\tau_{0},r_{0})}c_{2}^{-1}|\zeta-(\lambda\tau_{0}^{4})^{3}|^{2}\\
&=\int_{M(\tau_{0},r_{0})}c_{2}^{-1}(\zeta^{2}-2(\lambda\tau_{0}^{4})^{3}\zeta+(\lambda\tau_{0}^{4})^{6})\\
&\geq\int_{M(\tau_{0},r_{0})}2c_{2}^{-1}\zeta^{2}-c_{3}.
\end{split}
\end{align}
It follows that if $\sigma$ is sufficiently small then
\begin{equation}\label{136}
I[\zeta]\geq c_{4}^{-1}\parallel\zeta\parallel_{H^{1}}^{2}-c_{5},
\end{equation}
and hence $I[\zeta]$ is bounded from below.

By direct methods in the calculus of variations, there is a minimizing sequence that has a weakly convergent (in $H^{1}$) subsequence $\zeta_{i}\rightharpoonup\underline{\zeta}$. Then since $H^{1}\hookrightarrow L^{2}$ is compact, $\zeta_{i}\rightarrow\underline{\zeta}$ strongly in $L^{2}$. In particular $\zeta_{i}\rightarrow\underline{\zeta}$ in $L^{\frac{4}{3}}$,
so that
\begin{equation}\label{138}
\int_{M(\tau_{0},r_{0})}\zeta_{i}^{\frac{4}{3}}\rightarrow
\int_{M(\tau_{0},r_{0})}\underline{\zeta}^{\frac{4}{3}}.
\end{equation}
Since the $H^{1}$ norm is weakly lower semicontinuous, it follows that $\underline{\zeta}$ realizes the infimum, and is hence a weak solution.

Now use elliptic regularity. Since $\underline{\zeta}^{\frac{1}{3}}\in L^{6}$, by \eqref{126} we have $\Delta\underline{\zeta}-\frac{\nabla \overline{v}\cdot\nabla\underline{\zeta}}{\overline{v}}\in L^{6}$, which implies that $\underline{\zeta}\in W^{2,6}\hookrightarrow C^{1,\frac{1}{2}}$. This in turn implies $\underline{\zeta}^{\frac{1}{3}}\in C^{0}$, so that $\underline{\zeta}\in W^{2,p}$ for any $p>1$,
and hence $\underline{\zeta}\in C^{1,\alpha}$ for any $\alpha<1$. Since $\underline{\zeta}>0$ it follows that  $\underline{\zeta}^{\frac{1}{3}}\in C^{0,\alpha}$, and hence $\underline{\zeta}\in C^{2,\alpha}$. Boot-strapping then produces $\underline{\zeta}\in C^{\infty}$.
\end{proof}

Having established a uniform lower bound, we may let $\varepsilon\rightarrow 0$ and $r_{0}\rightarrow\infty$ to obtain the main existence result.

\begin{theorem}\label{existence1}
If $\lambda,\tau_{0}>0$, $\tau_{0}$ is sufficiently small, and $\Lambda>8$, then
there exists a positive solution $\phi_{t}\in C^{2,\alpha}(M(\tau_{0}))$, for any $\alpha\in[0,1)$, of \eqref{1.2}, \eqref{1.3}.
\end{theorem}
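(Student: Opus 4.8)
The plan is to obtain $\phi_t$ as a double limit of the approximate solutions $\phi_{r_0,\varepsilon}$ furnished by Theorem \ref{existence0} for $s=1$, sending first $\varepsilon\to 0$ with $r_0$ fixed, and then $r_0\to\infty$. Everything rests on the fact that the a priori estimates already in hand are uniform in both parameters: Proposition \ref{supbounds} gives $0<\phi_{r_0,\varepsilon}\leq\lambda\tau_0^4$; Corollary \ref{globalgrad} gives $\sup_{M(\tau_0,r_0)}|\nabla\phi_{r_0,\varepsilon}|\leq C(1+|f|_{C^1})$ with $C$ independent of $\varepsilon$ and $r_0$; and Proposition \ref{lowerbound} gives $\phi_{r_0,\varepsilon}\geq\tfrac12\underline{\zeta}^{1/3}$ for all small $\varepsilon$, where $\underline{\zeta}>0$ solves \eqref{126}, \eqref{127}. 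Since the coefficient $\Phi_0$ in \eqref{126} is itself independent of $r_0$, these functions $\underline{\zeta}$ converge as $r_0\to\infty$ to a positive function on $M(\tau_0)$, so the positive lower bound is locally uniform in $r_0$ as well. Finally, since $\overline{v}_{t,r_0}$ solves the linear problem \eqref{54'}, \eqref{55'} and $f_{t,r_0}$ agrees with $f_t$ on $M(0,\tfrac14 r_0)$, standard elliptic theory gives $\overline{v}_{t,r_0}\to\overline{v}_t$ in $C^2_{\mathrm{loc}}(M_t)$, with all the $\overline{v}_{t,r_0}$ bounded away from zero on $M(\tau_0)$ uniformly in $r_0$.

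With these preliminaries, fix $r_0$ and let $\varepsilon\to 0$. On any compact subset of $M(\tau_0,r_0)$ the lower bound forces $\tfrac{\Lambda\phi_{r_0,\varepsilon}}{(\phi_{r_0,\varepsilon}+\varepsilon)^2}\to\tfrac{\Lambda}{\phi_{r_0,\varepsilon}}$ uniformly, and the gradient bound (together with $\overline{v}_{t,r_0}$ bounded away from zero and the Lipschitz continuity of $t\mapsto|t|$) shows that the right-hand side of \eqref{3} with $s=1$ is bounded in $C^{0,\alpha}$ there. Interior and boundary Schauder estimates — the Dirichlet datum $\lambda\tau_0^4$ on $S_{\tau_0}$ being constant and $S_{\tau_0}$ smooth — then give an $\varepsilon$-independent $C^{2,\alpha}(M(\tau_0,r_0))$ bound, so a subsequence converges in $C^2$ to a positive $\phi_{r_0}\in C^{2,\alpha}$ solving \eqref{1.2'}, \eqref{1.3'} and obeying the same bounds. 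Now let $r_0\to\infty$: because those bounds and the local positivity are uniform in $r_0$, interior Schauder estimates on compact subsets of $M(\tau_0)$ and boundary Schauder estimates near $S_{\tau_0}$ yield $C^{2,\alpha}_{\mathrm{loc}}$ bounds, and a diagonal subsequence converges in $C^2_{\mathrm{loc}}(M(\tau_0))$ to a positive $\phi_t\in C^{2,\alpha}(M(\tau_0))$ solving \eqref{1.2} with $\phi_t=\lambda\tau_0^4$ on $S_{\tau_0}$. The $C^{2,\alpha}$ regularity for every $\alpha\in[0,1)$ is just the Schauder bootstrap already used: $\phi_t\in C^{1,\alpha}$ makes the right-hand side of \eqref{1.2} lie in $C^{0,\alpha}$, hence $\phi_t\in C^{2,\alpha}$.

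The remaining point — the asymptotic condition $\phi_t\to 0$ in \eqref{1.3} — is the one not delivered by local compactness, and I regard it as the main obstacle. The key observation is that the right-hand side of \eqref{1.2'} is nonnegative, so $\phi_{r_0}$ is a subsolution of the linear operator $L_{r_0}:=\Delta_{g_t}-\overline{v}_{t,r_0}^{-1}\nabla\overline{v}_{t,r_0}\cdot\nabla$, which has no zeroth-order term and hence satisfies the maximum principle on the bounded domain $M(\tau_0,r_0)$. Comparing with the $L_{r_0}$-harmonic function $w_{r_0}$ carrying the same Dirichlet data $\lambda\tau_0^4$ on $S_{\tau_0}$ and $\tfrac{\delta}{4r_0}$ on $S_{r_0}$ gives $\phi_{r_0}\leq w_{r_0}$, and as $r_0\to\infty$ one has $w_{r_0}\to w$, the unique bounded $L$-harmonic function on $M(\tau_0)$ with $w=\lambda\tau_0^4$ on $S_{\tau_0}$ and $w\to 0$ at infinity (the outer contributions vanish in the limit because $\tfrac{\delta}{4r_0}\to 0$). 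Since $L$ is asymptotic to the Euclidean Laplacian on the asymptotically flat end, expanding $w$ in spherical harmonics gives $w=O(r^{-1})$; hence $0<\phi_t\leq w=O(r^{-1})$, so $\phi_t\to 0$, and moreover $\phi_t$ has the $O(r^{-1})$ decay needed in Section \ref{sec5} (which also follows directly from a spherical-harmonic expansion of the solution of \eqref{1.2}, whose right-hand side decays like $r^{-4}$). The delicate aspect of this last step is keeping every constant in the comparison independent of $r_0$, so that the bound survives the passage to the limit.
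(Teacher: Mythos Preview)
Your proof is correct and follows essentially the same route as the paper: pass $\varepsilon\to 0$ using the $\varepsilon$-independent lower bound from Proposition~\ref{lowerbound} to bootstrap to $C^{2,\alpha}$, then $r_0\to\infty$ using the locally uniform positivity and gradient bounds for compactness, and finally control the decay at infinity by comparing with a solution of the linear equation $\Delta w-\overline{v}^{-1}\nabla\overline{v}\cdot\nabla w=0$. The only cosmetic difference is in the upper barrier for the decay: the paper fixes a global barrier $\overline{\phi}_{t,r_0}$ on all of $M(\tau_0)$ with boundary value $1$ on $S_{\tau_0}$ and built-in $O(r^{-1})$ decay (equation~\eqref{138.2}), then chooses $\delta$ small enough that it dominates on $S_{r_0}$, whereas you take $w_{r_0}$ on $M(\tau_0,r_0)$ matching the exact data and argue its limit decays---both variants work.
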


\begin{proof}
In light of the uniform lower bound in Proposition \ref{lowerbound}, the same arguments appearing in the proof of Theorem \ref{existence0} yield a $C^{2,\alpha}$ estimate \eqref{138.1}, with $C$ independent of $\varepsilon$. Thus, after possibly passing to a subsequence $\phi_{r_{0},\varepsilon}\rightarrow\phi_{t,r_{0}}$ as $\varepsilon\rightarrow 0$, yielding a positive solution of \eqref{1.2'}, \eqref{1.3'} which also satisfies \eqref{138.1}.

We will now let $r_{0}\rightarrow\infty$. In order to control the decay at spatial infinity, we construct a global upper barrier by solving
\begin{equation}\label{138.2}
\Delta\overline{\phi}_{t,r_{0}}-\frac{\nabla\overline{v}_{t,r_{0}}\cdot\overline{\phi}_{t,r_{0}}}{\overline{v}_{t}}=0\text{ }\text{ }\text{ on }\text{ }\text{ }M(\tau_{0}), \text{ }\text{ }\text{ }\text{ }\overline{\phi}_{t,r_{0}}|_{S_{\tau_{0}}}
=1, \text{ }\text{ }\text{ }\text{ }\overline{\phi}_{t,r_{0}}=\frac{c_{t,r_{0}}}{r}+O\left(\frac{1}{r^{2}}\right)\text{ }\text{ }\text{ as }\text{ }\text{ }
r\rightarrow\infty,
\end{equation}
where $c_{t,r_{0}}$ is a positive constant. Since $\overline{v}_{t,r_{0}}$ smoothly converges to $\overline{v}_{t}$ as $r_{0}\rightarrow\infty$, it holds that $c_{t,r_{0}}\rightarrow c_{t}>0$ and $\overline{\phi}_{t,r_{0}}\rightarrow\overline{\phi}_{t}$, the solution of \eqref{138.2} with $\overline{v}_{t,r_{0}}$ replaced by $\overline{v}_{t}$. Thus
if $\delta$ is chosen sufficiently small (independent of $r_{0}$), then $\overline{\phi}_{t,r_{0}}|_{S_{r_{0}}}>\phi_{t,r_{0}}|_{S_{r_{0}}}$. A standard comparison argument now applies to yield $\overline{\phi}_{t,r_{0}}>\phi_{t,r_{0}}$ on $M(\tau_{0},r_{0})$.

Let $\underline{\zeta}_{t,r_{0}}$ denote the lower bound which arises from the construction in Proposition \ref{lowerbound}. Since $\Phi_{0}$ and $\sup\underline{\zeta}_{t,r_{0}}$ are controlled independent of $r_{0}$, higher order a priori estimates for $\underline{\zeta}_{t,r_{0}}$ are also independent of $r_{0}$, so that $\underline{\zeta}_{t,r_{0}}\rightarrow\underline{\zeta}_{t}$ on compact subsets as $r_{0}\rightarrow\infty$, where $\underline{\zeta}_{t}>0$ satisfies equation \eqref{126} with $\overline{v}_{t,r_{0}}$ replaced by $\overline{v}_{t}$. Therefore, since $\underline{\zeta}_{t,r_{0}}^{\frac{1}{3}}\leq\phi_{t,r_{0}}$, the function $\phi_{t,r_{0}}$ is uniformly bounded below by a positive constant on any compact subset, independent of $r_{0}$. As the $C^{0}$ and $C^{1}$ bounds of Proposition \ref{supbounds} and Corollary \ref{globalgrad} are also independent of $r_{0}$, $C^{2,\alpha}$ estimates on compact subsets may be established analogously to \eqref{138.1} and independently of $r_{0}$. This implies that, after possibly passing to a subsequence, $\phi_{t,r_{0}}\rightarrow\phi_{t}$ as $r_{0}\rightarrow\infty$, where $\phi_{t}$ satisfies \eqref{1.2}. Moreover since
$\underline{\zeta}_{t}^{\frac{1}{3}}\leq\phi_{t}\leq\overline{\phi}_{t}$, \eqref{1.3} is also valid and $\phi_{t}$ is strictly positive.
\end{proof}

\begin{remark}
It is unlikely that higher order regularity better than $C^{2,\alpha}$ is possible, due to the presence of an absolute value on the right-hand side of \eqref{1.2}.
\end{remark}

\section{Proof of the Main Theorem}
\label{sec7}

The purpose of this section is to prove Theorem \ref{main:theorem}. First observe that if $\rho=|q|$, or equivalently $|\partial M|_{g}=4\pi q^{2}$, then the Penrose inequality with charge \eqref{charged-penrose-inequality} is equivalent to the positive mass theorem with charge $m\geq|q|$, which has already been established in \cite{GibbonsHawkingHorowitzPerry}. The case of equality for this theorem, asserts that the initial data must be isometric to the canonical slice of the Majumdar-Papapetrou spacetime \cite{ChruscielReallTod}. However, such initial data does not possess a minimal surface boundary, and hence does not fall under the hypotheses of Theorem \ref{main:theorem}. Hence, when the area/charge inequality is saturated we must have $m>|q|$.

Now assume that $|\partial M|_{g}>4\pi q^{2}$, so that in particular Theorem \ref{exhaustion} applies.
Note that the results of Section \ref{sec4} require a perturbation of the initial data to achieve charged harmonic asymptotics, thus we assume here that such a perturbation has been made.
Theorems \ref{flowproperties} and \ref{monotonicity} imply that the areas $|\partial M_{t}|_{g_{t}}$ and charges $q_{t}$ remain constant throughout
the flow, while the mass $m(t)$ is nonincreasing. Moreover, Theorem \ref{exhaustion} guarantees that for
some finite time $\overline{t}$ the minimal boundary $\partial M_{\overline{t}}$ is connected. Consider
now the initial data set $(M_{\overline{t}},g_{\overline{t}},E_{\overline{t}},B_{\overline{t}})$. It satisfies all the hypotheses of the Penrose inequality with charge for a single black hole, namely the
boundary is an outermost minimal surface, the charged dominant energy condition is valid, and the Maxwell
constraints holds without charged matter. As the inequality has been proven for a single black hole \cite{HuiskenIlmanen}, \cite{Jang}, it follows that
\begin{equation}
m(0)\geq m(\overline{t})\geq\sqrt{\frac{|\partial M_{\overline{t}}|_{g_{t}}}{16\pi}}
+\sqrt{\frac{\pi}{|\partial M_{\overline{t}}|_{g_{t}}}}q_{\overline{t}}^{2}
=\sqrt{\frac{|\partial M_{0}|_{g_{0}}}{16\pi}}
+\sqrt{\frac{\pi}{|\partial M_{0}|_{g_{0}}}}q_{0}^{2}.
\end{equation}
Since this holds for an arbitrarily small perturbation of the data, \eqref{charged-penrose-inequality} holds for the original initial data.
Notice that we avoid the convergence issue concerning the flow, by employing the inverse mean curvature
flow once the minimal surface becomes connected. Whether the charged conformal flow converges to the
canonical Reissner-Nordstr\"{o}m data is an interesting question, which we strongly believe has an
affirmative answer. Ultimately though, it is not necessary for the current result, and so it will be
left for future investigation.

It remains to establish the rigidity statement, whose proof is outlined as follows. First, it will be shown that saturation of the Penrose inequality with charge forces the mass to be constant along the charged conformal flow. From the first derivative formula, this implies that the mass of the doubled manifold is arbitrarily small, and hence the doubled manifold must have trivial topology. In this case the original initial data has only a single boundary component, and previous results \cite{DisconziKhuri} then apply to yield the desired conclusion. We now provide a detailed account.

Here it is not necessary to perturb the initial data to achieve charged harmonic asymptotics. Suppose that equality holds in \eqref{charged-penrose-inequality}. We conclude that the mass remains constant throughout
the flow $m'(t)=0$. To see this, suppose not, then
\begin{equation}
m(\widetilde{t})<\sqrt{\frac{|\partial M_{0}|_{g_{0}}}{16\pi}}
+\sqrt{\frac{\pi}{|\partial M_{0}|_{g_{0}}}}q_{0}^{2}
\end{equation}
for some $\widetilde{t}$. A contradiction is obtained by applying the results of the previous paragraph to the initial data set $(M_{\widetilde{t}},g_{\widetilde{t}},E_{\widetilde{t}},B_{\widetilde{t}})$.
Equation \eqref{mass1} now implies that
\begin{equation}\label{270.0}
e^{-2t}(\gamma_{t}-\overline{\gamma}_{t})+\frac{1}{2}\theta_{t}^{\varepsilon}=\widetilde{m}^{\varepsilon}(t)\text{ }\text{ }\text{ }\text{ for all }\text{ }\text{ }\text{ }t\geq 0,
\end{equation}
where $\widetilde{m}^{\varepsilon}(t)$ is the mass of the doubled manifold $(M_{t}^{+}\cup M_{t}^{-},(g_{t}^{\varepsilon})^{+}\cup (g_{t}^{\varepsilon})^{-})$ and $g_{t}^{\varepsilon}$
is defined in the proof of Theorem \ref{monotonicity}.

We claim that the doubled manifold must be diffeomorphic to $\mathbb{R}^{3}$. To see this, suppose that it is not true. Then $\partial M_{t}$ must have two or more components.
According to the result of Meeks-Simon-Yau \cite{MeeksSimonYau}, there is then an outermost minimal
surface $\mathcal{S}_{t}(\varepsilon,\tau_{0})$ in the doubled manifold which encloses the nontrivial topology. 
Since the scalar curvature of the doubled manifold is nonnegative, the Penrose inequality and \eqref{270.0}
then yield
\begin{equation}\label{270}
e^{-2t}(\gamma_{t}-\overline{\gamma}_{t})+\frac{1}{2}\theta_{t}^{\varepsilon}
\geq\sqrt{\frac{|\mathcal{S}_{t}(\varepsilon,\tau_{0})|_{(g_{t}^{\varepsilon})^{+}\cup (g_{t}^{\varepsilon})^{-}}}{16\pi}}\text{ }\text{ }\text{ }\text{ for all }\text{ }\text{ }\text{ }t\geq 0.
\end{equation}
For each fixed $t$, Theorem \ref{thm15} shows that $|\gamma_{t}-\overline{\gamma}_{t}|\rightarrow 0$ as
$\tau_{0}\rightarrow 0$, and from the proof of Theorem \ref{monotonicity} we have $\theta_{t}^{\varepsilon}\rightarrow 0$ as $\varepsilon\rightarrow 0$. On the other hand $\phi_{t}\rightarrow 0$ and $\overline{v}_{t}\rightarrow v_{t}$ both in $C^{0}$ as $\tau_{0}\rightarrow 0$ and $\varepsilon\rightarrow 0$ (note that higher order convergence of $\phi_{t}$ is not generally possible), so that the conformal factors defining the doubled metric converge as $w_{t}^{\pm}\rightarrow
\frac{1\pm v_{t}}{2}$ in $C^{0}$. Let $\widetilde{g}_{t}^{\pm}=\left(\frac{1\pm v_{t}}{2}\right)^{4}g_{t}$. It follows that
\begin{equation}\label{270.001}
|\mathcal{S}_{t}(\varepsilon,\tau_{0})|_{(g_{t}^{\varepsilon})^{+}\cup (g_{t}^{\varepsilon})^{-}}
\geq\frac{1}{2}|\mathcal{S}_{t}(\varepsilon,\tau_{0})|_{\widetilde{g}_{t}^{+}\cup \widetilde{g}_{t}^{-}},
\end{equation}
for $\tau_{0}$, $\varepsilon$ sufficiently small. There is, however, a positive lower bound for the area of any surface enclosing the nontrivial topology in $(M_{t}^{+}\cup M_{t}^{-},\widetilde{g}_{t}^{+}\cup \widetilde{g}_{t}^{-})$. Hence
there is a positive lower bound independent of $\tau_{0}$ and $\varepsilon$ for the right-hand side of \eqref{270.001}. This leads to a contradiction with \eqref{270} for sufficiently small $\tau_{0}$ and $\varepsilon$. Therefore the doubled manifold must have trivial topology, or equivalently $\partial M_{t}$ consists of one component for all $t\geq 0$.

Rigidity for the Penrose inequality with charge in the case of one black hole was established in \cite{DisconziKhuri}. This result relies on monotonicity of the so called charged Hawking mass
\begin{equation}\label{chargedHM}
M_{CH}(S) = \sqrt{\frac{|S|}{16\pi} } \left(1 + \frac{4\pi q^2}{|S|} - \frac{1}{16\pi}\int_S H^2 dA\right),
\end{equation}
under inverse mean curvature flow. In \cite{DisconziKhuri} only the electric field was present, and $q$
in \eqref{chargedHM} represented the total electric charge. However, the same proof applies when both
the electric and magnetic fields are present, if $q^2=q_{e}^{2}+q_{b}^{2}$. It follows that $(M,g)$ is
isometric to the canonical slice of the Reissner-Nordstr\"{o}m spacetime, and $E=-q_{e}\nabla r^{-1}$,
$B=-q_{b}\nabla r^{-1}$ in the usual anisotropic coordinates. This concludes the proof.

\appendix

\section{Estimates at $S_{\tau_{0}}$}
\label{sec8} 

In this section we will establish the estimates for $\phi_{t}$ at $S_{\tau_{0}}$ appearing in \eqref{216.3}. For simplicity, the subindex $t$ will be suppressed. Note that by the Hopf lemma $\partial_{\tau}\phi|_{S_{\tau_{0}}}<0$, so the level sets of $\phi$ foliate a neighborhood of $S_{\tau_{0}}$. Consider the domain
\begin{equation}\label{223}
\Omega_{\rho}=\{x\in M\mid\sigma\lambda\tau_{0}^{4}<\phi<\lambda\tau_{0}^{4}\},
\end{equation}
where $\sigma$ is sufficiently small to guarantee that $\Omega_{\sigma}\subset D(\tau_{0},\frac{5}{4}\tau_{0})$, the domain enclosed by $S_{\tau_{0}}$ and $S_{\frac{5}{4}\tau_{0}}$. Then $f=\lambda^{2}[1-2\tau_{0}^{-1}(\tau-\tau_{0})]$ on $\Omega_{\sigma}$. We may then apply the method of proof for Proposition \ref{interiorgrad} to obtain a $C^{1}$ estimate on $\Omega_{\sigma}$. The key inequality which leads to the desired estimate is the analogue of \eqref{17}. Namely using \eqref{asdfgh}, $c^{-1}\tau_{0}\leq|\overline{v}|\leq c\tau_{0}$, and $s=1$, $\varepsilon=0$ we find that at a global maximum for $|\nabla \phi|$,
\begin{equation}\label{224}
0\geq\frac{c^{-1}(\Lambda-8)}{\tau_{0}^{2}\phi^{2}}|\nabla\phi|^{3}
-\frac{c\Lambda}{\tau_{0}^{3}\phi}|\nabla\phi|^{2}
-c\left(\frac{\Lambda f}{\phi^{2}}+1\right)|\nabla\phi|-\frac{\Lambda}{\phi}|\nabla f|-\frac{c\Lambda f}{\tau_{0}\phi}
\end{equation}
for some constant $c>0$ independent of $\lambda$, $\tau_{0}$, and $\sigma$. It is clear that if $|\nabla\phi|>c_{0}\lambda\tau_{0}$
for $c_{0}$ sufficiently large, then a contradiction is obtained from \eqref{224}. We conclude that at a
global interior maximum, there exists a finite constant $c$ such that $|\nabla\phi|\leq c\lambda\tau_{0}$.
It follows that
\begin{equation}\label{230}
\sup_{\Omega_{\sigma}}|\nabla \phi|\leq c\lambda\tau_{0}+\sup_{\partial\Omega_{\sigma}}|\nabla \phi|.
\end{equation}

The boundary of $\Omega_{\sigma}$ consists of two types of components. One is $S_{\tau_{0}}$, where a gradient estimate has already been established in Lemma \ref{lemmagrad}. The other type of component is also a level set of $\phi$, and possesses a neighborhood in which $f$ has the desired expression, so Lemma \ref{lemmagrad} applies here as well to yield $|\nabla\phi|_{\partial\Omega_{\sigma}}\leq c\lambda\tau_{0}$. Together with \eqref{230}, this implies that
\begin{equation}\label{233}
\sup_{\Omega_{\sigma}}|\nabla\phi|\leq c\lambda\tau_{0}.
\end{equation}

We now claim that
$S_{\overline{\tau}_{0}}\subset\Omega_{\sigma}$ for some $\overline{\tau}_{0}>\tau_{0}$ independent of $\lambda$.
To see this, let $\overline{x}$ be coordinates on $S_{\tau}$, and observe that \eqref{233} implies
\begin{equation}\label{236}
\phi(\overline{x},\tau)-\phi(\overline{x},\tau_{0})
=\int_{\tau_{0}}^{\tau}\partial_{\tau}\phi(\overline{x},\varsigma)d\varsigma\geq
-c\lambda\tau_{0}(\tau-\tau_{0}),
\end{equation}
so that
\begin{equation}\label{237}
\phi(\overline{x},\tau)\geq\lambda\tau_{0}^{4}-c\lambda\tau_{0}(\tau-\tau_{0}).
\end{equation}
Thus, if $\overline{\tau}_{0}=\tau_{0}+\frac{1}{2c}\tau_{0}^{3}$ then
\begin{equation}\label{238}
\phi|_{S_{\overline{\tau}_{0}}}\geq\frac{1}{2}\lambda\tau_{0}^{4}>\sigma\lambda\tau_{0}^{4},
\end{equation}
assuming that $\sigma<1/2$. This verifies the claim. It follows that $\operatorname{Vol}(\Omega_{\sigma})\geq C_{0}>0$
where $C_{0}$ is independent of $\lambda$. Therefore
the constants appearing in the $L^{p}$ and Schauder estimates, as well as the Sobolev embeddings, are independent of $\lambda$.

By the $L^{p}$ estimates and the equation \eqref{1.2} satisfied by $\phi$,
\begin{equation}\label{240}
\parallel\phi\parallel_{W^{2,p}(\Omega_{\sigma})}
\leq c\left(\Lambda\parallel \phi^{-1}f\parallel_{L^{p}(\Omega_{\sigma})}+\Lambda\parallel\phi^{-1}|\nabla\phi|^{2}\parallel_{L^{p}(\Omega_{\sigma})}
+\parallel\phi\parallel_{L^{p}(\Omega_{\sigma})}+\parallel\phi\parallel_{W^{2,p}(\partial\Omega_{\sigma})}\right)
\end{equation}
where the constant $c$ depends on $\tau_{0}$ but not on $\lambda$. With the aid of \eqref{233}, $\sigma\lambda\tau_{0}^{4}\leq\phi\leq\lambda\tau_{0}^{4}$, and $f\leq \lambda^{2}$, this implies that
\begin{equation}\label{241}
\parallel\phi\parallel_{W^{2,p}(\Omega_{\sigma})}\leq c(\tau_{0})\lambda,\text{ }\text{ }\text{ }\text{ }\text{ }\text{ which yields }\text{ }\text{ }\text{ }\text{ }\text{ }\text{ }|\phi|_{C^{1,\alpha}(\Omega_{\sigma})}\leq c(\tau_{0})\lambda.
\end{equation}
By the Schauder estimates
\begin{equation}\label{243}
|\phi|_{C^{2,\alpha}(\Omega_{\sigma})}
\leq c\left(\Lambda| \phi^{-1}f|_{C^{0,\alpha}(\Omega_{\sigma})}+\Lambda|\phi^{-1}|\nabla\phi|^{2}|_{C^{0,\alpha}(\Omega_{\sigma})}
+|\phi|_{C^{0}(\Omega_{\sigma})}+|\phi|_{C^{2,\alpha}(\partial\Omega_{\sigma})}\right)\leq c(\tau_{0})\lambda.
\end{equation}
Due to the absolute value on the right-hand side of \eqref{1.2}, $C^{2,\alpha}$-estimates are generally the highest order estimates possible. However, since one-sided derivatives of the absolute value of a smooth
function are Lipschitz, taking the one-sided derivative $\partial_{\tau}$ at $S_{\tau_{0}}$ yields an equation
for $\partial_{\tau}\phi$ whose right-hand side is Lipschitz near $S_{\tau_{0}}$. Then $C^{2,\alpha}$-estimates follow for $\partial_{\tau}\phi$. We have thus proven the following theorem.

\begin{theorem}\label{thm14}
If $\tau_{0}$ is sufficiently small and $\Lambda>8$, then the solution of \eqref{1.2}, \eqref{1.3} constructed in Theorem \ref{existence1} satisfies
\begin{equation}\label{244}
|\nabla^{2}\partial_{\tau}\phi_{t}|_{S_{\tau_{0}}}+|\nabla^{2}\phi_{t}|_{S_{\tau_{0}}}
+|\nabla\phi_{t}|_{S_{\tau_{0}}}
+\phi_{t}|_{S_{\tau_{0}}}\leq c(\tau_{0})\lambda,
\end{equation}
where $c(\tau_{0})$ is independent of $\lambda$.
\end{theorem}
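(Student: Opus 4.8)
The plan is to prove Theorem~\ref{thm14} by bootstrapping from the interior gradient estimate of Proposition~\ref{interiorgrad}, localized to the thin collar region $\Omega_\sigma = \{\sigma\lambda\tau_0^4 < \phi < \lambda\tau_0^4\}$ where $f$ has the explicit linear form $f = \lambda^2[1 - 2\tau_0^{-1}(\tau-\tau_0)]$. First, I would observe that since $\partial_\tau\phi|_{S_{\tau_0}} < 0$ by the Hopf lemma, the level sets of $\phi$ foliate a one-sided neighborhood of $S_{\tau_0}$, so $\Omega_\sigma$ is a genuine collar for $\sigma$ small enough that $\Omega_\sigma \subset D(\tau_0, \frac54\tau_0)$. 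On $\Omega_\sigma$ the maximum principle argument behind Proposition~\ref{interiorgrad} can be rerun with the scale-specific input $c^{-1}\tau_0 \le |\overline{v}| \le c\tau_0$ (mean value theorem), $s=1$, $\varepsilon = 0$, and $f \le \lambda^2$, $|\nabla f| \le c\lambda^2\tau_0^{-1}$. Tracking the powers of $\tau_0$ and $\phi \sim \lambda\tau_0^4$ carefully in the analogue of inequality~\eqref{17}, one finds that an interior maximum of $|\nabla\phi|$ forces $|\nabla\phi| \le c\lambda\tau_0$, hence $\sup_{\Omega_\sigma}|\nabla\phi| \le c\lambda\tau_0 + \sup_{\partial\Omega_\sigma}|\nabla\phi|$.

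The next step is to control the boundary term $\sup_{\partial\Omega_\sigma}|\nabla\phi|$. The boundary $\partial\Omega_\sigma$ has two pieces: the surface $S_{\tau_0}$ itself, where Lemma~\ref{lemmagrad} already gives $|\nabla\phi|_{S_{\tau_0}} \le C\lambda\tau_0$; and the inner level set $\{\phi = \sigma\lambda\tau_0^4\}$, which lies inside the region where $f$ retains its explicit linear expression, so the barrier construction of Lemma~\ref{lemmagrad} applies verbatim there too (the only feature used is the sign and size of $\overline{v}_1$, $\overline{v}_0$ and the form of $f$, all available on this component). Combining these gives the uniform first-order bound $\sup_{\Omega_\sigma}|\nabla\phi| \le c\lambda\tau_0$, with $c$ independent of $\lambda$. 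From this I would deduce, by integrating $\partial_\tau\phi$ along geodesics normal to $S_{\tau_0}$, that $\phi(\overline{x},\tau) \ge \lambda\tau_0^4 - c\lambda\tau_0(\tau-\tau_0)$, so for $\overline{\tau}_0 = \tau_0 + \frac{1}{2c}\tau_0^3$ the surface $S_{\overline{\tau}_0}$ lies inside $\Omega_\sigma$ (taking $\sigma < 1/2$). This geometric fact secures a volume lower bound $\vol(\Omega_\sigma) \ge C_0 > 0$ independent of $\lambda$, which is what makes the constants in the forthcoming $L^p$, Schauder, and Sobolev estimates $\lambda$-independent.

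The remaining step is the elliptic bootstrap on $\Omega_\sigma$. Rewriting~\eqref{1.2} so that the right-hand side is $\Lambda\phi^{-1}|f - |\nabla\phi|^2/\overline{v}^2|$, the $L^p$ estimates together with $\sigma\lambda\tau_0^4 \le \phi \le \lambda\tau_0^4$, $f \le \lambda^2$, and the gradient bound $|\nabla\phi| \le c\lambda\tau_0$ yield $\|\phi\|_{W^{2,p}(\Omega_\sigma)} \le c(\tau_0)\lambda$, hence $|\phi|_{C^{1,\alpha}(\Omega_\sigma)} \le c(\tau_0)\lambda$ by Sobolev embedding. Feeding this into the Schauder estimates and using that the right-hand side of~\eqref{1.2} is now $C^{0,\alpha}$ with the correct $\lambda$-scaling produces $|\phi|_{C^{2,\alpha}(\Omega_\sigma)} \le c(\tau_0)\lambda$, which gives the bounds on $\phi$, $|\nabla\phi|$, $|\nabla^2\phi|$ at $S_{\tau_0}$. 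For the term $|\nabla^2\partial_\tau\phi|_{S_{\tau_0}}$, the point is that although $C^{2,\alpha}$ is generically the ceiling for~\eqref{1.2} because of the absolute value, the one-sided derivative $\partial_\tau$ of $|f - |\nabla\phi|^2/\overline{v}^2|$ at $S_{\tau_0}$ is Lipschitz (one-sided derivatives of the absolute value of a smooth function are Lipschitz), so differentiating the equation in $\tau$ gives an elliptic equation for $\partial_\tau\phi$ with Lipschitz right-hand side near $S_{\tau_0}$; Schauder theory then yields the $C^{2,\alpha}$ bound for $\partial_\tau\phi$ with the same scaling. I expect the main obstacle to be the careful verification that the lower barrier construction of Lemma~\ref{lemmagrad} genuinely transplants to the inner level-set component of $\partial\Omega_\sigma$—one must confirm that this component admits a neighborhood on which $f$ has the required linear form and that it is noncharacteristic for the relevant Eikonal problem—and the related bookkeeping of $\tau_0$-powers in the maximum-principle step, where the scale separation between $\phi \sim \lambda\tau_0^4$, $|\overline{v}| \sim \tau_0$, and $f \sim \lambda^2$ must be tracked precisely to close the contradiction.
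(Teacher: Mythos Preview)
Your proposal is correct and follows the paper's own argument essentially step for step: the localization to the collar $\Omega_\sigma$, the rerun of the interior gradient maximum-principle with the scale inputs $|\overline{v}|\sim\tau_0$, $\phi\sim\lambda\tau_0^4$, $f\le\lambda^2$, the boundary gradient control on both components of $\partial\Omega_\sigma$ via Lemma~\ref{lemmagrad}, the integration giving $S_{\overline{\tau}_0}\subset\Omega_\sigma$ and hence a $\lambda$-independent volume bound, the $L^p\to C^{1,\alpha}\to C^{2,\alpha}$ bootstrap, and finally the one-sided $\partial_\tau$-differentiation to handle the absolute value---all of this matches the paper's proof in Appendix~\ref{sec8}. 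Your flagged obstacle (that the Eikonal barrier construction transplants to the inner level-set component) is precisely the one point the paper treats tersely, and your reasoning for why it works is the right one.
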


\section{The Reissner-Nordstr\"{o}m Flow}
\label{sec9} 

In this section we construct the charged conformal flow in the canonical slice of the Reissner-Nordstr\"{o}m spacetime. Consider the exterior Reissner-Nordstr\"{o}m metric with mass $m$ and squared total charge $q^2=q_{e}^{2}+q_{b}^{2}$, in non-isotropic coordinates,
\begin{equation}\label{5.1}
-\left(1-\frac{2m}{\overline{r}}+\frac{q^{2}}{\overline{r}^{2}}\right)d\overline{t}^{2}
+\left(1-\frac{2m}{\overline{r}}+\frac{q^{2}}{\overline{r}^{2}}\right)^{-1}d\overline{r}^{2}
+\overline{r}^{2}d\sigma^{2},\text{ }\text{ }\text{ }\text{ }\text{ }\text{ }
\overline{r}\geq m+\sqrt{m^{2}-q^{2}},
\end{equation}
where $d\sigma^{2}$ is the round metric on the 2-sphere. The electric and magnetic fields are given by
\begin{equation}\label{5.1'}
E_{i}=-\left(1-\frac{2m}{\overline{r}}+\frac{q^{2}}{\overline{r}^{2}}\right)^{-1}
\partial_{i}\left(\frac{q_{e}}{\overline{r}}\right),\text{ }\text{ }\text{ }\text{ }\text{ }
B_{i}=-\left(1-\frac{2m}{\overline{r}}+\frac{q^{2}}{\overline{r}^{2}}\right)^{-1}
\partial_{i}\left(\frac{q_{b}}{\overline{r}}\right).
\end{equation}
Change coordinates by
\begin{equation}\label{5.2}
\overline{r}=r+m+\frac{m^{2}-q^{2}}{4r}
\end{equation}
to obtain the expression of the metric in isotropic coordinates
\begin{equation}\label{5.3}
-\left(\frac{1-\frac{m^{2}-q^{2}}{4r^{2}}}{1+\frac{m}{r}+\frac{m^{2}-q^{2}}{4r^{2}}}\right)^{2}d\overline{t}^{2}
+\left(1+\frac{m}{r}+\frac{m^{2}-q^{2}}{4r^{2}}\right)^{2}\delta,\text{ }\text{ }\text{ }\text{ }\text{ }\text{ }
r\geq\frac{\sqrt{m^{2}-q^{2}}}{2},
\end{equation}
with corresponding electric and magnetic fields fields
\begin{equation}\label{5.4}
E_{i}=-\left(1+\frac{m}{r}+\frac{m^{2}-q^{2}}{4r^{2}}\right)^{-1}\partial_{i}\left(\frac{q_{e}}{r}\right),\text{ }\text{ }\text{ }\text{ }\text{ }\text{ }
B_{i}=-\left(1+\frac{m}{r}+\frac{m^{2}-q^{2}}{4r^{2}}\right)^{-1}\partial_{i}\left(\frac{q_{b}}{r}\right),
\end{equation}
where $\delta$ is the Euclidean metric.

We may write the Reissner-Nordstr\"{o}m spacetime metric as $-V^{2}d\overline{t}^{2}+g$. In isotropic coordinates $g=U^{4}\delta$, where
\begin{equation}\label{5.5}
U(x)=\sqrt{1+\frac{m}{r}+\frac{m^{2}-q^{2}}{4r^{2}}}.
\end{equation}
The electric field may now be expressed as
\begin{equation}\label{5.6}
E_{i}=-V^{-1}\partial_{i}\left(\frac{q_{e}}{\rho}\right)\text{ }\text{ }\text{ in non-isotropic coordinates, }\text{ }\text{ }
E_{i}=-U^{-2}\partial_{i}\left(\frac{q_{e}}{r}\right)\text{ }\text{ }\text{ in isotropic coordinates}.
\end{equation}
Notice that this makes sense from previous formulas, since (in isotropic coordinates) we know that $E^{i}=U^{-6}E_{\delta}^{i}$ is divergence free whenever $E_{\delta}$ is divergence free with respect to $\delta$; this is
of course the case, as $E_{\delta}=q_{e}\nabla r^{-1}$. We can also check that the electric fields agree in the two different coordinates:
\begin{align}\label{5.7.0}
\begin{split}
E_{\overline{r}}d\overline{r} &= \frac{1}{V}\frac{q_{e}}{\overline{r}^{2}}d\overline{r}
=\frac{1}{V}\frac{q_{e}}{\overline{r}^{2}}\frac{d\overline{r}}{dr}dr
=\frac{1}{V}\frac{q_{e}}{\overline{r}^{2}}\left(1-\frac{m^{2}-q^{2}}{4r^{2}}\right)dr\\
&=\frac{1}{V}\frac{q_{e}}{r^{2}U^{4}}\left(1-\frac{m^{2}-q^{2}}{4r^{2}}\right)dr
=-U^{-2}\partial_{r}\left(\frac{q_{e}}{r}\right)dr=E_{r}dr.
\end{split}
\end{align}
Similar considerations hold for the magnetic field.

The conformal flow $g_{t}=u_{t}^{4}g$ is given by rescaling coordinates by $x\mapsto e^{-2t}x$, thus
\begin{equation}\label{5.7}
u_{t}(x)=\frac{\sqrt{e^{-2t}+\frac{m}{r}+\frac{e^{2t}(m^{2}-q^{2})}{4r^{2}}}}{\sqrt{1+\frac{m}{r}+\frac{m^{2}-q^{2}}{4r^{2}}}}.
\end{equation}
In order to calculate the flow velocity $v_{t}$, observe that
\begin{equation}\label{5.8}
v_{t}=\frac{d}{dt}\log u_{t}=\frac{-e^{-2t}+e^{2t}\frac{m^{2}-q^{2}}{4r^{2}}}{e^{-2t}+\frac{m}{r}+e^{2t}\frac{m^{2}-q^{2}}{4r^{2}}}.
\end{equation}
As usual $v_{t}=0$ on the minimal surface $\partial M_{t}=\{r=\frac{\sqrt{m^{2}-q^{2}}}{2}e^{2t}\}$, and $v_{t}\rightarrow -1$ as $r\rightarrow\infty$. Set $U_{t}=u_{t}U$, then $g_{t}=U_{t}^{4}\delta$.
In order to calculate the equation satisfied by $v_{t}$, recall the identities
\begin{equation}\label{5.9}
L_{g_{t}}v_{t}=U_{t}^{-5}L_{\delta}(U_{t}v_{t}),\text{ }\text{ }\text{ }\text{ }\text{ }\text{ }R_{g_{t}}=-8U_{t}^{-5}L_{\delta}U_{t}=-8U_{t}^{-5}\Delta_{\delta}U_{t}.
\end{equation}
It follows that
\begin{align}\label{5.10}
\begin{split}
\Delta_{g_{t}}v_{t}&=\frac{1}{8}R_{g_{t}}v_{t}+U_{t}^{-5}\Delta_{\delta}(U_{t}v_{t})\\
&=-U_{t}^{-5}v_{t}\Delta_{\delta}U_{t}+U_{t}^{-5}(U_{t}\Delta_{\delta}v_{t}+2\nabla U_{t}\cdot\nabla v_{t}+v_{t}\Delta_{\delta}U_{t})\\
&=U_{t}^{-4}(\Delta_{\delta}v_{t}+2\nabla\log U_{t}\cdot\nabla v_{t}).
\end{split}
\end{align}
A computation shows that
\begin{align}\label{5.11}
\begin{split}
\Delta_{\delta}v_{t}&=\partial_{r}^{2}v_{t}+\frac{2}{r}\partial_{r}v_{t}\\
&=\frac{\frac{m^{2}-q^{2}}{r^{4}}+e^{2t}\frac{m(m^{2}-q^{2})}{2r^{5}}}{\left(e^{-2t}+\frac{m}{r}+e^{2t}\frac{m^{2}-q^{2}}{4r^{2}}\right)^{2}}
-\frac{2\left(e^{-2t}\frac{m}{r^{2}}+\frac{m^{2}-q^{2}}{r^{3}}+e^{2t}\frac{m(m^{2}-q^{2})}{4r^{4}}\right)\left(\frac{m}{r^{2}}+e^{2t}\frac{m^{2}-q^{2}}{2r^{3}}\right)}
{\left(e^{-2t}+\frac{m}{r}+e^{2t}\frac{m^{2}-q^{2}}{4r^{2}}\right)^{3}},
\end{split}
\end{align}
and
\begin{equation}\label{5.12}
\partial_{r}v_{t}=-\frac{e^{-2t}\frac{m}{r^{2}}+\frac{m^{2}-q^{2}}{r^{3}}+e^{2t}\frac{m(m^{2}-q^{2})}{4r^{2}}}{\left(e^{-2t}+\frac{m}{r}+e^{2t}\frac{m^{2}-q^{2}}{4r^{2}}\right)^{2}},
\end{equation}
\begin{equation}\label{5.13}
\partial_{r}\log U_{t}=-\frac{\frac{m}{r^{2}}+e^{2t}\frac{m^{2}-q^{2}}{2r^{3}}}{2\left(e^{-2t}+\frac{m}{r}+e^{2t}\frac{m^{2}-q^{2}}{4r^{2}}\right)}.
\end{equation}
Therefore
\begin{equation}\label{5.14}
\Delta_{g_{t}}v_{t}=\frac{q^{2}r^{-4}}{\left(e^{-2t}+\frac{m}{r}+e^{2t}\frac{m^{2}-q^{2}}{4r^{2}}\right)^{4}}v_{t}.
\end{equation}
However since $E_{t}^{i}=U_{t}^{-6}E_{\delta}^{i}$ we have
\begin{equation}\label{5.15}
|E_{t}|_{g_{t}}^{2}=U_{t}^{-8}|E_{\delta}|_{\delta}^{2}
=\frac{q_{e}^{2}r^{-4}}{\left(e^{-2t}+\frac{m}{r}+e^{2t}\frac{m^{2}-q^{2}}{4r^{2}}\right)^{4}},
\end{equation}
with an identical formula for the squared norm of the magnetic field when $q_{e}$ is replaced by $q_{b}$.
It is also true that $R_{g_{t}}=2\left(|E_{t}|_{g_{t}}^{2}+|B_{t}|_{g_{t}}^{2}\right)$, which may be verified from the identity
\begin{equation}\label{5.16}
R_{g_{t}}=-8U_{t}^{-5}L_{\delta}U_{t}=-8U_{t}^{-5}\Delta_{\delta}U_{t}.
\end{equation}
Thus, the equation satisfied by the velocity function is
\begin{equation}\label{5.17}
\Delta_{g_{t}}v_{t}-\left(|E_{t}|_{g_{t}}^{2}+|B_{t}|_{g_{t}}^{2}\right)v_{t}=0\text{ }\text{ }\text{ }\text{ or }\text{ }\text{ }\text{ }\Delta_{g_{t}}v_{t}-\frac{1}{2}R_{g_{t}}v_{t}=0.
\end{equation}

We remark that this equation for $v_{t}$ is precisely the one satisfied by the warping factor of the Reissner-Nordstr\"{o}m spacetime. In fact the Reissner-Nordstr\"{o}m spacetime
metric may be written as $-v_{t}^{2}d\overline{t}^{2}+u_{t}^{4}g$.  This structure is also valid for the Schwarzschild spacetime
if $u_{t}$ and $v_{t}$ arise from Bray's original conformal flow. Given an arbitrary metric $g$, this shows how to associate a static spacetime with $(M,g)$, namely use the conformal factor
and velocity functions from the conformal flow, to generate a static spacetime as above.


\begin{thebibliography}{99}

\bibitem{Bray} H. Bray, \emph{Proof of the Riemannian Penrose inequality using the positive mass
theorem}, J. Differential Geom., \textbf{59} (2001), 177-267. arXiv:math/9911173

\bibitem{BrayIga} H. Bray, and K. Iga, \emph{Superharmonic functions in $\mathbb{R}^n$ and the Penrose
inequality in general relativity}, Comm. Anal. Geom., \textbf{10} (2002), no. 5, 999-1016.

\bibitem{BrayKhuri1} H. Bray, and M. Khuri, \emph{A Jang equation approach to the Penrose inequality},
Discrete Contin. Dyn. Syst., \textbf{27} (2010), no. 2, 741-766. arXiv:0910.4785

\bibitem{BrayKhuri2} H. Bray, and M. Khuri, \emph{P.D.E.'s which imply the Penrose conjecture},
Asian J. Math., \textbf{15} (2011), no. 4, 557-610. arXiv:0905.2622

\bibitem{ChruscielReallTod} P. Chru\'{s}ciel, H. Reall, and P. Tod, \emph{On Israel-Wilson-Perj\'{e}s
black holes}, Classical Quantum Gravity, \textbf{23} (2006), 2519-2540. arXiv:gr-qc/0512116


\bibitem{Corvino} J. Corvino, \emph{On the asymptotics for the Einstein-Maxwell constraint equations},
in preparation, (2015).

\bibitem{DainKhuriWeinsteinYamada} S. Dain, M. Khuri, G. Weinstein, and S. Yamada, \emph{Lower bounds
for the area of black holes in terms of mass, charge, and angular momentum},
Phys. Rev. D, \textbf{88} (2013), 024048. arXiv:1306.4739

\bibitem{Deimling} K. Deimling, \emph{Nonlinear Functional Analysis}, Springer-Verlag, Berlin, 1985.

\bibitem{DisconziKhuri} M. Disconzi, and M. Khuri, \emph{On the Penrose inequality for charged
black holes}, Classical Quantum Gravity, \textbf{29} (2012), 245019. arXiv:1207.5484


\bibitem{Geroch} R. Geroch, \emph{Energy Extraction}, Annals of the New York Academy of Sciences
\textbf{224} (1973), no. 1, 108-117.

\bibitem{Gibbons} G. Gibbons, \emph{Some comments on gravitational entropy and the inverse
mean curvature flow}, Classical Quantum Gravity \textbf{16} (1999), no. 6, 1677-1687. arXiv:hep-th/9809167

\bibitem{GibbonsHawkingHorowitzPerry} G. Gibbons, S. Hawking, G. Horowitz, and M. Perry,
\emph{Positive mass theorem for black holes}, Commun. Math. Phys., \textbf{88} (1983), 295-308.


\bibitem{GilbargTrudinger} D. Gilbarg, and N. Trudinger, \emph{Elliptic Partial Differential Equations
of Second Order}, Classics in Mathematics, Springer-Verlag, Berlin, 2001. Reprint of the 1998 edition.

\bibitem{HawkingEllis} S. Hawking, and G. Ellis, \emph{The Large Structure of Space-Time}, Cambridge
Monographs on Mathematical Physics, Cambridge University Press, 1973.



\bibitem{HuiskenIlmanen} G. Huisken, and T. Ilmanen, \emph{The inverse mean curvature flow and the
Riemannian Penrose inequality}, J. Differential Geom., \textbf{59} (2001), 353-437.


\bibitem{Jang} P.-S. Jang, \emph{Note on cosmic censorship}, Phys. Rev. D, \textbf{20} (1979), no. 4,
834-838.

\bibitem{JangWald} P.-S. Jang, and R. Wald, \emph{The positive energy conjecture and the cosmic censor
hypothesis}, J. Math. Phys., \textbf{18} (1977), no. 1, 41-44.



\bibitem{Khuri} M. Khuri, \emph{A Penrose-like inequality with charge}, Gen. Relativity Gravitation,
\textbf{45} (2013), 2341-2361. arXiv:1308.3591

\bibitem{KhuriWeinstein} M. Khuri, and G. Weinstein, \textit{Rigidity in the positive mass theorem
with charge}, J. Math. Phys., \textbf{54} (2013), 092501. arXiv:1307.5499

\bibitem{KhuriWeinsteinYamada} M. Khuri, G. Weinstein, and S. Yamada, \emph{On the Riemannian Penrose
inequality with charge and the cosmic censorhip conjecture}, Res. Inst. Math. Sci. Kokyuroku, Kyoto Univ., \#1862 (2012), 63-66. arXiv:1306.0206

\bibitem{KhuriWeinsteinYamada1} M. Khuri, G. Weinstein, and S. Yamada, \emph{The
Riemannian Penrose inequality with charge for multiple black holes},
Proceedings of the Complex Analysis \& Dynamical Systems VI Conference (Nahariya, Israel, May 2013), Contemporary Mathematics, \textbf{653} (2015), 219-226. arXiv:1308.3771

\bibitem{KhuriWeinsteinYamada2} M. Khuri, G. Weinstein, and S. Yamada,
\emph{Extensions of the charged Riemannian Penrose inequality}, Class. Quantum Grav., \textbf{32} (2015), 035019. arXiv:1410.5027

\bibitem{Masood} A. Masood-ul-Alam, \emph{Uniqueness proof of static charged black holes revisited},
Classical Quantum Gravity \textbf{9} (1992), no. 5, L53-L55.

\bibitem{MeeksSimonYau} W. Meeks III, L. Simon, and S.-T Yau, \emph{Embedded minimal surfaces, exotic
spheres, and manifolds with positive Ricci curvature}, Ann. of Math. (2), \textbf{116} (1982), no. 3, 621-659.

\bibitem{Miao} P. Miao, \emph{Positive mass theorem on manifolds admitting corners along a hypersurface},
Adv. Theor. Math. Phys., \textbf{6} (2002), 1163-1182. arXiv:math-ph/0212025

\bibitem{Penrose} R. Penrose, \emph{Naked singularities}, Ann. New York Acad. Sci., \textbf{224}
(1973), 125-134.

\bibitem{Penrose1} R. Penrose, \emph{Some unsolved problems in classical general relativity}, Seminar
on Differential Geometry, Ann. Math. Study, \textbf{102} (1982), 631-668.

\bibitem{SchoenYau} R. Schoen, and S.-T. Yau, \emph{On the proof of the positive mass
conjecture in General Relativity}, Comm. Math. Phys., \textbf{65} (1979), no. 1, 45-76.


\bibitem{ShiTam} Y. Shi, and L.-F. Tam, \emph{Positive mass theorem and the boundary behaviors
of compact manifolds with nonnegative scalar curvature}, J. Differential Geometry, \textbf{62} (2002), 79-125. arXiv:math/0301047


\bibitem{WeinsteinYamada} G. Weinstein, and S. Yamada, \textit{On a Penrose inequality with charge},
Comm. Math. Phys., \textbf{257} (2005), no. 3, 703-723. arXiv:math/0405602

\bibitem{Witten} E. Witten, \emph{A new proof of the positive energy theorem}, Comm. Math.
Phys., \textbf{80} (1981) 381-402.


\end{thebibliography}
\end{document}